\newtheorem{assumption}{Assumption}
\newtheorem{proposition}{Proposition}
\newtheorem{remark}{Remark}
  \pgfplotsset{compat=newest}
\begin{document}
\begin{frontmatter}
\title{{\bf Joint Infrastructure Planning and Order Assignment for On-Demand Food-Delivery Services with Coordinated Drones and Human Couriers}}
\author[hkust_address]{Yang Liu}
\ead{yliuiw@connect.ust.hk}
\author[hkust_address]{Yitong Shang}
\ead{ytshang@ust.hk}
\author[hkust_address]{Sen Li}
\ead{cesli@ust.hk}

\address[hkust_address]{Department of Civil and Environmental Engineering, The Hong Kong University of Science and Technology, \\Clear Water Bay, Kowloon, Hong Kong}
\begin{abstract}
This paper investigates the optimal infrastructure planning and order assignment problem of an on-demand food-delivery platform with a mixed fleet of drones and human couriers. The platform serves the orders with two delivery modes: (a) ground delivery and (b) drone-assisted delivery (i.e., air delivery). In ground delivery, an order is collected at the restaurant and transported to the destination by a human courier. For air delivery, the delivery process is segmented into three legs: initially, a human courier picks up the order from the restaurant and transports it to a nearby launchpad. These launchpads are staffed by personnel who load the orders onto drones and replace batteries as needed. The loaded drone then transports the order from the launchpad to a kiosk, which is an automated facility designed to accommodate drone landings and securely store orders. Subsequently, another courier retrieves the order from the kiosk for delivery to the final destination. The platform must determine the optimal locations for launchpads and kiosks within a transportation network, and devise an order assignment strategy that allocates food-delivery orders between ground and air delivery, while considering the bundling probabilities of ground deliveries and the waiting times at launchpads and kiosks for air deliveries. We formulate the platform's problem as a mixed-integer nonlinear program and develop a novel neural network-assisted optimization method to obtain high-quality solutions. The proposed model and algorithm are validated through a case study in Hong Kong, and the results reveal that the introduction of drone delivery will lead to reduced operational costs for the platform, a smaller courier fleet size, and increased opportunities for order bundling. Interestingly, we also find that the expansion of air delivery services may actually entail larger delivery times despite the significantly higher speeds of air compared to ground delivery. We attribute this phenomenon to the crucial trade-off between the travel time savings induced by the faster air delivery and the associated detours incurred by intermodal transfer and extra waiting times at launchpads and kiosks, which crucially depends on the distance of the orders and the sequence of activating long-distance air delivery routes versus short-distance ones. 
\end{abstract}
\begin{keyword}
Food-delivery platforms, drone delivery services, infrastructure planning, equilibrium model
\end{keyword}
\end{frontmatter}

\section{Introduction}

On-demand food-delivery platforms such as UberEats, Deliveroo, DoorDash, and Meituan have become increasingly popular across various countries, revolutionizing the way we consume food. These platforms connect customers to a wide array of restaurants and couriers, enabling them to order their favorite meals online and have them delivered directly to their doorsteps. The COVID-19 pandemic further accelerated the growth of these platforms as customers became more accustomed to ordering meals online. In response to these changing consumer habits, restaurants have significantly adapted their business models. For instance, it is reported that in Hong Kong, restaurants experienced a 37\% increase in takeaway orders facilitated by food-delivery platforms in the first quarter of 2024. Moreover, 34\% of these restaurants indicated that delivery platforms contribute to at least 30\% of their revenue, underscoring the substantial impact these services have on the food industry.\footnote{Refer to https://hongkongbusiness.hk/food-beverage/news/hong-kong-restaurants-saw-37-uptick-in-takeaways-supported-food-delivery-aggregators.} 

Despite its great promise, the rapid growth of on-demand food-delivery platforms also presents significant societal challenges. Specifically, during peak hours, customers often experience delivery delays due to the high volume of orders, insufficient courier supply, and road congestion. To manage this surge in demand with a limited fleet size without compromising service quality, platforms frequently assign orders to couriers with very tight delivery windows, which can compromise courier safety. Consequently, couriers face immense pressure to race against the clock to avoid penalties for late deliveries, a situation that raises substantial social issues related to labor rights and road safety. This high-pressure environment amplifies the risk of traffic accidents, particularly during congested periods and adverse weather conditions.  In China, for instance, food delivery platforms have faced criticism for their unsympathetic   algorithms that impose strict delivery deadlines, placing excessive pressure on drivers and potentially increasing safety risks \cite{lai2020stuck}. The Shanghai Municipal People's Government reported that in the first half of 2019, there were 325 road traffic accidents involving the express delivery and food-delivery industries in the city, which resulted in 5 deaths and 324 injuries \cite{shanghai2020}.
These incidents illustrate the considerable pressure on stakeholders and highlight the urgent need to find lasting solutions that foster a healthy relationship between the platforms and gig workers, which are  crucial for the sustainable growth of the food-delivery market.

With the advancement of drone technology and the ongoing development of the low-altitude economy, drone delivery has emerged as a promising solution to the challenges faced by the current on-demand food-delivery market. Compared to human couriers, drones offer faster delivery speeds, lower operational costs, independence from traffic conditions, reduced environmental impacts, and increased accessibility, making them a viable alternative for delivering meals on demand. Many food-delivery companies worldwide have already endeavored to implement drone delivery services. For instance, Manna, an Irish drone delivery company, has launched takeaway deliveries in Dublin and plans to expand to 25 more locations \cite{curran2023}. Drone delivery company Wing launched drone-assisted food-delivery services in Melbourne in 2024, empowering 250,000 residents to order food through DoorDash with deliveries executed by small aircraft \cite{thorn2024wing}. Flytrex, a drone-based food delivery service operating in North Carolina and Texas, has announced that it successfully fulfilled 100,000 food delivery orders in 2024, reaching 70\% of households within its service areas \cite{flytrex2024}.
However, these initiatives primarily target rural and less populated urban areas, where suitable pick-up and drop-off locations can be easily found, allowing drones to complete the entire delivery process autonomously. Adapting this model to densely populated urban regions, where residents predominantly reside in apartment buildings, presents notable challenges as it is particularly difficult for drones to access the pickup points in restaurants located in large shopping malls and drop-off points in high-rise commercial or residential buildings. Yet, introducing drone-assisted food delivery services in these dense areas could be precisely where drones contribute the most. These locations typically experience higher order volumes, more severe traffic congestion, and less patient customers—all challenges that are difficult for human couriers to manage but can be effectively addressed by drone-assisted delivery. As such, drones could significantly enhance delivery efficiency and customer satisfaction in high-demand urban settings.

In contrast to the aforementioned companies, Meituan, China's largest on-demand delivery platform, has introduced a novel collaborative delivery model, utilizing a mixed fleet of drones and human couriers within a multimodal delivery network \cite{meituan2023MIT}. This model allows for two types of delivery: (a) ground delivery and (b) drone-assisted delivery, hereafter referred to as air delivery.
In ground delivery, an order is collected at the restaurant and transported to its destination using various ground transportation modes such as cars, motorcycles, or bicycles, all managed by a human courier. For air delivery, the process is segmented: initially, a human courier picks up the order from the restaurant (potentially bundled with other orders) and transports it to a nearby launchpad. These launchpads are staffed by personnel who load the orders onto drones and replace batteries as needed. The loaded drone then transports the order from the launchpad to a kiosk (without bundling), which is an automated facility designed to accommodate drone landings and securely store orders. Subsequently, another courier retrieves the order from the kiosk for delivery to the final destination. This innovative use of a mixed fleet effectively addresses the challenges associated with the first and last legs of delivery by integrating drones with human couriers, thus facilitating food delivery from door to door in densely populated areas. With over 30 drone routes launched, Meituan has successfully delivered more than 300,000 orders so far in Shenzhen \cite{meituan2024}, and plans are underway to expand into the low-altitude economy in Hong Kong \cite{meituan2024hk}, showcasing a significant step forward in the deployment of drones in on-demand food-delivery services in densely populated urban environments.

The emergence of novel business models involving coordinated drones and human couriers for on-demand food delivery introduces unique strategic and operational challenges that are relatively underexplored in existing literature. To bridge this gap, our study investigates a platform similar to Meituan's in Shenzhen, China, which constructs infrastructure such as launchpads and kiosks and integrates drone and human courier services for on-demand food delivery in densely populated areas. We propose a mathematical model to characterize the decision-making problem of the food-delivery platform as a Mixed-Integer Nonlinear Program (MINLP). Additionally, we derive a novel neural network-assisted method to obtain high-quality solutions to this complex problem. The major contributions of this paper can be summarized as follows:
\begin{itemize}
    \item We develop a mathematical framework that characterizes the joint infrastructure planning and order assignment strategies for the on-demand food-delivery platform using a mixed fleet of drones and human drivers. This framework addresses the challenges faced by the platform in determining the optimal locations for launchpads and kiosks within a transportation network under a constrained budget, and in devising an order assignment strategy that allocates food-delivery orders demand between ground and air delivery. These decisions hinge on the bundling probabilities of ground deliveries and the waiting times at launchpads and kiosks for air deliveries. To capture the former, we developed a steady-state equilibrium model that prescribes the matching process between couriers and orders. To characterize the latter, we formulated a double-ended queue model for the interactions at launchpads between orders and drones. The overall profit maximization problem is formulated as a Mixed-Integer Nonlinear Programming (MINLP) problem. {\em To the best of our knowledge, this is the first investigation into the joint infrastructure planning and order assignment for food-delivery platforms with drone-assisted delivery services.}
  %  investigate the infrastructure planning and delivery demand assignment strategies of an on-demand food-delivery platform that offers both ground delivery services and drone-assisted air delivery services over a transportation network. A steady-state equilibrium model is developed to capture the intricate interactions among distinct stakeholders. In particular, we employ a double-ended queue model to characterize the matching between orders and drones at the launchpads, while a state transition model is developed to predict the bundling probabilities of ground demand flow. Building on the launchpad queuing model and bundling probability prediction model, we estimate the average delivery time for each delivery mode and shared delivery time caused by order bundling. The decisions of the platform are framed as a Mixed-Integer Nonlinear Programming (MINLP) problem to minimize operational costs and overall delivery time subject to the market equilibrium. 
    \item We have developed a novel neural network-assisted method to obtain high-quality solutions to the complex MINLP for the food-delivery platform. Specifically, we first isolate the nonlinear components of the objective functions, which depends on the decision variables. The evaluation of these components requires solving a fixed point involving numerous nonlinear constraints. We approximate this function using a neural network configured with two-dimensional inputs and outputs. This neural network is then trained by sampling within the input space, solving the fixed point for each input, and subsequently obtaining labeled data for the outputs. Once trained, the neural network is integrated into the platform's optimization problem as constraints, which effectively transforms the MINLP into a Mixed-Integer Linear Programming (MILP) model. This transformation allows for the application of standard off-the-shelf algorithms, which facilitates efficient and tractable solution finding. The effectiveness of the proposed algorithm has been validated and demonstrates excellent performance: the neural network successfully approximates the nonlinear component with sufficient accuracy, and the globally optimal solution of the approximated problem can be easily derived. 

    \item We conducted a case study in Hong Kong based on real data to investigate how reductions in infrastructure costs and increases in food-delivery demand influence the planning and operational decisions of the platform. Specifically, our investigation reveals that as the infrastructure costs of launchpads and kiosks decrease, the food-delivery platform opts to expand air delivery services across the transportation network, initially activating air delivery routes for long distances and subsequently expanding them to shorter routes. This expansion leads to reduced operational costs for the platform, a smaller courier fleet size, and increased opportunities for order bundling. However, the expansion of air delivery services does not necessarily entail reduced delivery times for customers, despite the significantly higher speeds of air compared to ground delivery. This highlights a crucial trade-off between the travel time savings induced by the faster air delivery and the associated detours incurred by intermodal transfer and extra waiting times at launchpads and kiosks. We find that this trade-off appears to be heavily dependent on the distance of the order. For long-distance trips, the detours are minor compared to the substantial time savings from faster drone delivery. In contrast, for short-distance trips, the detours and additional waiting times can significantly extend the delivery time compared to ground delivery (which is already short due to the short distance), outweighing the time saved by the drones. Consequently, as the platform initially activates air delivery routes for long distances and subsequently expands them to shorter routes, the average delivery time first decreases (due to the opening of long-distance routes) and then increases (due to the opening of short-distance routes).
\end{itemize}

\section{Literature review}
This section reviews the existing literature related to our research focus.  To streamline our discussion, we classify the relevant literature into two distinct categories: (1) research centered on food-delivery services and (2) studies focusing on drone delivery services.  A comparison between our study and the existing literature is discussed, and the contribution of this work is highlighted. 

\subsection{Food-Delivery Services}
As food-delivery services have attracted extensive attention in recent years, there is a growing body of literature focusing on the operational strategies of the food-delivery services. One of the most popular topics is the meal delivery and routing problem introduced by \cite{reyes2018meal}, in which the authors formulate the meal delivery routing problem as an expanded form of Pickup and Delivery Problem (PDP), with an optimization framework focusing on the courier assignment problem.  Building upon this foundation, \cite{yildiz2019provably} have advanced an exact solution methodology for the meal delivery routing problem posited by \cite{reyes2018meal}, resulting in the generation of higher-quality solutions. \cite{steever2019dynamic} introduced a novel food delivery model where a customer can order from multiple restaurants in a single order, indicating that a split delivery policy is able to save delivery time, while the non-split delivery mode can reduce operation costs. Imbalanced distribution of food-delivery couriers across space and time due to varying demand is considered by \cite{xue2021optimization}, in which the authors proposed a two-stage programming model to optimize the allocation of rider resources in both spatial and temporal dimensions. While the literature mentioned above primarily focuses on a deterministic model assuming perfect demand information, the scope of the meal delivery and routing problem has also been extended to a stochastic setting, incorporating uncertainties related to order placement times and meal preparation durations \cite{ulmer2021restaurant,kancharla2024meal}.
A review of the existing literature and potential research questions on the operations of on-demand food-delivery services are summarized in \cite{mao2022demand}.

Another stream of research focuses on the planning strategies of the on-demand food-delivery platform \cite{bahrami2023three,chen2024courier}. With a focus on the management of courier supply,  \cite{yildiz2019service} developed a stylized equilibrium model to investigate the impacts of service coverage area and courier's delivery acceptance behaviors on the food-delivery platform profits, providing insights on the management of crowd-sourced delivery. A queuing-model-based framework was proposed by \cite{weng2021labor} to explore the courier assignment policy aiming at labor protection. Aside from traditional human-courier-based delivery services, the research on optimal strategic planning problems for food-delivery platforms is also extended to scenarios with the introduction of autonomous vehicles \cite{ye2022modeling}. On the customer demand side, \cite{du2023implications} investigated the impacts of compensation paid to customers by the platform or the restaurant for delivery delay, which is shown to be related to the level of service fee proportion. 
There is also a growing body of research focus on the management strategies on restaurants. For instance, \cite{feldman2023managing} developed a stylized model of a restaurant to explore the optimal platform-restaurant relationships for joint profit-maximization, indicating that an alternative contract where the platform pays the restaurant a revenue share and a fixed fee can lead to Pareto improvement compared with the traditional contract. Expanding on this, \cite{chen2022food} proposed a similar problem with two streams of customers (one has access to food-delivery services and one does not) using an unobservable queue model. Research by \cite{liu2023economic} explored the effects of capping commissions that online food-delivery platforms charge restaurant owners, demonstrating that an inappropriate commission cap could have negative consequences for all stakeholders within the food-delivery market. However, a common limitation of the aforementioned literature regarding planning strategies for the food-delivery platform is the focus on an aggregated level that overlooks the spatial dynamics.

While order bundling plays a crucial role in food-delivery services, the current body of research in this area remains limited. However, it is similar to the ride-pooling services, which have attracted extensive attention from the existing literature \cite{ke2020ride,zhang2022mitigating,zhang2021pool}. At the aggregate level without considering spatial heterogeneity, 
\cite{ke2020pricing} developed a mathematical model to evaluate the optimal management strategies of the ride-sourcing platform considering two markets: with and without ride-pooling services. It finds that under certain condition, the optimal ride fares in ride-pooling market is lower than that in the non-pooling market. This framework is further extended to incorporate the traffic congestion externality considering exogenous passenger demand \cite{ke2020ride} and elastic passenger demand \cite{ke2022coordinating}. \cite{bahrami2022optimal} evaluated the competitive advantages of solo ride-hailing services and pool ride-hailing services based on a closed queuing system under different scenarios regarding matching friction and marginal fleet acquisition cost. The findings revealed that it is optimal for the platform to offer both solo and pool services concurrently only when the matching is decreasing returns to scale. There is also an extensive body of literature on the planning strategies of ride-pooling services on the network level. For instance, \cite{wang2021predicting} developed a steady-state mathematical modeling framework to characterize the matching mechanism of ride-pooling services over a transportation network. Leveraging this model,  the authors predict the matching probabilities and expected delivery time for orders corresponding to each OD pair given an exogenous demand arrival rate. It is further be extended by \cite{wang2024optimizing,yang2024prediction}, where \cite{wang2024optimizing} investigated the impacts of OD-based discounting strategies on the ride-pooling system, and \cite{yang2024prediction} evaluates whether the prediction of matching opportunities established in \cite{wang2024optimizing} can improve the vehicle dispatching strategies for ride-pooling platforms. \cite{liu2023planning} presented a multi-zone queuing network model to capture the optimal planning strategies of ride-pooling services considering the spatial diversity of passenger demand. It offered valuable insights into the optimal fleet management strategies within ride-pooling systems.
While the bundling of orders in food-delivery services shares similarities with ride-pooling services, a prevalent assumption in most existing research on ride-pooling is that a vehicle can match with a maximum of two passengers. To overcome this limitation, \cite{ye2024modeling,ye2024modelingspatial} investigated the optimal pricing strategies for food-delivery services that utilize order bundling with flexible bundle sizes within an aggregated market equilibrium model and a network model, respectively. They provided managerial insights into the platform’s long-term strategic decisions.

{\em All of the aforementioned literature on food-delivery services in planning level only utilize ground transportation mode such as human couriers or autonomous vehicles. In contrast, our study complements existing research by introducing a multimodal food-delivery framework that integrates both ground transportation and low-altitude transportation using a mixed fleet of human couriers and drones. This unique perspective has not been explored in the existing works referenced earlier.}

\subsection{Drone-Assisted Delivery Services}
With the emergence and advancement of drone technology, the low-altitude economy has garnered significant attention. A substantial body of literature has been dedicated to exploring routing strategies for drone-assisted delivery services \cite{dorling2016vehicle, wen2022heterogeneous, huang2022solving}. One stream of research has focused on the coordinated management of trucks and drones for last-mile delivery services. For instance, \cite{murray2015flying} introduced the flying sidekick traveling salesman problem (FSTSP) to optimize demand assignment strategies for parcel delivery services using a hybrid fleet of trucks and drones, where drones can be transported by trucks due to battery limitations. Furthermore, \cite{agatz2018optimization} proposed a novel framework for the FSTSP formulated as an integer programming model and devised an algorithm with heuristics derived from local research, resulting in enhanced solution quality compared to \cite{murray2015flying}. Similarly, \cite{karak2019hybrid,moshref2020truck} concentrated on the development of efficient solution algorithms tailored with customized heuristics to address the joint vehicle-drone routing problem for pickup and delivery services. Other research on vehicle routing problem with drones focus on network planning and facility location \cite{pinto2020network,zhu2022two,bruni2023drone}, online vehicle dispatch strategies \cite{liu2019optimization}, the integration with public transit systems \cite{choudhury2021efficient,moadab2022drone}, and the utilization of machine learning-based method \cite{munoz2019deep,chen2022deep,cui2024dynamic}. For a comprehensive review of related literature on drone-assisted vehicle routing problem, please refer to \cite{macrina2020drone}.

Aside from operational strategies of drone-assisted delivery services, another stream of research investigates these services at the strategic level, focusing specifically on planning strategies and supply-demand analysis at the economic equilibrium. For instance, \cite{shavarani2019congested} provided an economic analysis of the drone delivery system with uncertainty. An M/G/K queueing system is developed to characterize the customer waiting time, and a mathematical model is formulated to determine the optimal facility location planning strategies with the objective of cost minimization. \cite{chen2021improved} investigated the pricing and fleet management strategies of drone-based delivery services, suggesting that fleet expansion is warranted in cases of high opportunity costs and short delivery times. Similarly, \cite{pei2021managing} also delved into the pricing and fleet size management strategies of drone delivery services, where a time-varying queuing model is developed to capture the system dynamics taking customer behaviors into account. The study's results indicated that while the demand fluctuated over time, the optimal pricing remained relatively stable. It was later extended by \cite{pei2023drone}, who examined a co-sourcing scenario involving drones and human couriers coexisting in a food delivery system. They found that the probability of outsourcing (delivery by human couriers) is influenced by both drone revenue and human courier earnings.

Our work significantly differs from all previously mentioned studies in that prior research on planning strategies and economic analysis of drone delivery services has predominantly focused on either drone-only delivery systems or co-sourcing models where drones and human couriers serve as substitutes rather than collaborators. {\em In contrast, our study explores the collaboration between drones and human couriers within a multimodal delivery network. This approach is a more promising strategy for using drones in food delivery as it overcomes the limitations of drones in addressing the first and last legs of the delivery process. It is also consistent with the current industrial practice of companies like Meituan. Moreover, the coordination of drones and ground delivery for a single order's delivery results in a more intricate demand assignment and fleet management strategies, as well as posing additional challenges regarding the planning of infrastructures that accommodate the departure and landing of drones. These perspectives have not been sufficiently examined in the existing literature.}

\section{The Mathematical Model}

\begin{comment}
Optimization problem:

\begin{itemize}
    \item Determine the location of the launchpads/kiosks.

    \item Determine the prices for air/ground delivery services and the wages paid to drivers;

    \item Determine the allocated launchpad and kiosk for air-delivery services with each OD pair, and the relocation drone flows returning to each launchpad from the kiosk after finishing the delivery task;

    \item Constraints:

    (1) Demand: the allocation of demand to the ground delivery system and air delivery system (logit model)

    (2) the ground delivery time (similar to the part E paper)

    (3) Utilizing the queuing model at the launchpads (i.e., the time for Assembly and taking off) and the koisk (to drop off the meal) to characterize the air delivery time
    
    \item Objective: minimize the operating costs? or minimize the delivery time? minimize the operating costs by satisfying the delivery time constraints? {\color{blue}Profits = revenue - wages paid to couriers - costs for building launchpads/kiosks - operating costs of the drones.}
\end{itemize}
\end{comment}

We consider a transportation network represented by the graph $G(\mathcal{N}, \mathcal{A})$, where $\mathcal{N}$ denotes the set of nodes, and $\mathcal{A}$ corresponds to the set of links. A food-delivery platform offers on-demand services to customers, with each order originating from a node $i \in \mathcal{N}$ (i.e., a restaurant location) and destined for a node $j \in \mathcal{N}$. This platform utilizes a mixed fleet of drones and human couriers, coordinated to deliver meals within a multimodal delivery network. Orders can be delivered in two ways: (a) ground delivery and (b) drone-assisted delivery (hereafter referred to as air delivery). In ground delivery, an order is collected at node $i$ and delivered to node $j$ using ground transportation modes such as cars, motorcycles, or bicycles. For air delivery, the entire delivery process can be divided into several segments: initially, a human courier picks up the order from restaurant $i$ (potentially bundled with other orders) and transports it to a nearby launchpad $l$. The loaded drone then transport the order from launchpad $l$ to a kiosk $k$ (without bundling), which is 
an automated device to accommodate drone landing and order storage, while the drone will return to the launchpad immediately after dropping off the order. Subsequently, another courier picks up the order at kiosk $k$ for delivery to destination node $j$. An illustrative figure for launchpads and kiosks for Meituan is shown in Figure \ref{fig:meituan}, and the delivery processes for both ground and air delivery services are depicted in Figure \ref{fig:delivery_workflow}. 

In subsequent subsections, we will present a mathematical model that characterizes the demand allocation across different delivery modes, interactions among food-delivery orders, and the coordinated management of couriers and drones, taking into account factors such as launchpad congestion and ground delivery bundling probabilities. We will also address cost-minimization and waiting time-minimization strategies for the food-delivery platform under market equilibrium conditions. A summary of notations is presented in Appendix \ref{append:notation}.

\begin{figure}[htb!]
    \centering
    \includegraphics[width=1\linewidth]{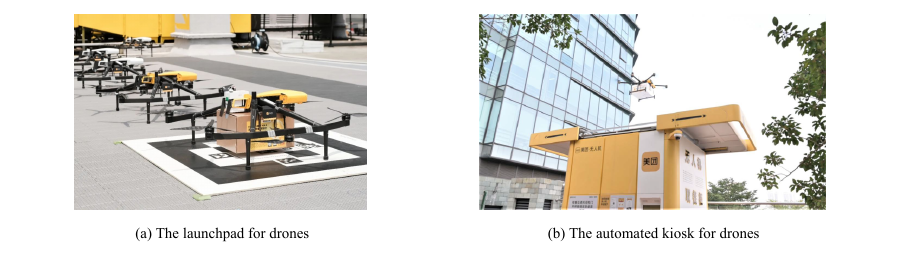}
    \caption{An illustrative figure for launchpads and kiosks for Meituan}
    \label{fig:meituan}
\end{figure}

\begin{figure}[htb!]
    \centering
    \includegraphics[width=1\linewidth]{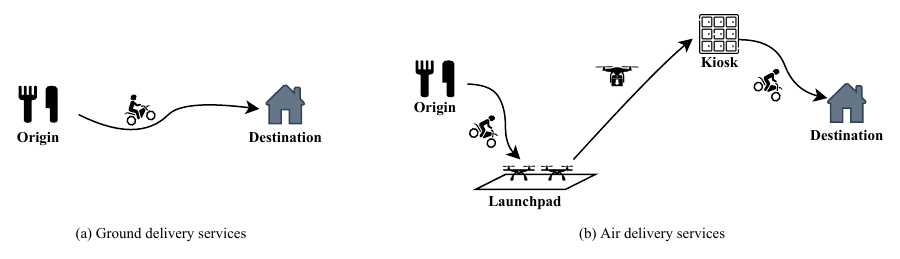}
    \caption{Delivery processes of ground delivery services and air delivery services}
    \label{fig:delivery_workflow}
\end{figure}

%Among the nodes, we have four specific classes: (1) customer location set $\mathcal{C}$, (2) restaurant location set $\mathcal{R}$,  (3) potential locations for launchpads $\mathcal{L}$, and (4) potential locations for kiosks $\mathcal{K}$.
%Note that a customer location $i\in\mathcal{C}$, a restaurant location $j\in\mathcal{R}$,  a launchpad location $l\in\mathcal{L}$ and a kiosk location $k\in\mathcal{K}$ can correspond to the same physical location. For nodes from distinct classes but sharing the same physical location, they are treated as the same node.

\subsection{Order Assignment}

After receiving an order, the platform determines whether to fulfill it using ground delivery exclusively or through a combination of drones and human couriers. In the latter case, the platform must also allocate the demand to launchpads near the origin and kiosks near the destination. To characterize these order assignment decisions, let $\mathcal{C}\subseteq\mathcal{N}$ represent the set of nodes corresponding to locations of customers (i.e., the destinations of the food-delivery orders), and denote $\mathcal{R}\subseteq\mathcal{N}$ denote the set of nodes corresponding to locations of restaurants (i.e., the origins of the food-delivery orders). It is important to note that $\mathcal{C}$ and $\mathcal{R}$ do not indicate individual customers or restaurants. Instead, they correspond to residential communities and commercial centers, respectively, where multiple customers or restaurants may co-exist and are densely clustered within a neighborhood. Let $\mathcal{L}\subseteq\mathcal{N}$ indicate the set of potential launchpad locations from which the platform can opt to build launchpads, and let $\mathcal{K}\subseteq\mathcal{N}$ identify the set of potential kiosk locations.\footnote{Note that the sets $\mathcal{C}$, $\mathcal{R}$, $\mathcal{L}$, $\mathcal{K}$ may intersect.} The platform may choose to construct launchpads at locations within $\mathcal{L}$ and establish kiosks at locations within $\mathcal{K}$. Let $y_l \in {0,1}$ ($l \in \mathcal{L}$) indicate the platform's decision to construct a launchpad at node $l$, with $y_l=1$ signifying the presence of a launchpad and $y_l=0$ indicating its absence. Similarly, let $z_k \in {0,1}$ ($k \in \mathcal{K}$) represent whether a kiosk is constructed at node $k$, where $z_k=1$ denotes its presence and $z_k=0$ its absence. 

The arrival of food-delivery orders from $i \in \mathcal{R}$ to $j \in \mathcal{C}$ is assumed to follow a Poisson process with an arrival rate $\lambda_{ij}$. Let $\lambda_{ilkj}^a$ denote the air delivery demand from origin $i$ to destination $j$, facilitated by a drone departing from launchpad $l$ and arriving at kiosk $k$. Let $\lambda_{ij}^g$ represent the demand for orders from origin $i$ to destination $j$ fulfilled by ground delivery services.
The assignment of food-delivery orders with origin $i$ and destination $j$ can be captured by the following relations:
\begin{align}
    &\lambda^a_{iljk},\ \lambda_{ij}^g \ge 0,\quad \forall i\in\mathcal{R}, j\in\mathcal{C}, l\in\mathcal{L} ,k\in\mathcal{K}\\
    &\sum_{i\in\mathcal{R}}\sum_{j\in\mathcal{C}}\sum_{k\in\mathcal{K}}\lambda^a_{ilkj}\le y_l M_L^l,\quad \forall l\in\mathcal{L} \label{neq:lambda_at_l}\\
    &\sum_{i\in\mathcal{R}}\sum_{j\in\mathcal{C}}\sum_{l\in\mathcal{L}}\lambda^a_{ilkj}\le z_{k}M_K^k,\quad \forall k\in\mathcal{K} \label{neq:lambda_at_k}\\
    %&t_{il}^gd_{il}\le t_{il'}^gz_{l'}+t_{il}^g(1-z_{l'}),\quad\forall l,l'\in \mathcal{L} \\
    &\sum_{l\in \mathcal{L}}\sum_{k\in\mathcal{K}} \lambda^a_{ilkj}+\lambda_{ij}^g\ = \lambda_{ij},\quad\forall i\in \mathcal{R}, j\in\mathcal{C} \label{eq:demand_allocation}
\end{align}
where $M_L^l$ and $M_K^k$ are sufficiently larger numbers. Inequalities (\ref{neq:lambda_at_l}) and (\ref{neq:lambda_at_k}) ensure that air delivery services can only be offered when there are launchpads and kiosks in corresponding locations, and (\ref{eq:demand_allocation}) guarantees that all the demand must be allocated to either ground delivery services or air delivery services.

Since air delivery involves ground segments in both the first and last mile, the overall ground flow includes contributions from both air and ground delivery flows. Let $\lambda_{ij}^{g,all}$ represent the total ground delivery flows. This encompasses ground-delivery orders originating from $i$ and destined for $j$, as well as air-delivery orders originating from $i$ and transported to launchpad $j$ (if $j \in \mathcal{L}$), and air-delivery orders destined for $j$ and transported from kiosk $i$ to $j$ (if $i \in \mathcal{K}$). In this case, we have the following relations:
\begin{align}
    \lambda_{ij}^{g,all} = 
        \lambda_{ij}^g+b_1\sum_{k\in\mathcal{K}} \sum_{j'\in\mathcal{C}}\lambda_{ijkj'}^a+b_2\sum_{i'\in\mathcal{R}}\sum_{l\in\mathcal{L}}\lambda_{i'lij}^a,
\end{align}
where
\begin{align}
    b_1 = \begin{cases}
        1, &i\in\mathcal{R},\ j\in\mathcal{L}\\
        0, &\text{otherwise}
    \end{cases},\quad
    b_2 = \begin{cases}
        1, & i\in\mathcal{K},\ j\in\mathcal{C}\\
        0, &\text{otherwise}
    \end{cases}.
\end{align}
Ground delivery flows may involve multiple orders consolidated within a single human courier, thus the detailed delivery time and operational costs depend on how orders are matched with couriers. Meanwhile, the launchpad may experience congestion due to imbalanced inflow and outflow, leading to orders waiting for available drones or idle drones waiting for incoming orders in cases of supply shortages or surpluses. The waiting time is contingent upon the matching process between orders and drones. Below, we will detail the matching processes between orders and human couriers, as well as between orders and drones, to quantify the associated delivery times and operational costs.

\subsection{Matching between Customers and Couriers}

The food-delivery platform considers order bundling, enabling drivers to carry multiple food-delivery orders simultaneously. While order bundling enhances resource efficiency and cost-effectiveness, it may result in additional detours and impact customer waiting times negatively.  In order to evaluate the effects of order bundling on the delivery time and courier fleet utilization, we develop the expected delivery time  for distinct delivery modes and the shared delivery time for orders delivered on ground by characterizing the matching process between the customers and the human couriers.

We consider a scenario in which orders originating from similar locations and destined for similar locations are bundled together, assigned to the same human courier. In particular, we assume that orders from the same origin $i\in \mathcal{R}$ and destination $j\in \mathcal{C}$ areas are bundled. It is important to clarify that this does not necessarily mean that orders must originate from the same restaurant or be delivered to the exact same customers. Rather, it allows for orders to be sourced from multiple distinct restaurants within the same commercial area or delivered to multiple customers within the same residential community.
We adhere to the ``first-pickup-first-dropoff" principle for order bundling, ensuring that orders are delivered in the same sequence they were picked up. Additionally, for simplicity, we assume that a courier can deliver a maximum of three orders at a time,\footnote{This assumptions can be relaxed, at the expense of a more complicated model. } and a courier matched with fewer than three orders may only be matched with additional orders when she is en route to pick up the orders. Once all orders are picked up and the delivery process has begun, the courier becomes unavailable for further order matching. An example of pickup and delivery process of a courier matched with two food-delivery orders is shown in Figure \ref{fig:bundling_two}.

\begin{figure}[htb!]
    \centering
    \includegraphics[width=0.8\linewidth]{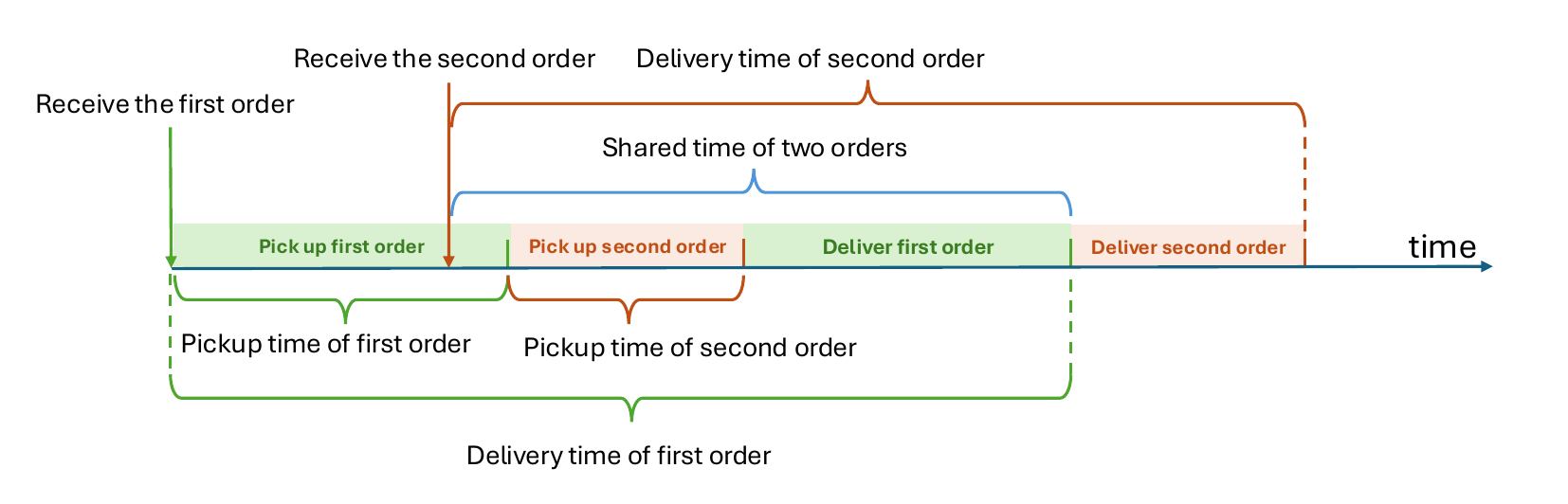}
    \caption{An example of pickup and delivery process of a courier matched with two food-delivery orders.}
    \label{fig:bundling_two}
\end{figure}

\begin{remark}
In this paper, we assume that only orders with similar origins and destinations can be bundled together. We believe this is a reasonable assumption as it helps to minimize detours and enhance delivery speeds, which are critical for the platform since food delivery customers are typically much more impatient than those in other sharing economies. This assumption aligns well with industry practices observed in Hong Kong (such as Foodpanda). The ``first-pickup-first-dropoff" principle, along with the assumption that a courier becomes unavailable for further order matching once delivery has commenced, naturally follows from our initial assumption that orders are bundled based on proximity of origins and destinations. This is because when orders are bundled in such a manner, it becomes inefficient for a courier already en route to reverse direction and return to the origin for additional pickups.
\end{remark}

We refer to the courier currently en-route to pick up a food-delivery order and available for new matches as a ``taker", represented by $t_{(i,j,n)}^m$, where $n$ represents the number of orders dispatched to the taker to be delivered from $i$ to $j$, and $m$ ($m\le n$) represents the current count of the order being picked up.  In the considered scenario, we have three taker states for the takers of each OD pair: $t_{(i,j,1)}^1$, $t_{(i,j,2)}^1$ and $t_{(i,j,2)}^2$. These represent, respectively, couriers who are dispatched with one order and are en route to pick up the first order; couriers who are dispatched with two orders and are en route to pick up the first order; and couriers who are dispatched with two orders and are en route to pick up the second order. It is important to note that the term ``taker" does not include ``idle couriers", who refer to couriers that are waiting and have not been dispatched to any order.  Furthermore, we define a ``seeker", denoted by $s_{(i,j)}$, as a food-delivery order from restaurant/kiosk $i$ to launchpad/customer location $j$ waiting to be matched with a taker or dispatched to an idle courier. We assume that if there exists an available taker, a seeker will first be matched to the taker.\footnote{Matching a seeker to a taker is more beneficial than to an idle driver, as this can enhance courier utilization and consequently lead to cost reduction.} In cases where no takers are available when the seeker places an order, the platform will assign the nearest idle driver to pick it up, and the idle courier then transitions into a taker in state $t_{(i,j,1)}^1$. When a taker in state $t_{(i,j,n)}^m$ receives a new food-delivery order before picking up the current order, it progresses into state $t_{(i,j,n+1)}^m$.

%If a taker in state $t_{(i,j,n)}^n$ fails to be matched to any new order before picking up the current order, the courier begins delivering and become unavailable for further orders until all the current orders are dropped off.

\subsubsection{Bundling Probabilities}\label{sec:bundle_prob}
We first investigate the probabilities for order bundling, which dictate the state transitions of seekers and takers, playing a fundamental role in the derivation of delivery times. Specifically, we calculate the bundling probabilities for orders in different taker states and seeker states.

Let $p_{t_{(i,j,n)}^m}$ be the probability for a taker in state $t_{(i,j,n)}^m$ being matched with a seeker. As taker can only be matched with a seeker while on the way to pick up the current order, the probability $p_{t_{(i,j,n)}^m}$ is contingent on the comparison between the courier's pickup time and the seeker's arrival time. 
Let $t_i^c$ be the average time for a courier to pick up the second/third order after picking up the first/second order, which is exogenous and determined by the density of the restaurants/kiosks at node $i$.
However, the average time for a courier to pick up the first order is more complex.
Since the platform will assign the seeker to the nearest idle courier if no taker is available, the time required for a courier to pick up the first order is reliant on the distribution of nearby idle drivers. 
To characterize this, we divide the transportation network into distinct zones denoted by the set $\mathcal{Z}$, each comprising multiple nodes. Each node $i$ has its corresponding zone, denoted by $Z(i)\in\mathcal{Z}$. Let $N_{Z(i)}^I$ be the average number of idle drivers in zone $Z(i)$.
Assuming that the idle drivers are uniformly and independently distributed across the zone, and that food-delivery orders originating from node $i$ can only be dispatched to the nearest idle driver within zone $Z(i)$ in the absence of an available taker, using the well-established ``square root law" \cite{arnott1996taxi,li2019regulating}, the time for the courier to pick up the first food-delivery order is given by:
\begin{align}\label{eq:pickup_time}
    w_i^c = \frac{H_i}{\sqrt{N_{Z(i)}^I}},
\end{align}
where $H_i$ is a parameter that  depends on the geometry of zone i, and the average speed of idle drivers.

Assume that the time for an idle courier to arrive at zone $i$, denoted by $W_i^c$, is exponentially distributed with expectation $w_i^c$. Assume that the time for a taker in state $t_{(i,j,n)}^m$ ($m>1$) to pick up the current order is $T_i^c$, which is exponentially distributed with expectation $t_i^c$. 
%Similarly, we assume that the time between the dropoff of two bundled orders in zone $i$ is also exponentially distributed with expectation $t_i^c$.
Let $T_{t_{(i,j,n)}^m}$ denote the waiting time for a matching opportunity to occur for takers in state $t_{(i,j,n)}^m$. Since the arrival of seekers is Poisson with rate $\lambda_{ij}^{g,all}$, $T_{t_{(i,j,n)}^m}$ is exponentially distributed with expectation $\frac{1}{\lambda_{ij}^{g,all}}$.
Thus, the probability of a taker in state $t_{(i,j,n)}^m$ to be matched with a seeker can be written as:
\begin{align}\label{eq:p_t}
    p_{t_{(i,j,n)}^m} =  \begin{cases}
        P\left(T_{t_{(i,j,n)}^m} < W_i^c\right) = \dfrac{\lambda_{ij}^{g,all}w_i^c}{1+\lambda_{ij}^{g,all}w_i^c}, & m = 1 \\
        P\left(T_{t_{(i,j,n)}^m} < T_i^c\right) = \dfrac{\lambda_{ij}^{g,all}t_i^c}{1+\lambda_{ij}^{g,all}t_i^c}, & m>1
    \end{cases}.
\end{align}

Next, we will characterize $p_{s_{(i,j)}}$, which represents the probability for a seeker $s_{(i,j)}$ to be matched with a taker.  Given that a taker is prioritized over an idle driver in matching with a seeker, each seeker is always assigned to the taker if one is available. Therefore, $p_{s_{(i,j)}}$ can be derived as the probability that a taker for the OD pair $ij$ exists. If we denote $\rho_{t_{(i,j,n)}^m}$ as the probability that a taker in state $t_{(i,j,n)}^m$ exists, then we have:
%Without loss of generality, we assume that orders are placed sequentially, indicating that no two orders arrive exactly at the same time. 
%precluding the possibility of transitioning to another taker state by matching with idle couriers. Consequently, at any given time, there exists at most one available taker for each origin-destination (OD) pair.}
\begin{align} \label{eq:p_s}
    p_{s_{(i,j)}} = \rho_{t_{(i,j,1)}^1}+\rho_{t_{(i,j,2)}^1}+\rho_{t_{(i,j,2)}^2}.
\end{align}
Equation (\ref{eq:p_s}) specifies that a seeker can only be matched with a taker if there is an available taker who shares the same origin and destination as the seeker. Such a taker may potentially be in one of the following three statuses: $t_{(i,j,1)}^1$, $t_{(i,j,2)}^1$, or $t_{(i,j,2)}^2$.

To further determine $\rho_{t_{(i,j,n)}^m}$, we need to calculate the expected time that a taker in state $t_{(i,j,n)}^m$ is available for matching, denoted by $\tau_{t_{(i,j,n)}^m}$.
A taker is eligible for pairing with a seeker if: (1) the courier is on the way to pick up the order; and (2) the number of orders dispatched to the taker is less than its capacity. 
Let \(T_{t_{(i,j,n)}^m}^{\text{match}}\) represent the time a taker in state \(t_{(i,j,n)}^m\) is available for matching, which is a random variable with expectation \(\tau_{t_{(i,j,n)}^m}\). Specifically, when \(m = 1\), the average time a taker in state \(t_{(i,j,n)}^1\) is available is the minimum value between the time it takes the courier to pick up the first order and the time from when the taker changes to state \(t_{(i,j,n)}^1\) until a new food-delivery order is dispatched. Therefore, we have $T_{t_{(i,j,n)}^m}^{\text{match}} = \min\{T_{t_{(i,j,n)}^m}, W_i^c\}$, which is subject to an exponential distribution with rate \(\lambda_{ij}^{g,\text{all}} + \frac{1}{w_i^c}\). Similarly, when \(m = 2\), we have \( T_{t_{(i,j,n)}^m}^{\text{match}} = \min\{T_{t_{(i,j,n)}^m}, T_i^c\}
\) following an exponential distribution with rate \(\lambda_{ij}^{g,\text{all}} + \frac{1}{t_i^c}\). Overall, the expectation term \(\tau_{t_{(i,j,n)}^m}\) can be derived as:
\begin{align}
    \tau_{t_{(i,j,n)}^m} = \begin{cases}
        \frac{1}{\lambda_{ij}^{g,all}+\frac{1}{w_i^c}} = \frac{w_i^c}{1+\lambda_{ij}^{g,all}w_i^c}, & m = 1\\
        \frac{1}{\lambda_{ij}^{g,all}+\frac{1}{t_i^c}} = \frac{t_i^c}{1+\lambda_{ij}^{g,all}t_i^c}, & m = 2
    \end{cases}
\end{align}

Given \(\tau_{t_{(i,j,n)}^m}\), we further need to characterize the arrival rate of unmatched taker in each state to derive  $\rho_{t_{(i,j,n)}^m}$ (detailed derivation will be discussed shortly). To this end, let $\lambda_{t_{(i,j,n)}^m}^{um}$ be the arrival rate of unmatched takers in state $t_{(i,j,n)}^m$, then the following relations hold:
\begin{align}
    & \lambda_{t_{(i,j,1)}^1}^{um} = \lambda_{ij}^{g,all}(1-p_{s_{(i,j)}}),\label{equation_11}\\
    & \lambda_{t_{(i,j,2)}^1}^{um} = \lambda_{t_{(i,j,1)}^1}^{um}p_{t_{(i,j,1)}^1} = \lambda_{ij}^{g,all}(1-p_{s_{(i,j)}})p_{t_{(i,j,1)}^1},\label{equation_12}\\
    & \lambda_{t_{(i,j,2)}^2}^{um} = \lambda_{t_{(i,j,2)}^1}^{um}(1-p_{t_{(i,j,2)}^1}) = \lambda_{ij}^{g,all}(1-p_{s_{(i,j)}})p_{t_{(i,j,1)}^1}(1-p_{t_{(i,j,2)}^1}).\label{equation_13}
\end{align}
where (\ref{equation_11}) indicates that a seeker $s_{(i,j)}$ transitions into the taker state $t_{(i,j,1)}^1$ if it is dispatched to an idle driver, which only occur when there is no other available taker, and thus it has a probability of $1-p_{s(i,j)}$; Equation (\ref{equation_12}) indicates that a taker in state $t_{(i,j,1)}^1$ evolves into $t_{(i,j,2)}^1$ when matched to a new seeker while on the way to pick up the first order, which has a probability of $p_{t_{(i,j,1)}^1}$; Equation (\ref{equation_13}) indicates that a taker in state $t_{(i,j,2)}^2$ can shift to $t_{(i,j,2)}^2$ if and only if it fails to be matched to new orders prior to picking up the first one, which has a probability of $1-p_{t_{(i,j,2)}^1}$. {For reader's convenience, the state transition across distinct seeker and taker states that describes the aforementioned relations are illustrated in Figure \ref{fig:taker_state_transition}.}
\begin{figure}[H]
    \centering
    \includegraphics[width=0.7\linewidth]{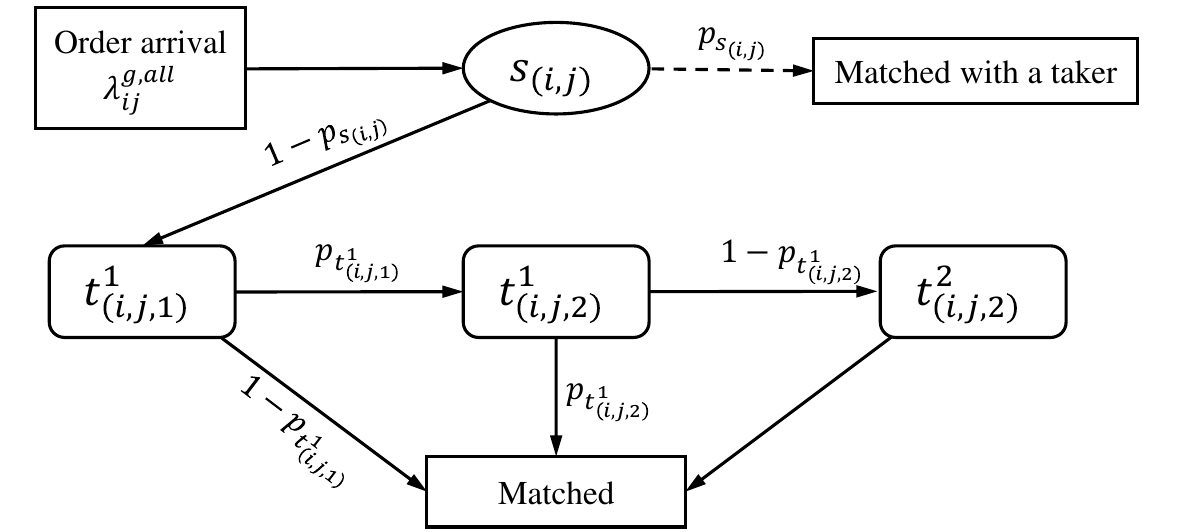}
    \caption{The state transitions across seeker and taker states and the corresponding probabilities (The ``Matched'' state signifies that the taker has completed matching with all seekers).}
    \label{fig:taker_state_transition}
\end{figure}

With the arrival rates of takers $\lambda_{t_{(i,j,n)}^m}^{um}$ and their average dwell time $t_{(i,j,n)}^m$, the probability of the existence of a taker in state  $t_{(i,j,n)}^m$ can be derived accordingly, based on the following proposition:
\begin{proposition}
    \label{prop_existence_taker}
    Assume that a seeker will be matched to a taker whenever one is available, and will only be matched to an idle courier if a taker does not exist. Additionally, assume that distinct orders do not arrive simultaneously.\footnote{This is a rather weak assumption since the chance of two orders arriving exactly at the same time can be is very small, and thus can be neglected without influencing the result.} Under these assumptions, the probability that a taker is in state $t_{(i,j,n)}^m$, denoted by $\rho_{t_{(i,j,n)}^m}$, can be derived as:
\begin{align} \label{eq:prob_exist_taker_11_eq}
    &\rho_{t_{(i,j,1)}^1} = \lambda_{t_{(i,j,1)}^1}^{um}\tau_{t_{(i,j,1)}^1} = \lambda_{ij}^{g,all}(1-p_{s_{(i,j)}})\frac{w_i^c}{1+\lambda_{ij}w_i^c},\\ \label{eq:prob_exist_taker_21_eq}
    &\rho_{t_{(i,j,2)}^1} = \lambda_{t_{(i,j,2)}^1}^{um}\tau_{t_{(i,j,2)}^1} = \lambda_{ij}^{g,all}(1-p_{s_{(i,j)}})\frac{\lambda_{ij}^{g,all}w_i^c}{1+\lambda_{ij}^{g,all}w_i^c}\frac{w_i^c}{1+\lambda_{ij}w_i^c},\\ \label{eq:prob_exist_taker_22_eq}
    &\rho_{t_{(i,j,2)}^2} = \lambda_{t_{(i,j,2)}^2}^{um}\tau_{t_{(i,j,2)}^2} = \lambda_{ij}^{g,all}(1-p_{s_{(i,j)}})\frac{\lambda_{ij}^{g,all}w_i^c}{\left(1+\lambda_{ij}^{g,all}w_i^c\right)^2}\frac{t_i^c}{1+\lambda_{ij}^{g,all}t_i^c}.
\end{align}
\end{proposition}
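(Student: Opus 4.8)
The plan is to recognize \eqref{eq:prob_exist_taker_11_eq}--\eqref{eq:prob_exist_taker_22_eq} as three instances of a single occupancy identity, $\rho_{t_{(i,j,n)}^m} = \lambda_{t_{(i,j,n)}^m}^{um}\,\tau_{t_{(i,j,n)}^m}$, which I would establish by a Little's-law / renewal-reward argument applied separately to each taker state, and then obtain the displayed closed forms by substituting the entering rates \eqref{equation_11}--\eqref{equation_13} and the available-time expressions $\tau_{t_{(i,j,n)}^m}$, simplifying with the explicit matching probabilities \eqref{eq:p_t}.

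First I would pin down the interpretation of $\rho_{t_{(i,j,n)}^m}$: it is the long-run probability that a taker who is still \emph{available for matching} occupies state $t_{(i,j,n)}^m$. The crucial structural fact, which I would prove first, is that at any instant there is at most one matchable taker for each OD pair $ij$, sitting in a single one of its states. This follows from the proposition's assumptions: by the priority rule an arriving seeker is always matched to an existing matchable taker (advancing its state) whenever one is present, while a brand-new taker $t_{(i,j,1)}^1$ is created only when a seeker finds \emph{no} matchable taker; since distinct orders never arrive simultaneously, a new matchable taker is therefore born only into an empty pool, and any later seeker is consumed by it before a second one can appear. Hence the number of matchable takers in each state is always $0$ or $1$, so its expectation coincides with $\rho_{t_{(i,j,n)}^m}$ and the three events are mutually exclusive --- which is exactly what legitimizes the sum in \eqref{eq:p_s}.

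With this reduction, I would treat the matchable takers of a fixed state as a station fed at the entering rate $\lambda_{t_{(i,j,n)}^m}^{um}$ given in \eqref{equation_11}--\eqref{equation_13}, each occupant remaining for mean available time $\tau_{t_{(i,j,n)}^m}$. Little's law then gives that the time-average number in the station equals $\lambda_{t_{(i,j,n)}^m}^{um}\,\tau_{t_{(i,j,n)}^m}$, and by the $0/1$ property this average is precisely $\rho_{t_{(i,j,n)}^m}$. The available time is itself a race of memoryless clocks: for $m=1$ the taker stays matchable until either the Poisson$(\lambda_{ij}^{g,all})$ seeker stream delivers a match or the exponential first-pickup clock $W_i^c$ (mean $w_i^c$) expires, so $T^{\text{match}}_{t_{(i,j,n)}^m}=\min\{T_{t_{(i,j,n)}^m},W_i^c\}$ is exponential with rate $\lambda_{ij}^{g,all}+1/w_i^c$, yielding $\tau_{t_{(i,j,n)}^1}=w_i^c/(1+\lambda_{ij}^{g,all}w_i^c)$; the case $m=2$ is identical with $t_i^c$ in place of $w_i^c$. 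These are the expressions recorded just before the proposition.

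Finally I would assemble the closed forms: substituting $\lambda_{t_{(i,j,1)}^1}^{um}=\lambda_{ij}^{g,all}(1-p_{s_{(i,j)}})$ with $\tau_{t_{(i,j,1)}^1}$ gives \eqref{eq:prob_exist_taker_11_eq}; feeding in $\lambda_{t_{(i,j,2)}^1}^{um}=\lambda_{t_{(i,j,1)}^1}^{um}p_{t_{(i,j,1)}^1}$ together with $p_{t_{(i,j,1)}^1}=\lambda_{ij}^{g,all}w_i^c/(1+\lambda_{ij}^{g,all}w_i^c)$ yields \eqref{eq:prob_exist_taker_21_eq}; and using $\lambda_{t_{(i,j,2)}^2}^{um}=\lambda_{t_{(i,j,2)}^1}^{um}(1-p_{t_{(i,j,2)}^1})$ with $1-p_{t_{(i,j,2)}^1}=1/(1+\lambda_{ij}^{g,all}w_i^c)$ and $\tau_{t_{(i,j,2)}^2}=t_i^c/(1+\lambda_{ij}^{g,all}t_i^c)$ collapses the product to the factor $\lambda_{ij}^{g,all}w_i^c/(1+\lambda_{ij}^{g,all}w_i^c)^2$ appearing in \eqref{eq:prob_exist_taker_22_eq}. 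I expect the real obstacle to lie not in this algebra but in the occupancy identity itself: rigorously establishing the ``at most one matchable taker'' property from the priority and no-simultaneous-arrival assumptions, and justifying the Little's-law step for the generally distributed available times, given that the entering rates $\lambda^{um}$ are themselves defined through the matching probabilities and so must be shown to be the consistent steady-state throughputs of each state.
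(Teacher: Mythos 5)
Your proposal matches the paper's proof essentially step for step: both first argue that the priority rule and the no-simultaneous-arrival assumption force a $0/1$ occupancy (a second taker could only be created by matching a seeker to an idle courier while a taker exists, contradicting priority), and then invoke Little's law to identify $\lambda_{t_{(i,j,n)}^m}^{um}\tau_{t_{(i,j,n)}^m}$ as the average number — hence, by the $0/1$ property, the probability — of a taker in state $t_{(i,j,n)}^m$, before substituting the entering rates and available times to obtain the closed forms. The rigor concerns you flag at the end (justifying Little's law here and the self-consistency of the steady-state rates $\lambda^{um}$) are likewise left at the informal level in the paper's own proof, so your treatment is on equal footing with it.
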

\begin{proof}
To prove equations (\ref{eq:prob_exist_taker_11_eq})-(\ref{eq:prob_exist_taker_22_eq}), we begin by acknowledging that under the stipulated assumptions, at most one taker can exist in each state $t_{(i,j,n)}^m$ at any given time. This limitation arises because a new taker is only generated when a seeker is matched with an idle courier. Once the first taker enters the system, any subsequent seeker will be matched to this existing taker with priority. Consequently, the creation of a second taker would require a seeker to be matched with another idle courier, which contradicts the initial assumption that seekers are matched to existing takers as a priority. This ensures that the presence of more than one taker in the same state at the same time is not possible. On the other hand, based on Little's law, $\lambda_{t_{(i,j,n)}^m}^{um}\tau_{t_{(i,j,n)}^m}$ represents the average number of takers in state $t_{(i,j,n)}^m$. Since there is either one taker or no taker in each state, $\lambda_{t_{(i,j,n)}^m}^{um}\tau_{t_{(i,j,n)}^m}$ is a real number between 0 and 1, which corresponds to the probability of existence of a taker in state $t_{(i,j,n)}^m$. This completes the proof. 
\end{proof}

By combining (\ref{eq:prob_exist_taker_11_eq})-(\ref{eq:prob_exist_taker_22_eq}) with (\ref{eq:p_s}), we can derive the following expression for  $\rho_{t_{(i,j,n)}^m}$:
\begin{align} \label{eq:prob_exist_taker_11}
    &\rho_{t_{(i,j,1)}^1} = \frac{\lambda_{ij}^{g,all}w_i^c}{1+\lambda_{ij}^{g,all}w_i^c\left(2+\frac{\lambda_{ij}^{g,all}w_i^c}{1+\lambda_{ij}^{g,all}w_i^c}+\frac{\lambda_{ij}^{g,all}t_i^c}{(1+\lambda_{ij}^{g,all}w_i^c)(1+\lambda_{ij}^{g,all}t_i^c)}\right)}
    \\ \label{eq:prob_exist_taker_21}
    &\rho_{t_{(i,j,2)}^1} = \rho_{t_{(i,j,1)}^1}\frac{\lambda_{ij}^{g,all}w_i^c}{1+\lambda_{ij}^{g,all}w_i^c},\\ \label{eq:prob_exist_taker_22}
    &\rho_{t_{(i,j,2)}^2} = \rho_{t_{(i,j,1)}^1}\frac{\lambda_{ij}^{g,all}t_i^c}{(1+\lambda_{ij}^{g,all}w_i^c)(1+\lambda_{ij}^{g,all}t_i^c)}.
\end{align}

Finally,  the bundling probabilities for seeker, denoted as $p_{s_{(i,j)}}$, can be obtained based on (\ref{eq:p_s}) under the given $\rho_{t_{(i,j,n)}^m}$ derived from (\ref{eq:prob_exist_taker_11})-(\ref{eq:prob_exist_taker_22}).

\subsubsection{Expected Delivery Time and Shared Delivery Time}

%\begin{figure}[htb!] 
%    \centering
%    \includegraphics[width = \textwidth]{}
%    \caption{State transition of pooling food-delivery orders.}
%    \label{fig:state_transition}
%\end{figure}

Given the bundling probabilities $p_{s_{(i,j)}}$ and $p_{t_{(i,j,n)}^m}$, we can calculate the expected delivery time and shared delivery time for bundled food-delivery services. For simplicity, we assume that the couriers are moving at a constant speed $v_c$. Let $L_{ij}^0$ denote the delivery distance from location $i$ to location $j$ without order bundling, then the time for a courier to travel from $i$ to $j$ is $t_{ij} = \frac{L_{ij}^0}{v_c}$. We also assume that the time between the dropoff of two bundled orders at node $i$ is also exponentially distributed with expectation $t_i^c$, which is the same as the time between the pickup of two bundled orders at node $i$.

To calculate the expected delivery time for a food-delivery order from location $i$ to location $j$, we will consider two distinct cases. The first cases is that the order is matched to a taker when it is at a seeker state $s_{(i,j)}$. In this case, there are three further scenarios, including (a) the order is matched to a taker in state $t_{(i,j,1)}^1$; (b) the order is matched to a taker in state $t_{(i,j,2)}^1$; and (c) the order is matched to a taker in state $t_{(i,j,2)}^2$. If no takers in these states are available, the order will then be matched to an idle courier and the courier turns into a taker, which constitutes the second case. Below we will characterize the expected delivery time for each case.  

\begin{proposition}
    \label{proposition2_case1_scenario1}
    If the order is matched to a taker $t_{(i,j,1)}^1$ while it is at a seeker state $s_{(i,j)}$, the expected delivery time for a seeker with origin $i$ and destination $j$(i.e., the time from when the seeker order is placed  to when it is delivered to the location $j$) can be expressed as:
\begin{align} \nonumber
    t_{s_{(i,j)}}^{t_{(i,j,1)}^1} = &p_{t_{(i,j,2)}^1}(\tau_{t_{(i,j,2)}^1}+w_i^c+2t_i^c)+(1-p_{t_{(i,j,2)}^1})p_{t_{(i,j,2)}^2}(\tau_{t_{(i,j,2)}^1}+\tau_{t_{(i,j,2)}^2}+2t_i^c)+\\
    &(1-p_{t_{(i,j,2)}^1})(1-p_{t_{(i,j,2)}^2})(\tau_{t_{(i,j,2)}^1}+\tau_{t_{(i,j,2)}^2})+t_{ij}+t_j^c,
\end{align}
\end{proposition}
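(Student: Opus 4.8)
The plan is to condition on the state into which the tagged seeker is placed and then trace, scenario by scenario, the successive sojourn, pickup, travel and drop-off times until the tagged order is delivered, finally averaging through the law of total expectation. At the instant the seeker $s_{(i,j)}$ is matched to a taker in state $t_{(i,j,1)}^1$, the taker becomes $t_{(i,j,2)}^1$ and the tagged order is the \emph{second} order in the bundle: it joined while the courier was still en route to the first pickup, so under the first-pickup-first-dropoff rule it is picked up and dropped off second. Counting time from this matching instant, I would partition the future into the three mutually exclusive scenarios that generate the three probability-weighted terms: (A) a third order is matched while the courier is still in $t_{(i,j,2)}^1$, with probability $p_{t_{(i,j,2)}^1}$; (B1) no third order is matched before the first pickup, so the courier advances to $t_{(i,j,2)}^2$ and a third order is matched there, with probability $(1-p_{t_{(i,j,2)}^1})p_{t_{(i,j,2)}^2}$; and (B2) no third order is ever matched, with probability $(1-p_{t_{(i,j,2)}^1})(1-p_{t_{(i,j,2)}^2})$.

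For each scenario I would add up the consecutive sojourns. In (A) the courier spends expected time $\tau_{t_{(i,j,2)}^1}$ in $t_{(i,j,2)}^1$, then, being full and no longer matching, completes the first pickup in $w_i^c$, collects the second and third orders in $2t_i^c$, travels to $j$ in $t_{ij}$, and drops the first order before delivering the tagged order $t_j^c$ later, giving $\tau_{t_{(i,j,2)}^1}+w_i^c+2t_i^c+t_{ij}+t_j^c$. In (B1) the courier spends $\tau_{t_{(i,j,2)}^1}$ in $t_{(i,j,2)}^1$ and $\tau_{t_{(i,j,2)}^2}$ in $t_{(i,j,2)}^2$, then finishes the remaining two pickups in $2t_i^c$, yielding $\tau_{t_{(i,j,2)}^1}+\tau_{t_{(i,j,2)}^2}+2t_i^c+t_{ij}+t_j^c$. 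In (B2) only the two bundled orders are carried, so the pickups are already complete upon leaving $t_{(i,j,2)}^2$ and the contribution is $\tau_{t_{(i,j,2)}^1}+\tau_{t_{(i,j,2)}^2}+t_{ij}+t_j^c$. Because $t_{ij}+t_j^c$ appears in every scenario it factors out of the weighted sum, and weighting the remaining brackets by the scenario probabilities reproduces exactly the claimed expression.

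The crux of the argument, and where care is needed, is justifying the sojourn and residual pickup times from the memorylessness of the exponential clocks. Each taker state is left at the minimum of a matching clock (rate $\lambda_{ij}^{g,all}$) and a pickup clock (rate $1/w_i^c$ or $1/t_i^c$); since for independent exponentials the value of $\min$ is independent of which clock fires first, the expected sojourn equals $\tau_{t_{(i,j,n)}^m}$ whether the state is left by a new match or by a pickup. Likewise, when a third order is matched mid-pickup, the residual time to complete that pickup is again exponential with the original mean ($w_i^c$ in scenario (A), and one of the two $t_i^c$ contributions in scenario (B1)), so no re-conditioning on the elapsed time is needed. I would also invoke the standing assumptions that all pickups precede all deliveries and that inter-pickup and inter-drop-off gaps at a node have means $t_i^c$ and $t_j^c$. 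The main obstacle is precisely this bookkeeping of conditional expectations: one must resist re-using $\tau_{t_{(i,j,n)}^m}$ in place of the full mean for the residual pickup after a mid-pickup match, and must consistently place the tagged order as the second drop-off in every scenario; once these are handled correctly, assembling the three conditional expectations via total expectation delivers the stated formula.
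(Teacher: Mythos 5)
Your proof is correct and takes essentially the same approach as the paper's: the identical three-scenario decomposition (third order matched in state $t_{(i,j,2)}^1$ with probability $p_{t_{(i,j,2)}^1}$, matched in $t_{(i,j,2)}^2$ with probability $(1-p_{t_{(i,j,2)}^1})p_{t_{(i,j,2)}^2}$, or never), the same sojourn-time bookkeeping, and the same common tail $t_{ij}+t_j^c$. The only difference is presentational—the paper organizes the cases as paths through a state-transition (Markov chain) diagram with assigned dwell times, whereas you trace the scenarios directly and make explicit the memorylessness justification (independence of $\min$ of exponential clocks from which clock fires, and the exponential residual pickup time after a mid-pickup match) that the paper uses implicitly.
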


\begin{proof}
If the order is matched to a taker $t_{(i,j,1)}^1$ while it is at a seeker state $s_{(i,j)}$, the taker then transitions into state $t_{(i,j,2)}^1$. In this scenario, the taker is matched to two orders, and will potentially have the chance of being matched to the third order. The movement of this taker, from the moment she receives the seeker's order (i.e., second order) to the moment she picks up all orders and begins the delivery, can be characterized by a state transition model illustrated in Figure \ref{fig:time_seeker}. The state $(i,j,k_1,k_2)$ in Figure \ref{fig:time_seeker} represents the status when the courier gets matched with $k_1$ orders with origin $i$ and destination $j$ and is currently on the way to pick up the $k_2$th order ($1\le k_2\le k_1$). The state ``deliver''  in Figure \ref{fig:time_seeker} signifies that the courier have picked up all bundled orders and begin delivery.
\begin{figure}[H]
    \centering
    \includegraphics[width=0.5\linewidth]{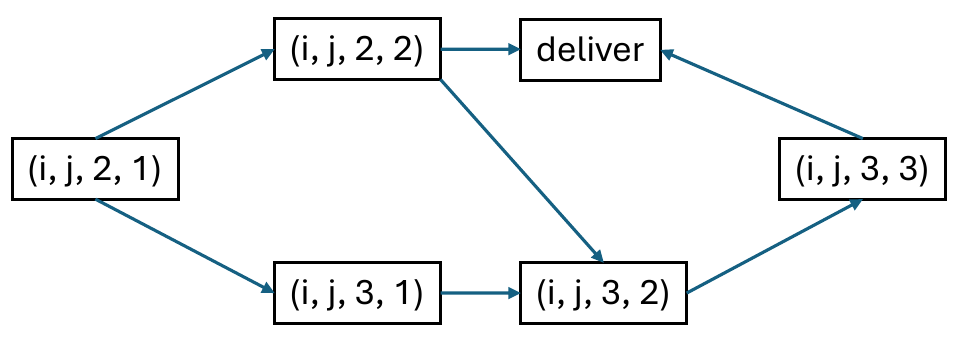}
    \caption{Movement of courier in taker state $t_{(i,j,1)}^1$ from getting matched with the seeker to picking up all orders.}
    \label{fig:time_seeker}
\end{figure}
Figure \ref{fig:time_seeker} illustrates three possible cases of the courier's movement:
\begin{enumerate}
\item $(i,j,2,1)\rightarrow (i,j,3,1) \rightarrow (i,j,3,2) \rightarrow (i,j,3,3)\rightarrow \text{deliver}$:
after getting matched to the seeker (i.e., second order), the taker receives the third order when she is on the way to pick up the first order.
    \item $(i,j,2,1)\rightarrow (i,j,2,2) \rightarrow (i,j,3,2) \rightarrow (i,j,3,3)\rightarrow \text{deliver}$: after getting matched to the seeker, the taker receives the third order when she is on the way to pick up the seeker (i..e, second order).   
    \item $(i,j,2,1)\rightarrow (i,j,2,2) \rightarrow \text{deliver}$: the taker is not matched to other orders after being matched to the seeker.    
\end{enumerate}
To derive the expected delivery time for each case described, we can model the process shown in Figure \ref{fig:time_seeker} as a Markov chain, where each path corresponds to one of the cases. The total expected delivery time for each case is calculated as the sum of the time spent at each node along the corresponding path. Specifically, the dwell time at each node, represented as $(i,j,k_1,k_2)$, can be denoted by $d_{(i,j,k_1,k_2)}$, then we have:
\begin{align}
\label{dwell_time}
    d_{(i,j,k_1,k_2)} = \begin{cases}
        \tau_{t_{(i,j,k_1)}^{k_2}}, & (k_1,k_2) \in \{(1,1),(2,1),(2,2)\}\\
        w_i^c, & k_1 = 3,k_2 = 1\\
        t_i^c, & k_1 =3, k_2 \ge 2
    \end{cases}
\end{align}
Note that case (i) occurs when the taker is matched to the third order before picking up the first order, which has a probability of $p_{t_{(i,j,2)}^1}$. In this case, the total delivery time is
\begin{equation}
d(i,j,2,1)+d(i,j,3,1)+d(i,j,3,2)+d(i,j,3,3)=\tau_{t_{(i,j,2)}^1}+w_i^c+2t_i^c.    
\end{equation}
Case (ii) arises when the taker is matched with a third order after the first order has been picked up but before the second order is collected. This scenario occurs with a probability of $(1-p_{t_{(i,j,2)}^1})p_{t_{(i,j,2)}^2}$, and the total delivery time is:
\begin{equation}
d(i,j,2,1)+d(i,j,2,2)+d(i,j,3,2)+d(i,j,3,3)=\tau_{t_{(i,j,2)}^1}+\tau_{t_{(i,j,2)}^2}+2t_i^c.    
\end{equation}
Case (iii) arises when the taker is not matched with a third order after the first order and second order has been picked up, which occurs with a probability of $(1-p_{t_{(i,j,2)}^1})(1-p_{t_{(i,j,2)}^2})$, and the total delivery time is:
\begin{equation}
d(i,j,2,1)+d(i,j,2,2)+d(i,j,3,2)+d(i,j,3,3)=\tau_{t_{(i,j,2)}^1}+\tau_{t_{(i,j,2)}^2}.    
\end{equation}
Based on the above discussion, the expected delivery time for a seeker with origin $i$ and destination $j$, matched to a taker in state $t_{(i,j,1)}^1$ (i.e., the time from when the seeker order is placed  to when it is delivered to the location $j$) is:
\begin{align} \nonumber
    t_{s_{(i,j)}}^{t_{(i,j,1)}^1} = &p_{t_{(i,j,2)}^1}(\tau_{t_{(i,j,2)}^1}+w_i^c+2t_i^c)+(1-p_{t_{(i,j,2)}^1})p_{t_{(i,j,2)}^2}(\tau_{t_{(i,j,2)}^1}+\tau_{t_{(i,j,2)}^2}+2t_i^c)+\\
    &(1-p_{t_{(i,j,2)}^1})(1-p_{t_{(i,j,2)}^2})(\tau_{t_{(i,j,2)}^1}+\tau_{t_{(i,j,2)}^2})+t_{ij}+t_j^c,
\end{align}
where the first three terms capture the average duration from when a seeker is matched with a taker until the taker collects all orders corresponding to cases (i)-(iii) in Figure \ref{fig:time_seeker}, $t_{ij}$ captures the time from when the taker collects all orders in $i$ to dropping off the first order in $j$, and $t_j^c$ captures the average time from the courier dropping off the first order to dropping off the seeker order. 
\end{proof}

\begin{proposition}\label{prop_t21}
If the order is matched to taker $t_{(i,j,2)}^1$ while it is at a seeker state $s_{(i,j)}$, the expected delivery time can be derived as:
\begin{align}
\label{time_prop3}
    t_{s_{(i,j)}}^{t_{(i,j,2)}^1} =w_i^c+2t_i^c+t_{ij}+2t_j^c,
\end{align}
\end{proposition}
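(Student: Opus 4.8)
The plan is to exploit the fact that matching the seeker to a taker already carrying two orders saturates the courier's capacity, so that the entire remaining pickup--delivery sequence becomes \emph{deterministic} in structure and none of the probabilistic branching that complicated Proposition~\ref{proposition2_case1_scenario1} occurs. First I would observe that a taker in state $t_{(i,j,2)}^1$ already holds two orders and is still en route to the first pickup; once it absorbs the seeker it transitions to $t_{(i,j,3)}^1$, reaching the maximum capacity of three orders. By the modeling assumption that a courier becomes ineligible for further matching once all three slots are filled, no additional order can join this bundle. Consequently there is a single path to analyze: the courier completes the first pickup, then picks up the second and the third (the seeker) orders at $i$, travels to $j$, and drops off the three orders in pickup order, so the seeker is the last to be delivered.

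Next I would sum the expected duration of each leg. The crucial step is the first one: because $W_i^c$ is exponentially distributed and hence memoryless, conditioning on the courier not yet having reached the first pickup at the instant of the seeker match leaves the expected \emph{residual} time to that pickup equal to $w_i^c$. The two subsequent intra-origin pickups each take $t_i^c$ in expectation, contributing $2t_i^c$; the loaded trip from $i$ to the first dropoff at $j$ contributes $t_{ij}$; and the two inter-dropoff gaps at $j$ that separate the first, second and third (seeker) deliveries each take $t_j^c$, contributing $2t_j^c$. Adding these gives
\begin{align}
    t_{s_{(i,j)}}^{t_{(i,j,2)}^1} = w_i^c + 2t_i^c + t_{ij} + 2t_j^c,
\end{align}
as claimed in (\ref{time_prop3}).

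The only genuine subtlety, and the step I would be most careful to justify, is the memorylessness argument for the residual first-pickup time: even though the taker has already spent some (random) amount of time en route before acquiring the seeker, the exponential law of $W_i^c$ guarantees the remaining expected pickup time is exactly $w_i^c$, not something smaller. Everything else is bookkeeping over a fixed ordering of events. I would also explicitly note that it is precisely the saturation of capacity that both removes the case analysis of Proposition~\ref{proposition2_case1_scenario1} and forces the seeker to be the third and final delivery, which is what produces the coefficient $2$ on $t_j^c$ rather than a single $t_j^c$.
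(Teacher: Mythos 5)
Your proof is correct and follows essentially the same route as the paper's: capacity saturation after the match eliminates all probabilistic branching, the ``first-pickup-first-dropoff'' rule forces the seeker to be delivered last, and the expected time is the sum $w_i^c + 2t_i^c + t_{ij} + 2t_j^c$ over the resulting deterministic sequence of legs. Your explicit memorylessness justification for the residual first-pickup time being $w_i^c$ is a detail the paper leaves implicit (it is baked into the dwell-time formula (\ref{dwell_time}) for state $(i,j,3,1)$), but it is the same underlying argument.
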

\begin{proof}
If the order is matched to taker $t_{(i,j,2)}^1$, then the taker turns into state  $t_{(i,j,3)}^1$ and thus can not be matched to new orders since it already reaches the capacity. In this case, the seeker must be the last to be delivered. The first two terms in (\ref{time_prop3}) capture the average time from when the seeker is matched to the taker until the taker picks up all three orders. The last two terms represent the average time from when the taker collects all three orders until the final delivery of the seeker's order. This competes the proof. 
\end{proof}

\begin{proposition}\label{prop_t22}
If the order is matched to taker $t_{(i,j,2)}^2$, then the expected delivery time can be expressed as:
\begin{align}
    t_{s_{(i,j)}}^{t_{(i,j,2)}^2} =2t_i^c+t_{ij}+2t_j^c.
\end{align}
\end{proposition}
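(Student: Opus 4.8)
The plan is to reuse the state-transition logic of Propositions \ref{proposition2_case1_scenario1} and \ref{prop_t21}, while exploiting the fact that this is the simplest of the three subcases. First I would observe that when the seeker is matched to a taker already in state $t_{(i,j,2)}^2$, the taker immediately transitions to $t_{(i,j,3)}^2$, thereby reaching the capacity of three orders. Since a courier at capacity is no longer eligible for further matching, there is no subsequent probabilistic branching (unlike in Proposition \ref{proposition2_case1_scenario1}): the remaining pickup-and-delivery sequence is fully determined, so the seeker's delivery time is a single deterministic sum of dwell times rather than a probability-weighted combination of cases.

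Next I would decompose the time elapsed from the matching instant to the seeker's drop-off into its pickup and drop-off components. A taker in state $t_{(i,j,2)}^2$ has, by definition, already collected the first order and is en route to the second, so only two pickups remain. Invoking the memoryless property of the exponentially distributed inter-pickup time $T_i^c$ (with mean $t_i^c$), the expected residual time to collect the second order is still $t_i^c$, and collecting the seeker as the third order contributes a further $t_i^c$, yielding $2t_i^c$ for the pickup phase. This is precisely where the present case departs from Proposition \ref{prop_t21}: there the courier had not yet picked up any order and the pickup phase cost $w_i^c+2t_i^c$, whereas here the first pickup is already complete, so the $w_i^c$ term drops out.

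Finally I would account for the delivery phase, which is identical to that in Proposition \ref{prop_t21}. Once all three orders are on board, the courier travels from $i$ to $j$ and drops the first order, taking $t_{ij}$; by the first-pickup-first-dropoff principle the seeker, being the last order picked up, is delivered last, so after the first drop-off it takes $t_j^c$ to release the second order and another $t_j^c$ to release the seeker, contributing $2t_j^c$. Summing the pickup phase, the travel-and-first-drop term, and the remaining two drop-offs gives $t_{s_{(i,j)}}^{t_{(i,j,2)}^2}=2t_i^c+t_{ij}+2t_j^c$, as claimed.

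There is no genuine obstacle in this case; the only points requiring care are the correct bookkeeping of how many pickups and drop-offs remain given the taker's starting state, and the memorylessness argument that lets the already-begun second pickup still count as a full $t_i^c$ in expectation. Confirming that capacity saturation removes all further matching opportunities (so that no additional $\tau$-type waiting terms survive, as they do in Proposition \ref{proposition2_case1_scenario1}) is exactly what makes this the cleanest of the three subcases.
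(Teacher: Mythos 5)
Your proof is correct and takes essentially the same approach as the paper, which omits this proof and simply notes it parallels Proposition \ref{prop_t21}: capacity saturation eliminates further matching, the pickup phase contributes $2t_i^c$ (with no $w_i^c$ term since the first order is already collected), and the delivery phase contributes $t_{ij}+2t_j^c$ with the seeker dropped off last. Your explicit appeal to memorylessness to justify counting the in-progress second pickup as a full $t_i^c$ in expectation is a detail the paper leaves implicit, but it is the same argument.
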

The proof of Proposition \ref{prop_t22} is similar to that of Proposition \ref{prop_t21} and is thus omitted. 

Combining the results of Propositions \ref{proposition2_case1_scenario1}-\ref{prop_t22}, the expected delivery time for a seeker in state $s_{(i,j)}$ getting matched with a taker can be characterized by:
\begin{align}
\label{t_Taker}
    t_{s_{(i,j)}}^{taker} = \frac{\rho_{t_{(i,j,1)}^1}t_{s_{(i,j)}}^{t_{(i,j,1)}^1}}{\sum_{t_{(i,j,m)}^n}\rho_{t_{(i,j,m)}^n}} +\frac{\rho_{t_{(i,j,2)}^1}t_{s_{(i,j)}}^{t_{(i,j,2)}^1}}{\sum_{t_{(i,j,m)}^n}\rho_{t_{(i,j,m)}^n}} +\frac{\rho_{t_{(i,j,2)}^2}t_{s_{(i,j)}}^{t_{(i,j,2)}^2}}{\sum_{t_{(i,j,m)}^n}\rho_{t_{(i,j,m)}^n}} 
\end{align}

While equation (\ref{t_Taker}) provides the expected delivery time for orders matched to a taker, we also need to derive the expected delivery time for orders matched to an idle courier. In this case, we have the following proposition:
\begin{proposition}
\label{prop_idle}
If a seeker in state $s_{(i,j)}$ is matched to an idle courier, then the expected delivery time for the order can be derived as:
\begin{align}\nonumber
    t_{s_{(i,j)}}^{idle} = &p_{t_{(i,j,1)}^1}p_{t_{(i,j,2)}^1}(\tau_{t_{(i,j,1)}^1}+\tau_{t_{(i,j,2)}^1}+w_i^c+2t_i^c)+p_{t_{(i,j,1)}^1}(1-p_{t_{(i,j,2)}^1})p_{t_{(i,j,2)}^2}(\tau_{t_{(i,j,1)}^1}+\tau_{t_{(i,j,2)}^1}+\tau_{t_{(i,j,2)}^2}+2t_i^c)\\\nonumber
    &+p_{t_{(i,j,1)}^1}(1-p_{t_{(i,j,2)}^1})(1-p_{t_{(i,j,2)}^2})(\tau_{t_{(i,j,1)}^1}+\tau_{t_{(i,j,2)}^1}+\tau_{t_{(i,j,2)}^2})+
    (1-p_{t_{(i,j,1)}^1})\tau_{t_{(i,j,1)}^1} +t_{ij},
\end{align}
\end{proposition}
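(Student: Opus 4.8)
The plan is to mirror the Markov-chain/path-enumeration argument already used in the proof of Proposition \ref{proposition2_case1_scenario1}, but rooted at a different initial state and with a crucial simplification in the delivery tail. When a seeker $s_{(i,j)}$ is matched to an idle courier, that courier is dispatched and immediately becomes a taker in state $t_{(i,j,1)}^1$, and the seeker is its \emph{first} order. I would therefore build the same state-transition diagram as in Figure \ref{fig:time_seeker}, except that the root node is $(i,j,1,1)$ rather than $(i,j,2,1)$, and I would track the elapsed time from the instant of dispatch until the seeker's order is dropped off. The key structural observation, which I expect to drive the whole computation, is that under the ``first-pickup-first-dropoff'' rule the seeker is the \emph{first} drop-off; hence once the courier has completed all pickups and begins delivery, the remaining time to deliver the seeker's order is exactly $t_{ij}$, with no trailing $t_j^c$ detour terms (contrast Propositions \ref{prop_t21}--\ref{prop_t22}, where the seeker is a later drop-off and thus incurs extra $t_j^c$ terms).

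Concretely, I would enumerate the four root-to-``deliver'' paths. (a) With probability $1-p_{t_{(i,j,1)}^1}$ the courier is \emph{not} matched again, so it picks up its single order and delivers it; the dwell time is $\tau_{t_{(i,j,1)}^1}$. (b) With probability $p_{t_{(i,j,1)}^1}$ it transitions to $(i,j,2,1)$, after which I would reproduce exactly the three sub-cases of Proposition \ref{proposition2_case1_scenario1}: matched with a third order before the first pickup (probability $p_{t_{(i,j,2)}^1}$, giving dwell $\tau_{t_{(i,j,2)}^1}+w_i^c+2t_i^c$ on the full-capacity tail), matched after the first pickup but before the second (probability $(1-p_{t_{(i,j,2)}^1})p_{t_{(i,j,2)}^2}$, giving $\tau_{t_{(i,j,2)}^1}+\tau_{t_{(i,j,2)}^2}+2t_i^c$), and never matched again (probability $(1-p_{t_{(i,j,2)}^1})(1-p_{t_{(i,j,2)}^2})$, giving $\tau_{t_{(i,j,2)}^1}+\tau_{t_{(i,j,2)}^2}$). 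Each branch gains an additive $\tau_{t_{(i,j,1)}^1}$ from the initial node. Assigning dwell times by the same rule \eqref{dwell_time} as before ($\tau$ on states that can still match, and $w_i^c$ or $t_i^c$ on the full-capacity pickups) and weighting each path by its probability, then appending the common $t_{ij}$, reproduces the four summands in the stated expression term by term.

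The main obstacle, and the step I would be most careful about, is the bookkeeping that distinguishes this case from Propositions \ref{proposition2_case1_scenario1}--\ref{prop_t22}: I must justify why the delivery tail collapses to a single $t_{ij}$ (seeker delivered first) and why the no-further-match branch terminates immediately after $\tau_{t_{(i,j,1)}^1}$ rather than picking up additional orders. The dwell-time conventions also require the memorylessness of $W_i^c$ and $T_i^c$ to be invoked consistently: in a state that can still match, the expected sojourn is $\tau$ (the expectation of $\min\{T_{\text{match}},\,W_i^c\}$ or $\min\{T_{\text{match}},\,T_i^c\}$), whereas once capacity three is reached matching stops and the residual pickups cost the full $w_i^c$ and $t_i^c$. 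Finally, I would verify that the four path probabilities sum to one, so that the shared $t_{ij}$ legitimately factors out of the weighted sum and appears exactly once, yielding the claimed formula for $t_{s_{(i,j)}}^{idle}$.
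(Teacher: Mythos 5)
Your proposal is correct and takes essentially the same approach as the paper: the paper's own proof also roots the Markov-chain path enumeration at state $(i,j,1,1)$ (its Figure \ref{fig:time_taker}), identifies the same four paths with identical probabilities and dwell times drawn from (\ref{dwell_time}), and appends the single $t_{ij}$ term because the seeker, being the first pickup, is the first drop-off. Your added check that the four path probabilities sum to one (so $t_{ij}$ factors out exactly once) and your explicit appeal to memorylessness to justify using $\tau_{t_{(i,j,n)}^m}$ as the conditional dwell time are sensible refinements of the same argument, not a different route.
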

\begin{proof}
When a seeker in state $s_{(i,j)}$ is matched to an idle courier, it would be the first order matched to this courier, and the courier then turns to a taker in state $t_{(i,j,1)}^1$. 
The movement of the taker from when she received the seeker's order to picking up all orders can be characterized by Figure \ref{fig:time_taker}.
\begin{figure}[H]
    \centering
    \includegraphics[width=0.64\linewidth]{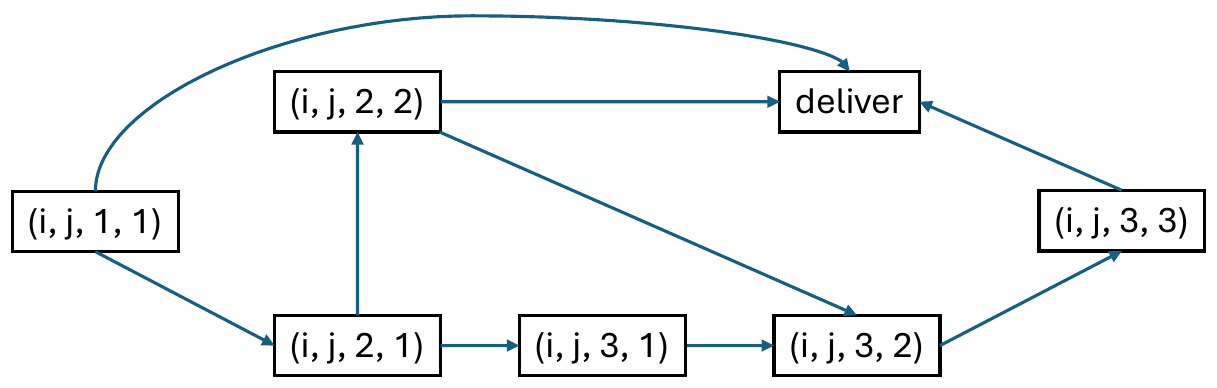}
    \caption{Movement of an idle courier from getting matched with the seeker to picking up all orders.}
    \label{fig:time_taker}
\end{figure}
Figure \ref{fig:time_taker} illustrates four possible cases of the courier's movement:
\begin{enumerate}
    \item $(i,j,1,1)\rightarrow (i,j,2,1) \rightarrow (i,j,3,1) \rightarrow (i,j,3,2)\rightarrow (i,j,3,3)\rightarrow \text{deliver}$: the courier received another two orders on the way to pick up the seeker.
    \item $(i,j,1,1)\rightarrow (i,j,2,1) \rightarrow (i,j,2,2) \rightarrow (i,j,3,2)\rightarrow (i,j,3,3)\rightarrow \text{deliver}$: the courier received the second order on the way to pick up the seeker (i.e., the first order), and the third one on the way to pick up the second order.
    \item $(i,j,1,1)\rightarrow (i,j,2,1) \rightarrow (i,j,2,2) \rightarrow \text{deliver}$: the courier received only one more order when she is on the way to pick up the seeker, and then begin to deliver.
    \item $(i,j,1,1) \rightarrow \text{deliver}$: the courier is not matched to other orders after being matched to the seeker.    
\end{enumerate}
Similar to Proposition \ref{proposition2_case1_scenario1}, to derive the expected delivery time for each case described, we can model the process shown in Figure \ref{fig:time_taker} as a Markov chain, where each path corresponds to one of the cases. The total expected delivery time for each case is calculated as the sum of the time spent at each node along the corresponding path. In this case, the dwell time at each node, denoted as $d_{(i,j,k_1,k_2)}$, satisfies (\ref{dwell_time}).

Case (i) occurs when the taker is matched to the second and third order before picking up the seekers order (i.e., first order), which has a probability of $p_{t_{(i,j,1)}^1}p_{t_{(i,j,2)}^1}$. In this case, the total delivery time is
\begin{equation}
d(i,j,1,1)+d(i,j,2,1)+d(i,j,3,1)+d(i,j,3,2)+d(i,j,3,3)=\tau_{t_{(i,j,1)}^1}+\tau_{t_{(i,j,2)}^1}+w_i^c+2t_i^c.    
\end{equation}
Case (ii) arises when the courier received the
second order on the way to pick up the seeker (i.e., the first order), and the third one on the way to
pick up the second order. This scenario occurs with a probability of $p_{t_{(i,j,1)}^1}(1-p_{t_{(i,j,2)}^1})p_{t_{(i,j,2)}^2}$, and the total delivery time is:
\begin{equation}
d(i,j,1,1)+d(i,j,2,1)+d(i,j,2,2)+d(i,j,3,2)+d(i,j,3,3)=\tau_{t_{(i,j,1)}^1}+\tau_{t_{(i,j,2)}^1}+\tau_{t_{(i,j,2)}^2}+2t_i^c.    
\end{equation}
Case (iii) arises when the courier received only one more order when she is on the way to pick up the seeker, which occurs with a probability of $p_{t_{(i,j,1)}^1}(1-p_{t_{(i,j,2)}^1})(1-p_{t_{(i,j,2)}^2})$, and the total delivery time is:
\begin{equation}
d(i,j,1,1)+d(i,j,2,1)+d(i,j,2,2)+d(i,j,3,2)+d(i,j,3,3)=\tau_{t_{(i,j,1)}^1}+\tau_{t_{(i,j,2)}^1}+\tau_{t_{(i,j,2)}^2}.    
\end{equation}
Based on the above discussion, the expected delivery time for a seeker who is matched to an idle driver can be derived as:
\begin{align}\nonumber
    t_{s_{(i,j)}}^{idle} = &p_{t_{(i,j,1)}^1}p_{t_{(i,j,2)}^1}(\tau_{t_{(i,j,1)}^1}+\tau_{t_{(i,j,2)}^1}+w_i^c+2t_i^c)+p_{t_{(i,j,1)}^1}(1-p_{t_{(i,j,2)}^1})p_{t_{(i,j,2)}^2}(\tau_{t_{(i,j,1)}^1}+\tau_{t_{(i,j,2)}^1}+\tau_{t_{(i,j,2)}^2}+2t_i^c)\\\nonumber
    &+p_{t_{(i,j,1)}^1}(1-p_{t_{(i,j,2)}^1})(1-p_{t_{(i,j,2)}^2})(\tau_{t_{(i,j,1)}^1}+\tau_{t_{(i,j,2)}^1}+\tau_{t_{(i,j,2)}^2})+
    (1-p_{t_{(i,j,1)}^1})\tau_{t_{(i,j,1)}^1} +t_{ij},
\end{align}
where the first four terms capture the average duration from when a seeker is matched with an idle courier until the courier picks up all orders corresponding to cases (i)-(iv) in Figure \ref{fig:time_taker},  and $t_{ij}$ captures the time from when the taker collects all orders in $i$ to dropping off the seeker order in $j$. This completes the proof. 
\end{proof}

The results of Propositions \ref{proposition2_case1_scenario1}-\ref{prop_idle} can be used to derive the expected delivery time. Let $w_{ij}^g$ denote the expected delivery time for an order delivered from origin $i$ to destination by ground delivery services, including the delivery from restaurant $i$ to customer location $j$, from restaurant $i$ to launchpad $j$, and from kiosk $i$ to customer location $j$, then we have
\begin{align} \label{ground_delivery_time}
    w_{ij}^g = p_{s_{(i,j)}}t_{s_{(i,j)}}^{taker} + (1-p_{s_{(i,j)}})t_{s_{(i,j)}}^{idle}.
\end{align}
Since the time required for a restaurant to prepare a meal depends on the supply characteristics and the total demand received by the restaurant, which are exogenously decided, we do not include it in the calculation of customer delivery time. Furthermore, we assume that the matching time is negligible compared to the pickup time, an assumption that is widely used in existing literature, as shown in studies such as \cite{ke2020pricing, castillo2017surge, ke2023supply}. Thus, the expected delivery time for an order with origin $i$ and destination $j$, when delivered by ground delivery services, has been derived as $w_{ij}^g$.

Let $w_{ilkj}^a$ denote the expected delivery time for an order with restaurant $i$ to customer location $j$ allocated to air delivery services using launchpad $l$ and kiosk $k$, then we have
\begin{align}\label{air_delivery_time_ilkj}
    w_{ilkj}^a = w_{il}^{g} + w_{l}^{a}+t_{lk}^a+ w_{kj}^{g},
\end{align}
where $ w_{il}^{g}$ and $ w_{kj}^{g}$ is the average delivery time from the origin $i$ to launchpad $l$ and from kiosk $k$ to destination $j$ by ground couriers respectively, $w_l^a$ is average queuing time at the launchpad derived (to be derived through the queuing model in the next section), and $t_{lk}^a$ is the average time required for a drone to travel from launchpad $l$ to kiosk $k$ based on the direct distance between the two nodes, which can be treated as an exogenous parameter.

Next, we analyze the average shared time for food-delivery orders. Shared time for each seeker refers to periods when the courier is simultaneously picking up or transporting this seeker along with other orders, leading to multiple orders sharing the same courier with this seeker for a designated duration. Understanding shared time is crucial as it helps quantify the actual distances traveled by couriers, which, in turn, influences the operational costs of the delivery platform. It is important to note that, under a courier capacity of three, in some cases each seeker will share with one another order, but in some other cases the seeker will share with another two orders simulations. It is important to distinguish these two cases, especially for the purpose of calculating the travel distance and operational cost. Let $t_{ij}^{s1}$/$t_{ij}^{s2}$ denote the average time for an order delivered from node $i$ to node $j$ shared with one additional order and two additional orders, respectively. We have the following proposition:
\begin{proposition}
    The average shared time for each order delivered from node $i$ to node $j$ shared with one additional order and two additional orders, respectively, can be characterized as:
\begin{align}\nonumber
    t_{ij}^{s1} = &\rho_{t_{(i,j,1)}^1}\Big[p_{t_{(i,j,2)}^1}(\tau_{t_{(i,j,2)}^1}+t_j^c)+(1-p_{t_{(i,j,2)}^1})p_{t_{(i,j,2)}^2}(\tau_{t_{(i,j,2)}^1}+\tau_{t_{(i,j,2)}^2}+t_j^c)+\\ \nonumber
    &(1-p_{t_{(i,j,2)}^1})(1-p_{t_{(i,j,2)}^2})(\tau_{t_{(i,j,2)}^1}+\tau_{t_{(i,j,2)}^2}+t_{ij})\Big] + \rho_{t_{(i,j,2)}^1}t_j^c+\rho_{t_{(i,j,2)}^2}t_j^c + \\ \nonumber
    & (1-p_{s_{(i,j)}})\Big[p_{t_{(i,j,1)}^1}p_{t_{(i,j,2)}^1}\tau_{t_{(i,j,2)}^1}+p_{t_{(i,j,1)}^1}(1-p_{t_{(i,j,2)}^1})p_{t_{(i,j,2)}^2}(\tau_{t_{(i,j,2)}^1}+\tau_{t_{(i,j,2)}^2})\\
    &+p_{t_{(i,j,1)}^1}(1-p_{t_{(i,j,2)}^1})(1-p_{t_{(i,j,2)}^2})(\tau_{t_{(i,j,2)}^1}+\tau_{t_{(i,j,2)}^2}+t_{ij})\Big] \label{eq:t_s1}\\ \nonumber
    t_{ij}^{s2} = &\rho_{t_{(i,j,1)}^1}\Big[p_{t_{(i,j,2)}^1}(w_i^c+2t_i^c+t_{ij})+(1-p_{t_{(i,j,2)}^1})p_{t_{(i,j,2)}^2}(2t_i^c+t_{ij})\Big] + \rho_{t_{(i,j,2)}^1}(w_i^c+2t_i^c+t_{ij})+\\ 
    &\rho_{t_{(i,j,2)}^2}(2t_i^c+t_{ij}) +  (1-p_{s_{(i,j)}})\Big[p_{t_{(i,j,1)}^1}p_{t_{(i,j,2)}^1}(w_i^c+2t_i^c+t_{ij})+p_{t_{(i,j,1)}^1}(1-p_{t_{(i,j,2)}^1})p_{t_{(i,j,2)}^2}(2t_i^c+t_{ij})\Big].\label{eq:t_s2}
\end{align}
\end{proposition}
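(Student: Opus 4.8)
The plan is to compute both $t_{ij}^{s1}$ and $t_{ij}^{s2}$ by a law of total expectation over the ``role'' that the focal order plays in its bundle, and then, within each role, to reuse the Markov-chain path decomposition already developed in Propositions \ref{proposition2_case1_scenario1} and \ref{prop_idle}. First I would observe that, exactly as in the proof of Proposition \ref{prop_existence_taker}, at most one taker for OD pair $ij$ can exist at any instant, so the four situations an arriving seeker may encounter---no taker present, a taker present in state $t_{(i,j,1)}^1$, in state $t_{(i,j,2)}^1$, or in state $t_{(i,j,2)}^2$---are mutually exclusive and exhaustive. Since order arrivals are Poisson, the PASTA property implies that an arriving order meets these four situations with probabilities $1-p_{s_{(i,j)}}$, $\rho_{t_{(i,j,1)}^1}$, $\rho_{t_{(i,j,2)}^1}$, and $\rho_{t_{(i,j,2)}^2}$, which sum to one by (\ref{eq:p_s}). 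These situations determine whether the focal order is the first, second, or third order in its bundle (matched respectively to an idle courier, to a taker $t_{(i,j,1)}^1$, or to a taker $t_{(i,j,2)}^1$/$t_{(i,j,2)}^2$), and the four weights in (\ref{eq:t_s1}) and (\ref{eq:t_s2}) are precisely these role probabilities.

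Next, for each role I would trace the courier's subsequent movement along the same state-transition diagrams used before---Figure \ref{fig:time_seeker} when the focal order is the second order, and Figure \ref{fig:time_taker} when it is the first order---reading off the identical path probabilities and dwell times $d_{(i,j,k_1,k_2)}$ given in (\ref{dwell_time}). The new ingredient is that, instead of summing \emph{all} dwell times along a path (which gave the delivery time), I would sum only those sub-intervals during which the focal order shares its courier with exactly one other order (to build $t_{ij}^{s1}$) or with exactly two other orders (to build $t_{ij}^{s2}$), where by the paper's convention sharing begins at the instant of joint assignment rather than physical pickup. Imposing first-pickup-first-dropoff fixes the delivery sequence within a bundle, so each sharing window is delimited by two kinds of events: when an additional order joins the bundle, and when an earlier order is dropped off. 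For the third order this bookkeeping is immediate---a terminal interval $t_j^c$ of sharing with exactly one, and the remainder sharing with two---yielding the $\rho_{t_{(i,j,2)}^1}t_j^c$, $\rho_{t_{(i,j,2)}^2}t_j^c$, $\rho_{t_{(i,j,2)}^1}(w_i^c+2t_i^c+t_{ij})$ and $\rho_{t_{(i,j,2)}^2}(2t_i^c+t_{ij})$ terms directly.

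The heart of the argument is the second-order and first-order roles, where I would enumerate the three (resp.\ four) paths of Figure \ref{fig:time_seeker} (resp.\ Figure \ref{fig:time_taker}) and, on each, split the dwell times at the instants the third order joins and the leading order departs. For the second order the sharing-with-one window runs from the match until the third order appears and then resumes---as sharing with the single remaining partner---for a duration $t_j^c$ after the leading order is dropped; for the first order it runs only up to the arrival of the third order, because first-pickup-first-dropoff makes the first order the earliest to leave the bundle, so it never shares with exactly one again. Carrying out this split path-by-path and weighting by the path probabilities $p_{t_{(i,j,1)}^1}$, $p_{t_{(i,j,2)}^1}$, $p_{t_{(i,j,2)}^2}$ and their complements reproduces the bracketed factors multiplying $\rho_{t_{(i,j,1)}^1}$ and $(1-p_{s_{(i,j)}})$ in (\ref{eq:t_s1}) and (\ref{eq:t_s2}); summing the four roles completes the derivation.

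I expect the main obstacle to be precisely this per-path accounting of the sharing windows---in particular keeping straight the asymmetry between the first order (dropped first, hence sharing with two others only briefly and never again) and the last order (dropped last, hence sharing with exactly one for the terminal $t_j^c$ interval), and verifying that each of the $\tau$, $w_i^c$, $t_i^c$, $t_{ij}$, and $t_j^c$ contributions lands in the correct one of $t_{ij}^{s1}$ versus $t_{ij}^{s2}$ on every path. Everything else---the role probabilities from PASTA, the path probabilities, and the dwell times---is inherited verbatim from the earlier propositions.
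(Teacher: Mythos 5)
Your proposal is correct and follows essentially the same route as the paper's proof: the same four-way decomposition by the focal order's role (matched to a taker in state $t_{(i,j,1)}^1$, $t_{(i,j,2)}^1$, $t_{(i,j,2)}^2$, or to an idle courier) with weights $\rho_{t_{(i,j,1)}^1}$, $\rho_{t_{(i,j,2)}^1}$, $\rho_{t_{(i,j,2)}^2}$, $1-p_{s_{(i,j)}}$, followed by the same path-by-path splitting of the dwell times from Figures \ref{fig:time_seeker} and \ref{fig:time_taker} into sharing-with-one versus sharing-with-two windows, including the first-order/last-order asymmetry under first-pickup-first-dropoff. Your explicit PASTA justification of the role probabilities is a small formalization the paper leaves implicit, but it does not change the argument.
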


\begin{proof}
Similar to the analysis of expected delivery time, we consider two distinct cases. The first cases is that the order is matched to a taker when it is at a seeker state $s_{(i,j)}$. In this case, there are three further scenarios, including (a) the order is matched to a taker in state $t_{(i,j,1)}^1$; (b) the order is matched to a taker in state $t_{(i,j,2)}^1$; and (c) the order is matched to a taker in state $t_{(i,j,2)}^2$. If no takers in these states are available, the order will then be matched to an idle courier and the courier turns into a taker, which represents the second case. Below we will characterize the shared time for each case.  

First, when a seeker in state $s_{(i,j)}$ is matched with a taker in state $t_{(i,j,1)}^1$, the taker will then turn into state $t_{(i,j,2)}^1$, and the movement of the taker has been illustrated in Figure \ref{fig:time_seeker}, with the following three scenarios.  The shared time for each scenario can be delineated as follows:
\begin{enumerate}
    \item $(i,j,2,1)\rightarrow (i,j,3,1) \rightarrow (i,j,3,2) \rightarrow (i,j,3,3)\rightarrow \text{deliver}$: When the courier is at state $(i,j,2,1)$ or en route to deliver the seeker after dropping off the first order, the seeker shares the courier with just one order.
    Thus, the average time shared by the seeker and a single order are $\tau_{t_{(i,j,2)}^1}+t_j^c$.
    When the courier is in states $(i,j,3,1)$, $(i,j,3,2)$, $(i,j,3,3)$, or on its way to deliver the first order, the seeker shares the courier with two other orders.
    Therefore, the average time shared by the seeker and two orders are $w_i^c+2t_i^c+t_{ij}$.
    \item $(i,j,2,1)\rightarrow (i,j,2,2) \rightarrow (i,j,3,2) \rightarrow (i,j,3,3)\rightarrow \text{deliver}$: The seeker shares the courier with only one order when the courier is in state $(i,j,2,1)$, $(i,j,2,2)$, or on the way to deliver the seeker after dropping off the first order.
    The average time shared by the seeker and one order is $\tau_{t_{(i,j,2)}^1}+\tau_{t_{(i,j,2)}^2}+t_j^c$.
    When the courier is in states $(i,j,3,2)$, $(i,j,3,3)$, or on its way to deliver the first order, the seeker shares the courier with two other orders. Consequently, the average shared time between the seeker and two orders is $2t_i^c+t_{ij}$.
    \item $(i,j,2,1)\rightarrow (i,j,2,2) \rightarrow \text{deliver}$:   Given a bundle size of two, the seeker can only share the courier with a single order. This sharing occurs when the courier is at state $(i,j,2,1)$, $(i,j,2,2)$, or en route to deliver the first order. Therefore, the average shared time between the seeker and a single order is $\tau_{t_{(i,j,2)}^1}+\tau_{t_{(i,j,2)}^2}+t_{ij}$, while the average shared time between the seeker and two orders is 0.
\end{enumerate}
When the seeker is matched to taker $t_{(i,j,2)}^1$, it is the last to be picked up and the last to be dropped off. 
The average time shared by the seeker and one order is $t_j^c$, which occurs only when the courier is dropping off the second order after delivering the first.
When the courier is in state $(i,j,3,1)$, $(i,j,3,2)$, $(i,j,3,3)$, or on the way to deliver the first order, it is shared by the seeker and two other orders.
Thus the average time shared by the seeker and two orders is $w_i^c+2t_i^c+t_{ij}$.

When the seeker is matched to taker $t_{(i,j,2)}^2$, similar to the matching with taker $t_{(i,j,2)}^1$, the average time shared by the seeker and one order is $t_j^c$.
The seeker share the courier with other two orders when the courier is in state $(i,j,3,2)$, $(i,j,3,3)$ or en route to drop off the first order,
then the average time shared by the seeker and two orders is $2t_i^c+t_{ij}$.

Finally, we consider the case when the seeker in state $s_{(i,j)}$ is dispatched to an idle courier and turns into a taker state $t_{(i,j,1)}^1$, as shown in Figure \ref{fig:time_taker}.  In this case, the seeker is the first to be both picked up and dropped off. The time spent by the taker with one bundled order and two bundled orders can be  delineated as follows:
\begin{enumerate}
    \item $(i,j,1,1)\rightarrow (i,j,2,1) \rightarrow (i,j,3,1) \rightarrow (i,j,3,2)\rightarrow (i,j,3,3)\rightarrow \text{deliver}$:  As the seeker share the delivery time with a single order only when the courier is in state $(i,j,2,1)$,
    the average time shared by the seeker and one order is $\tau_{t_{(i,j,2)}^1}$.
    If the courier is at state $(i,j,3,1)$, $(i,j,3,2)$, $(i,j,3,3)$ or on the way to deliver the seeker, the seeker share the time with two other orders.
    The average time shared by the seeker and two bundled orders is $w_i^c+2t_i^c+t_{ij}$.
    
    \item $(i,j,1,1)\rightarrow (i,j,2,1) \rightarrow (i,j,2,2) \rightarrow (i,j,3,2)\rightarrow (i,j,3,3)\rightarrow \text{deliver}$:  When the courier is at state $(i,j,2,1)$ or $(i,j,2,2)$, it is shared by the seeker and another order, thus the average time shared by the seeker and one order is $\tau_{t_{(i,j,2)}^1}+\tau_{t_{(i,j,2)}^2}$.
    When the courier is at state $(i,j,3,2)$, $(i,j,3,3)$ or en route to deliver the seeker after collecting all orders, it is shared by the seeker and other two orders.
    The average time shared by the seeker and two orders is $2t_i^c+t_{ij}$.
    
    \item $(i,j,1,1)\rightarrow (i,j,2,1) \rightarrow (i,j,2,2) \rightarrow \text{deliver}$: Given that only two orders are dispatched to the couriers before starting delivery,  the seeker can only share the delivery time with another single order.
    The average time shared by the seeker and one order is $\tau_{t_{(i,j,2)}^1}+\tau_{t_{(i,j,2)}^2}$, and the average time shared by the seeker and two orders is $0$.
    \item $(i,j,1,1) \rightarrow \text{deliver}$: In this case, the courier does not receive any other orders before picking up the seeker, resulting in an average time shared by the seeker and other orders of $0$.  
\end{enumerate}
Based on the aforementioned discussion, we can combine all the occasions where the seeker have shared time with a single order to derive $t_{ij}^{s1}$, and combine all the occasions where the seeker have shared time with two other orders simultaneously to derive $t_{ij}^{s2}$, as follows:
\begin{align}\nonumber
    t_{ij}^{s1} = &\rho_{t_{(i,j,1)}^1}\Big[p_{t_{(i,j,2)}^1}(\tau_{t_{(i,j,2)}^1}+t_j^c)+(1-p_{t_{(i,j,2)}^1})p_{t_{(i,j,2)}^2}(\tau_{t_{(i,j,2)}^1}+\tau_{t_{(i,j,2)}^2}+t_j^c)+\\ \nonumber
    &(1-p_{t_{(i,j,2)}^1})(1-p_{t_{(i,j,2)}^2})(\tau_{t_{(i,j,2)}^1}+\tau_{t_{(i,j,2)}^2}+t_{ij})\Big] + \rho_{t_{(i,j,2)}^1}t_j^c+\rho_{t_{(i,j,2)}^2}t_j^c + \\ \nonumber
    & (1-p_{s_{(i,j)}})\Big[p_{t_{(i,j,1)}^1}p_{t_{(i,j,2)}^1}\tau_{t_{(i,j,2)}^1}+p_{t_{(i,j,1)}^1}(1-p_{t_{(i,j,2)}^1})p_{t_{(i,j,2)}^2}(\tau_{t_{(i,j,2)}^1}+\tau_{t_{(i,j,2)}^2})\\
    &+p_{t_{(i,j,1)}^1}(1-p_{t_{(i,j,2)}^1})(1-p_{t_{(i,j,2)}^2})(\tau_{t_{(i,j,2)}^1}+\tau_{t_{(i,j,2)}^2}+t_{ij})\Big]\label{eq:t_s1}\\ \nonumber
    t_{ij}^{s2} = &\rho_{t_{(i,j,1)}^1}\Big[p_{t_{(i,j,2)}^1}(w_i^c+2t_i^c+t_{ij})+(1-p_{t_{(i,j,2)}^1})p_{t_{(i,j,2)}^2}(2t_i^c+t_{ij})\Big] + \rho_{t_{(i,j,2)}^1}(w_i^c+2t_i^c+t_{ij})+\\ 
    &\rho_{t_{(i,j,2)}^2}(2t_i^c+t_{ij}) +  (1-p_{s_{(i,j)}})\Big[p_{t_{(i,j,1)}^1}p_{t_{(i,j,2)}^1}(w_i^c+2t_i^c+t_{ij})+p_{t_{(i,j,1)}^1}(1-p_{t_{(i,j,2)}^1})p_{t_{(i,j,2)}^2}(2t_i^c+t_{ij})\Big],\label{eq:t_s2}
\end{align}
where the first three terms in (\ref{eq:t_s1}) and (\ref{eq:t_s2}) capture the average shared time for a seeker matched to a taker in state $t_{(i,j,1)}^1$, $t_{(i,j,2)}^1$ and $t_{(i,j,2)}^2$ with probabilities $\rho_{t_{(i,j,1)}^1}$, $\rho_{t_{(i,j,2)}^1}$ and $\rho_{t_{(i,j,2)}^2}$, while the last term in  (\ref{eq:t_s1}) and (\ref{eq:t_s2}) capture the average shared time for a seeker matched to an idle courier with probability $1-p_{s_{(i,j)}}$. This completes the proof. 
\end{proof}

\subsection{Matching between Orders and Drones}

In this subsection, we evaluate the average waiting time for air delivery orders at each launchpad which depends on the matching process between orders and drones.
We employ a double-ended queue \cite{srivastava1982special} to characterize the matching between food-delivery orders and drones at the launchpad. 
In the double-ended queue, if there is no idle drone at the launchpad,  incoming food-delivery orders have to wait in the queue for an idle drone to provide services. Conversely, if there is no food-delivery orders at the launchpad, the idle drones will queue up at the launchpad awaiting new food-delivery orders.
Once the food-delivery orders and idle drones exists at the launchpad simultaneously, the drone will be loaded with the order and takes off for delivery.
The queuing process at the launchpad is illustrated in Figure \ref{fig:queue}.
For simplicity and stability of the double-ended queue,
we have the following assumption:
\begin{assumption}\label{assump:launchpad_queue}
We make the following assumptions regarding the queuing process at each launchpad:
\begin{enumerate}
    \item The capacity of food-delivery orders waiting at the launchpad is limited to $M$. Once this capacity is reached, any additional incoming orders must be serviced by ground delivery.

    \item The arrival rate of food-delivery orders at each launchpad exceeds the arrival rate of drones, specifically $\nu_l^o > \nu_l^d$, ensuring that the average drone queue length remains finite.

    \item The loading time of the orders is negligible compared with the flight time.
\end{enumerate}
\end{assumption}

\begin{figure}[htb!]
    \centering
    \includegraphics[width=0.7\linewidth]{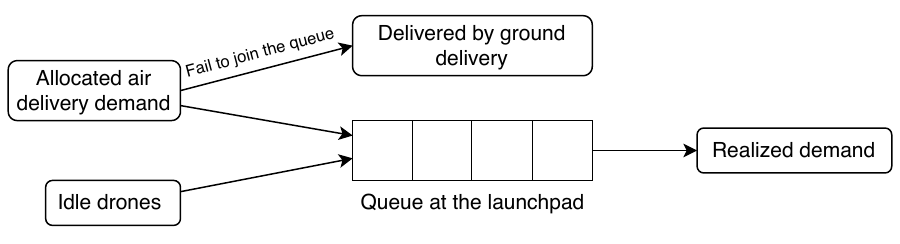}
    \caption{The double-ended queuing process at the launchpad.}
    \label{fig:queue}
\end{figure}

Assumption \ref{assump:launchpad_queue}-(i) prevents the orders from experiencing prolonged waiting times caused by excessively long queues, which is crucial for ensuring the service quality of air delivery services.
Assumptions \ref{assump:launchpad_queue}-(ii) is a mild constraint to guarantee the stability of the double-ended queue, while Assumption \ref{assump:launchpad_queue}-(iii) is also a reasonable assumption since the loading time is short and exogenous and has negligible effects on the entire food-delivery process.  

Assume the arrival of food-delivery orders at the launchpad $l$ is Poisson with rate $\nu_l^o$, and the arrival rate of drones at the launchpad $l$ is Poisson with rate $\nu_l^d$. Let $\tilde \lambda_{kl}^a$ denote the drone repositioning flow from kiosk $k$  to launchpad $l$, and let $\hat\lambda_{ilkj}^a$ denote the intended allocation of food-delivery orders using drone delivery services determined by the platform.
We have
\begin{align} \label{arrival_rate_orders_l}
    &\nu_l^o = \sum_{i\in\mathcal{R}}\sum_{k\in\mathcal{K}}\sum_{j\in\mathcal{C}} \hat \lambda_{ilkj}^a,\\ &\label{arrival_rate_drone_l}
    \nu_l^d = \sum_{k\in\mathcal{K}} \tilde \lambda_{kl}^a,\\
    &\sum_{k\in\mathcal{K}} \tilde \lambda_{kl}^a = \sum_{i\in\mathcal{R}}\sum_{k\in\mathcal{K}}\sum_{j\in\mathcal{C}}\lambda_{ilkj}^a.\label{eq:drone_conservation_l}
\end{align}

\begin{remark}
Note that $\lambda_{ilkj}^a$ is the realized arrival rate of food-delivery orders allocated to drone delivery services with launchpad $l$ and kiosk $k$, while $\hat \lambda_{ilkj}^a\le \lambda_{ij}$ is the intended allocation to drone delivery by the platform without considering the possibility of being rejected at the launchpad due to limited order queue length. It is possible for $\hat \lambda_{ilkj}^a$ to be larger than $\lambda_{ilkj}^a$ since if the queue at a launchpad is excessively long (i.e. reaching $M$), newly assigned orders to that launchpad must be redirected for ground delivery instead.
\end{remark}

Let $n\in [-M,+\infty]\cap \mathbb{R}$ be the length of the queue at the launchpad $l$, where $n < 0$ implies that there are $-n$ food-delivery orders waiting at the launchpad, and $n>0$ implies that there are $n$ drones waiting at the launchpad. Let $P_n^l$ be the probability that the queue length is $n$ at launchpad $l$, and let $\gamma_l = \frac{\nu_l^d}{\nu_l^o} < 1$ be the reliability level of services at the launchpad. Following the properties of the double-ended queue \cite{srivastava1982special}, we have the following relation at the stationary state:
\begin{align}\label{eq:prob_queue_length_l}
    P_n^l = (1-\gamma_l)\gamma_l^{M+n}, \quad\forall l\in\mathcal{L}, \forall n \in [-M,+\infty]\cap \mathbb{R}
\end{align}

Therefore, the average number of drones/orders waiting at launchpad $l$ can be derived as:
\begin{align}
    \bar N_l^d &= \sum_{n=1}^{\infty} n(1-\gamma_l)\gamma_l^{M+n} = \frac{\gamma_l^{M+1}}{1-\gamma_l}, \label{eq:num_drones_l} \\
    \bar N_l^o &= \sum_{n=1}^{M} n(1-\gamma_l)\gamma_l^{M-n} = M-\frac{\gamma_l(1-\gamma_l^M)}{1-\gamma_l}.\label{eq:num_orders_l}
\end{align}

Based on (\ref{eq:prob_queue_length_l}), the probability for the launchpad to have $M$ food-delivery orders waiting at stationary state is:
\begin{align}\label{eq:prob_M_l}
    P_{-M}^l = 1-\gamma_l.
\end{align}

Thus the arrival rate of orders with origin $i$ and destination $j$ that successfully join the queue at launchpad $l$ and are delivered by air delivery services to kiosk $k$ is:
\begin{align}\label{eq:realized_demand_flow}
    \lambda_{ilkj}^{a} = \hat \lambda_{ilkj}^a(1-P_{-M}^l) = \hat \lambda_{ilkj}^{a} \gamma_l = \hat \lambda_{ilkj}^a\frac{\nu_l^d}{\nu_l^o}
\end{align}

By Little's law, the average time from an order being delivered to launchpad $i$ to it being sent out from the launchpad on a drone, denoted by $w_l^a$, is
\begin{align} \label{waiting_time_l}
    w_l^a = \frac{\bar N_l^o}{\sum_{i\in\mathcal{R}}\sum_{k\in\mathcal{L}}\sum_{j\in\mathcal{C}}\lambda_{ilkj}^a} 
\end{align}

Note that (\ref{eq:realized_demand_flow}) is a bilinear constraint, which is nonconvex and challenging to address in an optimization problem.
To linearize (\ref{eq:realized_demand_flow}), we discretize $\gamma_l$ into several reliability levels $(\gamma_r)_{r\in\mathcal{RL}}$, e.g., $\gamma_1 = 0,\ \gamma_2 = 0.1,\dots,\gamma_{10} = 0.9$, $\mathcal{RL} = \{1,2,\dots,10\}$. Let the binary variable $\omega_{lr}$ indicate which reliability level the launchpad $l$ is in, where
\begin{align}
    \omega_{lr} = \begin{cases}
        1, & \gamma_l = \gamma_r\\
        0, & \text{otherwise}
    \end{cases}.
\end{align}
Thus, equations (\ref{eq:num_drones_l})-(\ref{eq:realized_demand_flow}) can be approximated by the following linear equations:
\begin{align} \label{eq:index_rl_sum_1}
    &\sum_{r\in\mathcal{RL}} \omega_{lr} = 1,\\  \label{eq:gamma_l_obtain}
    &\bar N_l^d  = \sum_{r\in\mathcal{RL}}\omega_{lr}\frac{\gamma_r^{M+1}}{1-\gamma_r}, \\
    &\bar N_l^o =  M-\sum_{r\in\mathcal{RL}}\frac{\omega_{lr}\gamma_r(1-\gamma_r^M)}{1-\gamma_l}\\
    \label{eq:lb_big_M_gamma_linearze}
    & \lambda_{ilkj}^a \ge \gamma_r\hat \lambda_{ilkj}^a - (1-\omega_{lr})M_{\gamma},\quad\forall r\in\mathcal{RL}\\
    &\lambda_{ilkj}^a\le \gamma_r\hat \lambda_{ilkj}^a + (1-\omega_{lr})M_{\gamma},\quad\forall r\in\mathcal{RL} \label{eq:ub_big_M_gamma_linearze} 
\end{align}
where $M_{\gamma}$ is a sufficiently large number.

\subsection{Couriers Fleet Size}
Each courier may operate in one of the three modes: (1) dispatched couriers either on the way to pick up or deliver the food-delivery orders, (2) idle couriers cruising within some zone for new food-delivery orders, and (3) idle couriers repositioning to other zones for new food-deliver orders.
For simplicity, we assume that driver repositioning decisions are managed by the food-delivery platform. This assumption is rational as the platform can efficiently communicate incentive details to idle couriers, motivating them to relocate to zones with better earning prospects. 
Recall that in section \ref{sec:bundle_prob}, we have divided the transportation network into several zones denoted by the set $\mathcal{Z}$, with $Z(i)$ representing the corresponding zone that node $i$ is in.
Let $q_{Z_1Z_2}^{I,r}$ denote the flow of idle drivers repositioning from zone $Z_1$ to $Z_2$ per unit of time, then in the steady-state, we have the following courier flow conservation constraint for each zone $Z_1$:
\begin{align}
    \sum_{Z_2\in\mathcal{Z}}q_{Z_1Z_2}^{I,r} + \sum_{i\in\{v:Z(v)=Z_1\}}\sum_{j\in\mathcal{L}\cup\mathcal{C}}\lambda_{ij}^{g,all} = \sum_{Z_2\in\mathcal{Z}}q_{Z_2Z_1}^{I,r} + \sum_{i\in\mathcal{R}\cup\mathcal{K}}\sum_{j\in\{v:Z(v)=Z_1\}}\lambda_{ij}^{g,all},
\end{align}
where the left-hand side is the total outflow that originates from zone $Z_1$, and the right-hand side is the total inflow that are destined to zone $Z_1$. 
Let $t_{Z_1Z_2}^r$ be the average repositioning time for idle couriers from zone $Z_1$ to zone $Z_2$, and let $N^r$ be the average number of repositioning couriers. Based on Little's Law,  we have:
\begin{align}
    N^r = \sum_{Z_1\in\mathcal{Z}}\sum_{Z_2\in\mathcal{Z}}q_{Z_1Z_2}^{I,r}t_{Z_1Z_2}^r.
\end{align}

Let $N^c$ be the average number of occupied couriers that are picking up or delivering food-delivery orders, which equals to the total delivery time of food-delivery orders delivered by couriers from each OD pair minus the shared time caused by order bundling which was counted twice or three times in the total delivery time. {Based on Little's law, the total delivery time for food-delivery orders can be derived as $\sum_{i\in\mathcal{R}\cup\mathcal{K}}\sum_{j\in\mathcal{L}\cup\mathcal{C}}\lambda_{ij}^{g,all}w_{ij}^g$. However, since two or more orders may share the same courier at the same time, this term is larger than the actual number of occupied courier on the platform. To derive the number of couriers, these repetition due to shared time should be deducted from to $\sum_{i\in\mathcal{R}\cup\mathcal{K}}\sum_{j\in\mathcal{L}\cup\mathcal{C}}\lambda_{ij}^{g,all}w_{ij}^g$. In particular, we have the following relation:
\begin{align}
\label{courier_number}
    N^c  =\sum_{i\in\mathcal{R}\cup\mathcal{K}}\sum_{j\in\mathcal{L}\cup\mathcal{C}}\lambda_{ij}^{g,all}w_{ij}^g-\sum_{i\in\mathcal{R}\cup\mathcal{K}}\sum_{j\in\mathcal{L}\cup\mathcal{C}}\left(\frac{1}{2} \lambda_{ij}^{g,all}t_{ij}^{s1}+\frac{2}{3} \lambda_{ij}^{g,all}t_{ij}^{s_2}\right).
\end{align}
where the term 
$\sum_{i \in \mathcal{R} \cup \mathcal{K}} \sum_{j \in \mathcal{L} \cup \mathcal{C}} \lambda_{ij}^{g,\text{all}} t_{ij}^{s1}$
is the summation of shared time over all orders that have shared with another (single) order. Since the shared time is included for each of the two orders, it is actually twice the corresponding courier serving time. Similarly, the term 
$\sum_{i \in \mathcal{R} \cup \mathcal{K}} \sum_{j \in \mathcal{L} \cup \mathcal{C}} \lambda_{ij}^{g,\text{all}} t_{ij}^{s2}$ 
is the summation of shared time over all orders that have shared with two other orders simultaneously, which is three times the corresponding courier serving time. Therefore, the second term on the right-hand side of Equation~(\ref{courier_number}), 
$
\sum_{i \in \mathcal{R} \cup \mathcal{K}} \sum_{j \in \mathcal{L} \cup \mathcal{C}} \left( \frac{1}{2} \lambda_{ij}^{g,\text{all}} t_{ij}^{s1} + \frac{2}{3} \lambda_{ij}^{g,\text{all}} t_{ij}^{s2} \right),
$ 
is exactly the shared time that has been over-counted. }

Based on the above discussion, the total number of couriers should satisfy:
\begin{align} \nonumber
     N &= \sum_{Z\in\mathcal{Z}}N_Z^I + N^c + N^r \\
     &= \sum_{Z\in\mathcal{Z}} N_Z^I+\sum_{i\in\mathcal{R}\cup\mathcal{K}}\sum_{j\in\mathcal{L}\cup\mathcal{C}}\lambda_{ij}^{g,all}w_{ij}^g-\sum_{i\in\mathcal{R}\cup\mathcal{K}}\sum_{j\in\mathcal{L}\cup\mathcal{C}}\left(\frac{1}{2} \lambda_{ij}^{g,all}t_{ij}^{s1}+\frac{2}{3} \lambda_{ij}^{g,all}t_{ij}^{s_2}\right) + \sum_{Z_1\in\mathcal{Z}}\sum_{Z_2\in\mathcal{Z}}q_{Z_1Z_2}^{I,r}t_{Z_1Z_2}^r.\label{num_drivers}
\end{align}

\subsection{Drone Fleet Size}
Each drone may also operate in three distinct modes: (1) waiting at the launchpad, (2) delivering food-delivery orders from a launchpad to a kiosk, and (3) relocating to a new launchpad after dropping off the order at a kiosk.
Consequently, the total number of drones operated by the platform (denoted by $N^a$) can be characterized by the following conservation constraint:
\begin{align}  \label{num_drones}
    N^a = \sum_{l\in\mathcal{L}} \bar N_l^{d,I} + \sum_{i\in\mathcal{R}}\sum_{l\in\mathcal{L}}\sum_{k\in\mathcal{K}}\sum_{j\in\mathcal{C}} \lambda_{ilkj}^at_{lk}^a +\sum_{k\in\mathcal{K}\cup \mathcal{L}}\sum_{l\in\mathcal{L}}t_{kl}^a\tilde\lambda_{kl}^a.
\end{align}
On the ride-hand side of (\ref{num_drones}), the first term is the number of idle drones at launchpads, the second term is the number of drones carrying a package and flying to the kiosk,  and the last term is the number of drones relocating to a new launchpad after it drops off the package at the kiosk.
Since the drones can not stop at the kiosk but have to return to a launchpad after it finishes the delivery task, we have the following flow-conservation constraints at kiosks:
\begin{align}
    \sum_{i\in \mathcal{R}}\sum_{l\in \mathcal{L}}\sum_{j\in\mathcal{C}}\lambda_{ilkj}^a = \sum_{l\in\mathcal{L}} \tilde \lambda_{kl}^a, \quad \forall k\in\mathcal{K}
\end{align}
which dictates that the flows that arrives at each kiosk must be equal to the repositioning flows out of this kiosk. 
In addition, due to the limitation of the battery capacity, we restrict the maximum allowable flight time for drone delivery and relocation, denoted by $\bar t_d$ and $\bar t_r$, by the following constraints:
\begin{align}
    &\lambda_{ilkj}^a(t_{lk}^a-\bar t_d)\le 0,\quad \forall i\in\mathcal{R}, j\in\mathcal{C},l\in\mathcal{L},k\in\mathcal{K}\\
    &\tilde \lambda_{kl}^a(t_{lk}^a-\bar t_r)\le 0, \quad \forall l\in\mathcal{L}, k\in\mathcal{K}
\end{align}

\subsection{Platform Decisions}
The food-delivery platform determines the location of launchpads $y_l$ ($l\in\mathcal{L})$, the location of kiosks $z_k$ ($k\in\mathcal{C}$),
the allocation of demand to air delivery mode $\hat \lambda_{ilkj}^a$, the repositioning of drones from kiosks to launchpads $\tilde \lambda_{kl}^a$,  the repositioning flow of idle couriers from a zone to another ($q_{Z_1Z_2}^{I,r}$) and the number of idle drivers in each zone $N_Z^I$, to minimize an objective function that represents a trade-off of the operation costs and delivery time. 
%Let $C_a^o$ be the operating costs for a drone carrying an order per unit of time, and $\bar C_a^o$ be the operating costs for an empty drone per unit of time. 
Let $C_L^l$/$C_K^k$ be the average construction costs for a launchpad/kiosk per unit of time,  and $C^a$ be the average investment and operation costs of drones per unit of time.
Let $q$ be the average wage paid to couriers per unit of time.
The profit maximization problem of the food-delivery platform can be then formulated as follows:
\newpage
\begin{align} 
    \min_{\bm{y},\bm{z},\bm{\hat \lambda}^a,\bm{\tilde\lambda}^a,\bm{q}^{I,r},\bm{N}^I}\quad\Pi = &\sum_{l\in \mathcal{L}} C_L^ly_l + \sum_{k\in \mathcal{C}} C_K^kz_k + %C_a^o\sum_{i\in\mathcal{R}}\sum_{l\in\mathcal{L}}\sum_{k\in\mathcal{K}}\sum_{j\in\mathcal{C}} \lambda_{ilkj}^at_{lk}^a + \tilde C_a^o\sum_{k\in\mathcal{K}}\sum_{l\in\mathcal{L}}t_{kl}^a\tilde\lambda_{kl}^a  + \\
    qN+C^aN^a +\alpha_w\frac{\sum_{i\in\mathcal{R}}\sum_{l\in\mathcal{L}}\sum_{k\in\mathcal{K}}\sum_{j\in\mathcal{C}}\lambda_{ilkj}^aw_{ilkj}^a+\lambda_{ij}^gw_{ij}^g}{\sum_{i\in\mathcal{R}}\sum_{j\in\mathcal{C}}\lambda_{ij}} \label{objective_function}
\end{align}
\begin{subnumcases}{\label{constraint}}
   % \lambda_i = \lambda_i^0F_c(c_i),\quad i\in\mathcal{C} \\
    %\lambda_{ij} = \lambda_j\frac{e^{-\theta c_{ij}}}{\sum_{k\in V_i} e^{-\theta c_{kj}}},\quad \forall i\in\mathcal{R},j\in\mathcal{C} \\
    \lambda_{ij} = \sum_{l\in\mathcal{L}}\sum_{k\in\mathcal{K}} \lambda_{ilkj}^a+\lambda_{ij}^g,\quad\forall i\in\mathcal{R}, j\in\mathcal{C} \label{constraint_demand_allocation}\\
    \sum_{i\in\mathcal{R}}\sum_{j\in\mathcal{C}}\sum_{k\in\mathcal{K}}\hat\lambda^a_{ilkj}\le y_lM_L^l,\quad \forall l\in\mathcal{L} \label{constraint_cap_allocation_l} \\
    \sum_{i\in\mathcal{R}}\sum_{l\in\mathcal{L}}\sum_{j\in\mathcal{C}}\hat \lambda^a_{ilkj}\le z_kM_K^k,\quad \forall  k\in\mathcal{K}\label{constraint_cap_allocation_k} \\
     \lambda_{ij}^{g,all} = 
        \lambda_{ij}^g+b_1\sum_{k\in\mathcal{K}} \sum_{j'\in\mathcal{C}}\lambda_{ijkj'}^a+b_2\sum_{i'\in\mathcal{R}}\sum_{l\in\mathcal{L}}\lambda_{i'lij}^a \label{constraint_all_ground_delivery}\\
     \sum_{i\in \mathcal{R}}\sum_{l\in \mathcal{L}}\sum_{j\in\mathcal{C}}\lambda_{ilkj}^a = \sum_{l\in\mathcal{L}} \tilde \lambda_{kl}^a, \quad \forall k\in\mathcal{K} \label{constraint_flow_conservation_k} \\
     \sum_{k\in\mathcal{K}} \tilde \lambda_{kl}^a = \sum_{i\in\mathcal{R}}\sum_{k\in\mathcal{K}}\sum_{j\in\mathcal{C}}\lambda_{ilkj}^a ,\quad\forall l\in\mathcal{L}\label{constraint_flow_conservation_l}\\
     \sum_{k\in\mathcal{K}}\tilde\lambda_{kl}^a = \nu_l^d,\quad \forall l\in\mathcal{L} \label{constraint_order_flow_conservation_l_drone}\\
     \sum_{i\in\mathcal{R}}\sum_{k\in\mathcal{K}}\sum_{j\in\mathcal{C}}\hat\lambda_{ilkj}^a = \nu_l^0,\quad\forall l\in\mathcal{L}\label{constraint_order_flow_conservation_l_order}\\
      \label{constraint_sum_omega_1_linearize}
    \sum_{r\in\mathcal{RL}} \omega_{lr} = 1,\quad\forall l\in\mathcal{L}\\ 
    \label{constraint_lb_big_M_gamma_linearze}
     \lambda_{ilkj}^a \ge \gamma_r\hat \lambda_{ilkj}^a - (1-\omega_{lr})M_{\gamma},\quad\forall r\in\mathcal{RL}\\
    \lambda_{ilkj}^a\le \gamma_r\hat \lambda_{ilkj}^a + (1-\omega_{lr})M_{\gamma},\quad\forall r\in\mathcal{RL} \label{constraint_ub_big_M_gamma_linearze}\\
     \bar N_l^d = \sum_{r\in\mathcal{RL}}\frac{\gamma_r^{M+1}}{1-\gamma_r}\omega_{lr},\quad\forall l\in\mathcal{L} \label{constraint_num_drone_l}\\
     \bar N_l^o = M-\sum_{r\in\mathcal{RL}}\omega_{lr}\frac{\gamma_r(1-\gamma_r^M)}{1-\gamma_r} \label{constraint_num_order_l}\\
      w_l^a = \frac{ \bar N_l^o}{\sum_{i\in\mathcal{R}}\sum_{k\in\mathcal{K}}\sum_{j\in\mathcal{C}}\lambda_{ilkj}^a},\quad\forall l\in\mathcal{L} \label{constraint_wait_time_l}\\
    \lambda_{ilkj}^a(t_{lk}^a-\bar t_d)\le 0,\quad \forall i\in\mathcal{R}, j\in\mathcal{C},l\in\mathcal{L},  k\in\mathcal{K}\label{constraint_battery_delivery}\\
    \tilde \lambda_{kl}^a(t_{lk}^a-\bar t_r)\le 0, \quad \forall l\in\mathcal{L}, k\in\mathcal{K}\cup\mathcal{L}    \label{constraint_battery_return}  \\
      (\ref{eq:pickup_time})-(\ref{eq:t_s2}) \label{constraint_delivery_share_time} \\
      %w_k^a\le \bar w_{k,max}^a, \quad \forall k\in\mathcal{K} \\
      N =  \sum_{Z\in\mathcal{Z}} N_Z^I+\sum_{i}\sum_{j}\lambda_{ij}^{g,all}w_{ij}^g-\sum_{i}\sum_{j}\left(\frac{1}{2} \lambda_{ij}^{g,all}t_{ij}^{s1}+\frac{2}{3} \lambda_{ij}^{g,all}t_{ij}^{s_2}\right) + \sum_{Z_1\in\mathcal{Z}}\sum_{Z_2\in\mathcal{Z}}q_{Z_1Z_2}^{I,r}t_{Z_1Z_2}^r \label{constraint_num_courier} \\
      \sum_{Z_2\in\mathcal{Z}}q_{Z_1Z_2}^{I,r} + \sum_{Z(i)=Z_1}\sum_{j\in\mathcal{L}\cup\mathcal{C}}\lambda_{ij}^{g,all} = \sum_{Z_2\in\mathcal{Z}}q_{Z_2Z_1}^{I,r} + \sum_{i\in\mathcal{R}\cup\mathcal{K}}\sum_{Z(j)=Z_1}\lambda_{ij}^{g,all},\quad \forall Z_1\in\mathcal{Z} \label{constraint_courier_conservation}\\
      N^a = \sum_{l\in\mathcal{L}} \bar N_l^{d,I} + \sum_{i\in\mathcal{R}}\sum_{l\in\mathcal{L}}\sum_{k\in\mathcal{K}}\sum_{j\in\mathcal{C}} \lambda_{ilkj}^at_{lk}^a+\sum_{k\in\mathcal{K}\cup\mathcal{L}}\sum_{l\in\mathcal{L}}t_{kl}^a\tilde\lambda_{kl}^a \label{constraint_num_drones} \\
      y_l,z_k \in \{0,1\},\quad \forall i,\in \mathcal{R},k\in\mathcal{K} \\
      \omega_{lr}\in\{0,1\},\quad\forall l\in\mathcal{L},r\in\mathcal{RL}\\
      \bm{\hat \lambda}^a,\bm{\tilde\lambda}^a,\bm{q}^{I,r},\bm{N}^I\ge 0\label{constraint_positive}
\end{subnumcases}
The objective function (\ref{objective_function}) balances operational costs against service quality, which is measured by the average delivery time. It incorporates launchpad construction costs $C_L^ly_l$, kiosk construction costs $C_K^kz_k$, total wages paid to the couriers $qN$, drone investment and operation costs $C^aN^a$, and the average delivery time multiplied by the trade-off parameter $\alpha_w\frac{\sum_{i\in\mathcal{R}}\sum_{l\in\mathcal{L}}\sum_{k\in\mathcal{K}}\sum_{j\in\mathcal{C}}\lambda_{ilkj}^aw_{ilkj}^a+\lambda_{ij}^gw_{ij}^g}{\sum_{i\in\mathcal{R}}\sum_{j\in\mathcal{C}}\lambda_{ij}}$. Constraints (\ref{constraint_demand_allocation})-(\ref{constraint_cap_allocation_k}) capture the allocation of food-delivery demand to distinct delivery modes, and the assignment of launchpads/kiosks for air delivery services. Constraint (\ref{constraint_all_ground_delivery}) establishes the relationship between overall ground flow and food-delivery demand in different modes.  Constraints (\ref{constraint_flow_conservation_k}) and (\ref{constraint_flow_conservation_l}) enforce the conservation at kiosks and launchpads. Constraints (\ref{constraint_order_flow_conservation_l_drone}) and (\ref{constraint_order_flow_conservation_l_order}) characterize the arrival rate of orders and drones at each launchpad. Constraints (\ref{constraint_sum_omega_1_linearize})-(\ref{constraint_ub_big_M_gamma_linearze}) specify the reliable level and the realized air delivery demand at each launchpad. Constraint (\ref{constraint_num_drone_l}) and (\ref{constraint_num_order_l}) capture the number of idle drones and orders waiting at each launchpad, respectively, and constraint (\ref{constraint_wait_time_l}) captures the average waiting time of food-delivery orders at the launchpad. Constraints (\ref{constraint_battery_delivery}) and (\ref{constraint_battery_return}) ensure the flight time is within the battery limitation. Constraint (\ref{constraint_delivery_share_time}) characterizes the average and shared delivery times for ground delivery flows considering order bundling. Constraint (\ref{constraint_num_courier}) ensures that the number of couriers equals the sum of the three courier operating modes, and constraint (\ref{constraint_courier_conservation}) captures the courier flow conservation across various zones. Constraint (\ref{constraint_num_drones}) ensures that the number of drones aligns with the sum of the three drone operating modes. Overall, this is a mixed-integer nonlinear program.

\section{Solution Method}

The cost-minimization problem, as defined in (\ref{objective_function}), poses a significant challenge as it is categorized under MINLP. These problems are notoriously difficult to solve efficiently with state-of-the-art algorithms, often demanding extensive computational resources and considerable time to achieve an acceptable solution. In this section, we introduce a tailored algorithm within a neural-network-assisted linearization framework. This innovative approach effectively transforms the complex MINLP into a more tractable MILP, with only a negligible optimality gap between the original MINLP and the reformulated MILP. Consequently, this MILP can be efficiently solved to global optimality using commercial solvers.

Motivated by \cite{patel2022neur2sp}, we incorporate an l-layer fully-connected neural network (NN) to approximate the intricate nonlinear relationships within constraint (\ref{constraint_delivery_share_time}) induced by the complex order bundling process, and then embed this neural network as the constraints of the cost- minimization problem (\ref{objective_function}). To facilitate the discussion, we generically denote $\bm{x}$ as the input and $\bm{\Theta}$ as the output of a neural network, and use $ReLU(a) = \max\{0,a\}$ as the activation function. The l-layer fully-connected neural network can be then represented as:
\begin{align}
    \bm{h}^{1} &= ReLU(\bm{W}^{0}\bm{x}+\bm{b}^{0}), \\
    \bm{h}^{m+1} &= ReLU(\bm{W}^{m}\bm{h}^{m}+\bm{b}^{m}),\quad m = 1,\dots,l-1, \\
    \bm{\Theta} &= \bm{W}^{l}\bm{h}^{l}+\bm{b}^{l},
\end{align}
where $\bm{h}^{m} \in \mathbb{R}^{d^{m}}$ is the $m$-th hidden layer,  $\bm{W}^{m}$ is the matrix of weights from layer $m$ to layer $m+1$, $\bm{b}^{m}$ is the bias at the $i$-th layer.
Based on the above notation, we summarize how we transform the optimization problem (\ref{objective_function}) into an MILP framework leveraging the neural network through the following proposition:
\begin{proposition}\label{prop:MILP}
    Let $\mathcal{D}$ be the set of all possible ground flow OD pairs $(i,j)$ with $i\in\mathcal{R}\cup\mathcal{K}$ and $j\in\mathcal{C}\cup\mathcal{L}$.
    For each OD pair $(i,j)\in \mathcal{D}$, learn the mapping from $(\lambda_{ij}^{g,all},N_{Z(i)}^I)$ to $(\lambda_{ij}^{g,all}\bar w_{ij}^g,\frac{1}{2} \lambda_{ij}^{g,all}t_{ij}^{s1}+\frac{2}{3} \lambda_{ij}^{g,all}t_{ij}^{s_2})$ through a l-layer fully connected ReLU network $\Phi_{ij}$ with weights set $\bm{W_{ij}}$ and bias set $\bm{b_{ij}}$. For a given layer $m$ in $\Phi_{ij}$, the $v^{th}$ hidden neuron denoted by $h_{ij}^{m,v}$, can be expressed as
    \begin{align}
        h_{ij}^{m,v} = ReLU\left(\sum_{u=1}^{d^{m-1}}w_{ij}^{m-1,uv}h_{ij}^{m-1,u}+b_{ij}^{m-1,v}\right),
    \end{align}
    where $\bm{W}_{ij}^{m-1}\in\bm{W_{ij}}$ is the matrix of weights from layer $m-1$ to layer $m$, $w_{ij}^{m-1,uv}$ is the $v^{th}$ row and $u^{th}$ column of matrix $\bm{W}_{ij}^{m-1}$, $\bm{b_{ij}}^{m-1}\in\bm{b_{ij}}$ is the vector of bias at the layer $m-1$, $b_{ij}^{m-1,v}$ is the $v^{th}$ element of $\bm{b_{ij}}^{m-1}$, and $d^{m-1}$ is the number of neurons in layer $m-1$. 
    Thus, the optimization problem (\ref{objective_function}) under the neural-network estimation is equivalent to the following MILP:

    \begin{align}\nonumber
&\min_{\bm{z},\bm{y},\bm{\hat \lambda},\bm{\tilde \lambda},\bm{q}^{I,r},\bm{\lambda}^{g,all},\bm{N^I},\bm{\omega},\bm{\hat h},\bm{\check h},\bm{\kappa},\bm{\Theta}}\quad \sum_{l\in \mathcal{L}} C_L^ly_l + \sum_{k\in \mathcal{C}} C_K^kz_k+q\left(\sum_{Z\in\mathcal{Z}} N_Z^I+\sum_{i\in\mathcal{R}\cup\mathcal{K}}\sum_{j\in \mathcal{L}\cup\mathcal{C}} \Theta^1_{ij}  - \sum_{i\in \mathcal{R}\cup\mathcal{K}}\sum_{j\in\mathcal{C}\cup\mathcal{L}} \Theta^2_{ij}\right)+ \\
    &+q\sum_{Z_1\in\mathcal{Z}}\sum_{Z_2\in\mathcal{Z}}q_{Z_1Z_2}^{I,r}t_{Z_1Z_2}^r+C^aN^a + \alpha_w\frac{\sum_{l\in\mathcal{L}}\bar N_l^o+\sum_{i\in\mathcal{R}}\sum_{j\in\mathcal{C}}\sum_{l\in\mathcal{L}}\sum_{k\in\mathcal{K}}t_{lk}^a\lambda_{ilkj}^a+\sum_{i\in\mathcal{R}\cup\mathcal{K}}\sum_{j\in\mathcal{C}\cup\mathcal{L}}\Theta^1_{ij}}{\sum_{i\in\mathcal{R}}\sum_{j\in\mathcal{C}}\lambda_{ij}}\label{objective_MILP}
\end{align}
\begin{subnumcases}{}
    \label{constraint_layer_0_MILP}
    \sum_{u=1}^{2} w_{ij}^{0,uv} x_{ij}^u + b_{ij}^{0,v} =  h_{ij}^{1,v}-\check h_{ij}^{1,v}, \quad \forall (i,j)\in\mathcal{D}, v = 1,\dots,d^1,  \\
    \label{constraint_layer_m_MILP}
    \sum_{u=1}^{d^{m-1}} w_{ij}^{m-1,uv} h_{ij}^{m-1,u} + b_{ij}^{m-1,v}  = h_{ij}^{m,v}-\check h_{ij}^{m,v},\quad \forall (i,j)\in\mathcal{D}, m = 2,\dots, l-1,\ v = 1,\dots, d^{m}, \\
    \label{constraint_layer_predict_MILP}
    {[\Theta_{ij}^1,\Theta_{ij}^2]}^T = W_{ij}^{l}\bm{h}_{ij}^{l}+\bm{b}^{l}, \quad \forall (i,j)\in\mathcal{D}\\
     h_{ij}^{m,u} - M_{nn}(1-\kappa_{ij}^{m,u}) \le 0, \quad \forall (i,j)\in\mathcal{D}, m = 1,\dots, l-1,\ u = 1,\dots, d^{m}, \label{constraint_NN_1_integer_1_MILP}\\
    \check h_{ij}^{m,u} - M_{nn} \kappa_{ij}^{m,u} \le 0, \quad \forall (i,j)\in\mathcal{D}, m = 1,\dots, l-1,\ u = 1,\dots, d^{m}, \label{constraint_NN_1_integer_2_MILP}\\
     h_{ij}^{m,u},\check h_{ij}^{m,u} \ge 0, \quad \forall (i,j)\in\mathcal{D}, m = 1,\dots, l-1,\ u = 1,\dots, d^{m}, \label{constraint_nonegative_h_MILP} \\
    \kappa_{ij}^{m,u} \in\{0,1\}, \quad \forall (i,j)\in\mathcal{D}, m = 1,\dots, l-1,\ u = 1,\dots, d^{m},  \label{constraint_binary_kappa_MILP}\\
    {[x_{ij}^1,x_{ij}^2]} = [\lambda_{ij}^{g,all},N_{Z(i)}^I],\quad \forall (i,j)\in\mathcal{D}, \label{constraint_predict_var_MILP}\\
    %&t_{il}^gd_{il}\le t_{il'}^gz_{l'}+t_{il}^g(1-z_{l'}),\quad\forall l,l'\in \mathcal{L} \\
    %\lambda_{il}^g = \sum_{k\in\mathcal{K}}\sum_{j\in\mathcal{C}} \lambda_{ilkj}^a, \quad \forall i\in\mathcal{C},l\in\mathcal{L}\\
    %\lambda_{kj}^g = \sum_{i\in\mathcal{R}}\sum_{l\in\mathcal{L}} \lambda_{ilkj}^a, \quad \forall k\in\mathcal{K},j\in\mathcal{C}\\
    \label{constraint_remain_MILP}
    (\ref{constraint_demand_allocation})-(\ref{constraint_num_order_l}),\ (\ref{constraint_battery_delivery})-(\ref{constraint_battery_return}),\ (\ref{constraint_courier_conservation})-(\ref{constraint_positive})
\end{subnumcases}
where $M_{nn}$ is a sufficiently large number.
\end{proposition}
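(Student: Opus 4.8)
The plan is to exploit the fact that the only source of nonlinearity in problem (\ref{objective_function})--(\ref{constraint_positive}) is constraint (\ref{constraint_delivery_share_time}), i.e., the bundling relations (\ref{eq:pickup_time})--(\ref{eq:t_s2}) that determine $w_{ij}^g$, $t_{ij}^{s1}$ and $t_{ij}^{s2}$ as intricate functions of $(\lambda_{ij}^{g,all},N_{Z(i)}^I)$. The argument splits into two independent parts: (i) showing that, once this map is replaced by its neural surrogate $\Phi_{ij}$, every nonlinear term in the objective collapses into the two aggregate outputs $\Theta^1_{ij}$ and $\Theta^2_{ij}$; and (ii) showing that the ReLU network $\Phi_{ij}$ admits an exact mixed-integer linear encoding. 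The remaining constraints are already linear, since the bilinear relation (\ref{eq:realized_demand_flow}) was linearized earlier through the $\gamma_l$-discretization (\ref{eq:index_rl_sum_1})--(\ref{eq:ub_big_M_gamma_linearze}), so they carry over unchanged as (\ref{constraint_remain_MILP}).

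For part (i) I would first set $\Theta^1_{ij}=\lambda_{ij}^{g,all}\bar w_{ij}^g$ and $\Theta^2_{ij}=\tfrac12\lambda_{ij}^{g,all}t_{ij}^{s1}+\tfrac23\lambda_{ij}^{g,all}t_{ij}^{s2}$, which are exactly the learned outputs. Substituting into the courier count (\ref{num_drivers}) is immediate, since its double sum ranges over $i\in\mathcal{R}\cup\mathcal{K}$, $j\in\mathcal{L}\cup\mathcal{C}$, that is, precisely over $\mathcal{D}$; hence $qN=q\big(\sum_{Z}N_Z^I+\sum_{\mathcal{D}}\Theta^1_{ij}-\sum_{\mathcal{D}}\Theta^2_{ij}\big)+q\sum_{Z_1,Z_2}q^{I,r}_{Z_1Z_2}t^r_{Z_1Z_2}$. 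The delivery-time numerator needs more bookkeeping: expanding $w_{ilkj}^a$ by (\ref{air_delivery_time_ilkj}) splits it into ground legs $w_{il}^g,w_{kj}^g$, a queue term $w_l^a$, and a flight term $t_{lk}^a$. The crux is that the two ground legs, together with the pure ground term $\lambda_{ij}^g w_{ij}^g$, regroup by ground OD pair into $\sum_{(i,j)\in\mathcal{D}}\lambda_{ij}^{g,all}w_{ij}^g=\sum_{\mathcal{D}}\Theta^1_{ij}$, which is exactly the decomposition of $\lambda_{ij}^{g,all}$ through the indicators $b_1,b_2$ of (\ref{constraint_all_ground_delivery}). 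For the queue term I would invoke Little's law (\ref{waiting_time_l}) in the form $\sum_{i,k,j}\lambda_{ilkj}^a w_l^a=\bar N_l^o$, which turns $\sum_{i,l,k,j}\lambda_{ilkj}^a w_l^a$ into $\sum_{l}\bar N_l^o$, leaving $\sum_{i,l,k,j}t_{lk}^a\lambda_{ilkj}^a$ intact. Collecting these three pieces reproduces the numerator in (\ref{objective_MILP}).

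For part (ii) I would verify that (\ref{constraint_layer_0_MILP})--(\ref{constraint_binary_kappa_MILP}) realize each ReLU neuron exactly. Writing the pre-activation of neuron $(m,v)$ as a difference $h_{ij}^{m,v}-\check h_{ij}^{m,v}$ of nonnegative variables (constraints (\ref{constraint_layer_0_MILP})--(\ref{constraint_layer_m_MILP}) and (\ref{constraint_nonegative_h_MILP})), the complementarity forced by the binary $\kappa_{ij}^{m,v}$ and the big-$M$ pair (\ref{constraint_NN_1_integer_1_MILP})--(\ref{constraint_NN_1_integer_2_MILP}) makes at least one of $h,\check h$ vanish: when $\kappa=1$ we obtain $h=0$ with pre-activation $\le 0$, and when $\kappa=0$ we obtain $\check h=0$ with pre-activation $\ge 0$. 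In both cases $h_{ij}^{m,v}=\max\{0,\,\text{pre-activation}\}=ReLU(\cdot)$, which is precisely the neuron definition in the proposition, provided $M_{nn}$ exceeds every attainable value of $h$ and $\check h$ over the bounded feasible input box; the linear output layer (\ref{constraint_layer_predict_MILP}) then delivers $\Theta^1_{ij},\Theta^2_{ij}$. Chaining this across neurons shows that the MILP feasible set, projected onto the original variables, coincides with the neural-surrogate feasible set while the two objectives agree, which establishes the equivalence.

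The main obstacle is the regrouping in part (i): one must track precisely how the first- and last-mile ground legs of air orders are re-indexed into the aggregate ground flow $\lambda_{ij}^{g,all}$, so that the per-OD output $\Theta^1_{ij}$ simultaneously accounts for pure ground delivery and for the ground segments of air delivery without double counting. A secondary technical point is the explicit choice of the big-$M$ constants $M_{nn}$ (and the pre-existing $M_L^l,M_K^k,M_\gamma$): the equivalence is exact only if each is a valid upper bound on the corresponding quantity over the feasible region, which I would justify from the finite order-arrival rates $\lambda_{ij}$ and the boundedness of the trained network on its sampled input domain.
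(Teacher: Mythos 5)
Your proposal is correct and follows essentially the same route as the paper's proof: you isolate the nonlinearity into the two per-OD outputs $\Theta^1_{ij}$ and $\Theta^2_{ij}$, eliminate the launchpad queue term via the Little's-law identity $\sum_{i,k,j}\lambda_{ilkj}^a w_l^a=\bar N_l^o$ from (\ref{constraint_wait_time_l}), regroup the ground legs of air orders into $\sum_{(i,j)\in\mathcal{D}}\lambda_{ij}^{g,all}\bar w_{ij}^g$ using (\ref{constraint_all_ground_delivery}), and encode each ReLU neuron exactly with the nonnegative split $h-\check h$ plus the big-$M$/binary complementarity — precisely the paper's equations (\ref{equa_70}), (\ref{equa_71}) and its MILP reformulation of the network. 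Your explicit remark that $M_{nn}$ must dominate all attainable activations over the bounded input domain is a point the paper leaves implicit ("sufficiently large"), but it is a refinement of the same argument rather than a different one.
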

\begin{proof}
Note that the only nonlinear constraints in (\ref{constraint})
arise from  (\ref{constraint_wait_time_l}), (\ref{constraint_delivery_share_time}) and (\ref{constraint_num_courier}).
The latter two, i.e., (\ref{constraint_delivery_share_time}) and (\ref{constraint_num_courier}), can be incorporated into the objective function, leading to the following reformulated cost:
\begin{align}\nonumber
    cost = &\sum_{l\in \mathcal{L}} C_L^ly_l + \sum_{k\in \mathcal{C}} C_K^kz_k+qN_Z^I+q\sum_{i}\sum_{j}\lambda_{ij}^{g,all}\bar w_{ij}^g-q\sum_{i}\sum_{j}\left(\frac{1}{2} \lambda_{ij}^{g,all}t_{ij}^{s1}+\frac{2}{3} \lambda_{ij}^{g,all}t_{ij}^{s_2}\right) + \\ \nonumber
    &q\sum_{Z_1\in\mathcal{Z}}\sum_{Z_2\in\mathcal{Z}}q_{Z_1Z_2}^{I,r}t_{Z_1Z_2}^r+ C^aN^a +\\ 
   & \alpha_w\frac{\sum_{i\in\mathcal{R}}\sum_{j\in\mathcal{C}}\sum_{l\in\mathcal{L}}\sum_{k\in\mathcal{K}}(\bar w_{il}^g+w_l^a+t_{lk}^a+\bar w_{kj}^g)\lambda_{ilkj}^a+\sum_{i\in\mathcal{R}}\sum_{j\in\mathcal{C}}\lambda_{ij}^{g} \bar w_{ij}^g}{\sum_{i\in\mathcal{R}}\sum_{j\in\mathcal{C}}\lambda_{ij}w_i^r{\sum_{i\in\mathcal{R}}\sum_{j\in\mathcal{C}}\lambda_{ij}}}, 
   \label{equa_69}
\end{align}
where $\bar w_{ij}^g$, $t_{ij}^{s1}$ and $t_{ij}^{s2}$ can be regarded as functions of $\lambda_{ij}^{g,all}$ and $N_{Z(i)}^I$, and their relations are characterized by  (\ref{constraint_delivery_share_time}). 
Furthermore, given constraints (\ref{constraint_wait_time_l}) and (\ref{constraint_all_ground_delivery}), we have
\begin{align}
    &\sum_{i\in\mathcal{R}}\sum_{j\in\mathcal{C}}\sum_{l\in\mathcal{L}}\sum_{k\in\mathcal{K}} w_l^a\lambda_{ilkj}^a = \sum_{l\in\mathcal{L}} \bar N_l^o,\label{equa_70}\\
    &\sum_{i\in\mathcal{R}}\sum_{j\in\mathcal{C}}\sum_{l\in\mathcal{L}}\sum_{k\in\mathcal{K}}(\bar w_{ij}^g+\bar w_{kj}^g)\lambda_{ilkj}^a+\sum_{i\in\mathcal{R}}\sum_{j\in\mathcal{C}}\lambda_{ij}^{g} \bar w_{ij}^g = \sum_{i\in\mathcal{R}\cup\mathcal{K}}\sum_{j\in\mathcal{C}\cup\mathcal{L}}\lambda_{ij}^{g,all}\bar w_{ij}^g.\label{equa_71}
\end{align}
where (\ref{equa_71})  is obtained by multiplying $\bar{w}_{ij}^g$ to each OD pair in constraint (\ref{constraint_all_ground_delivery}), and then make the summation over all the OD pairs. 
Plugging (\ref{equa_70}) and (\ref{equa_71}) into the last term of objective function (\ref{equa_69}), the cost can be then rewritten as:
\begin{align}
    cost = &\sum_{l\in \mathcal{L}} C_L^ly_l + \sum_{k\in \mathcal{C}} C_K^kz_k+q\left(\sum_{Z\in\mathcal{Z}} N_Z^I+\sum_{i\in\mathcal{R}\cup\mathcal{K}}\sum_{j\in \mathcal{L}\cup\mathcal{C}} \Theta^1_{ij}  - \sum_{i\in \mathcal{R}\cup\mathcal{K}}\sum_{j\in\mathcal{C}\cup\mathcal{L}} \Theta^2_{ij}\right)+C^aN^a+ \\
    &q\sum_{Z_1\in\mathcal{Z}}\sum_{Z_2\in\mathcal{Z}}q_{Z_1Z_2}^{I,r}t_{Z_1Z_2}^r + \alpha_w\frac{\sum_{l\in\mathcal{L}}\bar N_l^o+\sum_{i\in\mathcal{R}}\sum_{j\in\mathcal{C}}\sum_{l\in\mathcal{L}}\sum_{k\in\mathcal{K}}t_{lk}^a\lambda_{ilkj}^a+\sum_{i\in\mathcal{R}\cup\mathcal{K}}\sum_{j\in\mathcal{C}\cup\mathcal{L}}\Theta^1_{ij}}{\sum_{i\in\mathcal{R}}\sum_{j\in\mathcal{C}}\lambda_{ij}},
\end{align}
where $\Theta^1_{ij} = \lambda_{ij}^{g,all}\bar w_{ij}^g$ and $\Theta^2_{ij} = \frac{1}{2} \lambda_{ij}^{g,all}t_{ij}^{s1}+\frac{2}{3} \lambda_{ij}^{g,all}t_{ij}^{s_2}$.

Note that both $\Theta^1_{ij}$ and $\Theta^2_{ij}$ are nonlinear functions of $\lambda_{ij}^{g,all}$ and $N_{Z(i)}^I$ as delineated in (\ref{eq:pickup_time})-(\ref{eq:t_s2}). To address the nonlinearlity,  we will approximate the relations between $[\Theta^1_{ij},\Theta^2_{ij}]$ and $[\lambda_{ij}^{g,all},N_{Z(i)}^I]$ using a neural network, which despite being nonlinear, can be equivalently transformed into mixed integer linear constraints as described in (\ref{constraint_layer_0_MILP})-(\ref{constraint_predict_var_MILP}).
Specially, for each OD pair $(i,j)\in\mathcal{D}$, we use an l-layered neural network with ReLU activation function to estimate the mapping from $[\lambda_{ij}^{g,all},N_{Z(i)}^I]$ to $[\Theta^1_{ij},\Theta^2_{ij}]$. Graphically, it can be represented as Figure \ref{fig:NN}.
\begin{figure}[t!]
     \centering
     \includegraphics[width=0.8\linewidth]{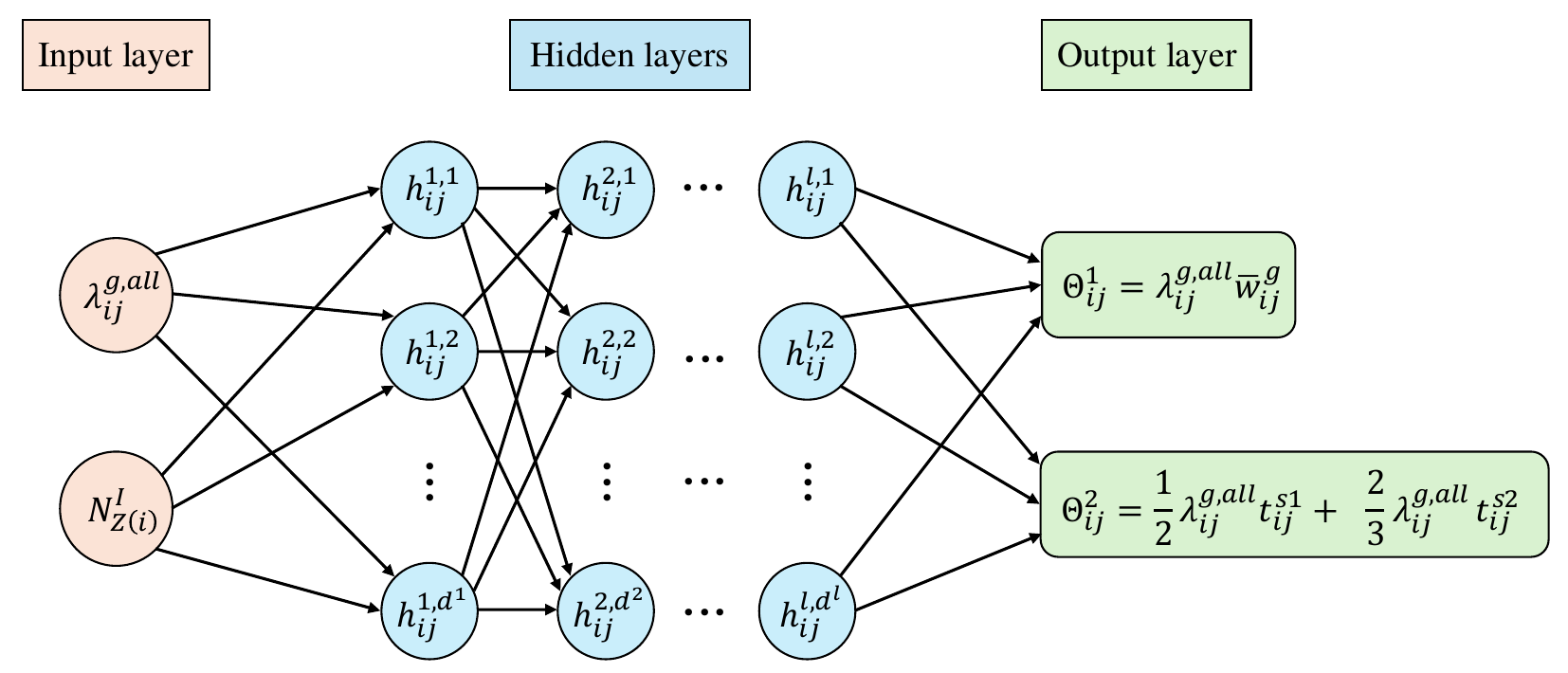}
     \caption{An example of a neural network with one hidden layer estimating the mapping in Proposition \ref{prop:MILP}.}
     \label{fig:NN}
 \end{figure}
Mathematically, it can be expressed as:
\begin{align}
    &\bm{h}_{ij}^{1} = ReLU(W_{ij}^{0}{[\lambda_{ij}^{g,all},N_{Z(i)}^I]}^T+\bm{b}_{ij}^{0}), \\
    &\bm{h}_{ij}^{m+1} = ReLU(W_{ij}^{m}\bm{h}_{ij}^{m}+\bm{b}_{ij}^{m}),\quad m = 1,\dots,l-1, \\
    &{[\bm{\Theta}_{ij}^1,\bm{\Theta}_{ij}^2]}^T = W_{ij}^{l}\bm{h}_{ij}^{l}+\bm{b}_{ij}^{l},
\end{align}
Note that the above relations are nonlinear due to the $ReLU$ function. In this case, the $v^{th}$ hidden neuron denoted by $h_{ij}^{m,v}$ is represented by
\begin{align}\label{v_hiddle_neuron_m}
    h_{ij}^{m,v} = \max\left\{\left(\sum_{u=1}^{d^{m-1}}w_{ij}^{m-1,uv}h_{ij}^{m-1,u}+b_{ij}^{m-1,v}\right),0\right\}.
\end{align}
To address the above issue, we introduce a continuous auxiliary variable $\check h_{ij}^{m,v}$ and a binary auxiliary variable $\kappa_{ij}^{m,v}$ to guarantee the non-negativeness of $h_{ij}^{m,v}$, then the $ReLU$ function (\ref{v_hiddle_neuron_m}) is equivalent to the following mixed-integer linear constraints \cite{patel2022neur2sp}:
\begin{align}
    &\sum_{u=1}^{d^{m-1}} w_{ij}^{m-1,uv} h_{ij}^{m-1,u} + b_{ij}^{m-1,v}  = h_{ij}^{m,v}-\check h_{ij}^{m,v}, \\
    &h_{ij}^{m,u} - M(1-\kappa_{ij}^{m,u}) \le 0, \\
    &\check h_{ij}^{m,u} - M \kappa_{ij}^{m,u} \le 0,\\
    &h_j^{m},\check h_j^{m} \ge 0,\\
    &\kappa_j^{m} \in\{0,1\},
\end{align}
where $M$ is a sufficiently large number. It indicates that when $\kappa_{ij}^{m,u} = 0$, we have $h_{ij}^{m,v}\ge 0$ and $\check h_{ij}^{m,v} = 0$; when $\kappa_{ij}^{m,u} = 1$, we have $h_{ij}^{m,v}= 0$ and $\check h_{ij}^{m,v} \ge 0$, which is equivalent to eh $ReLU$ constraints in (\ref{v_hiddle_neuron_m}).
Therefore, by linearizing all nonlinear terms in optimization problem (\ref{objective_function}) as per the aforementioned derivations and preserving the remaining mixed-integer linear constraints in (\ref{constraint}) as (\ref{constraint_remain_MILP}), we establish that the optimization problem (\ref{objective_MILP}) is equivalent to optimization (\ref{objective_function}) under neural-network estimation. This completes the proof. 
\end{proof}
    
The proposed method can notably enhance both the computational efficiency and solution quality by converting the intractable MINLP into MILP, which can be solved to global optimal solution in acceptable computation time by well-established algorithms such as branch-and-cut. The solution method is summarized in Algorithm \ref{algo:1}.
We emphasize that the effectiveness of the proposed methodological framework critically depends on the approximation provided by the neural network, and thus, the quality of the solutions obtained also crucially hinges on the accuracy of these neural network estimations. However, we argue that in the context of our problem, the accuracy of the neural network can be exceptionally high. This high level of accuracy is feasible because the neural network in question has only two inputs and two outputs, which allows it to be trained to very high precision using a relatively small dataset. Furthermore, a single neural network model serves distinct origin-destination pairs, differing only in the inputs and outputs, indicating that the network requires training only once. To support our argument, we have demonstrated the empirical estimation accuracy of this neural network in the case study using real data.
 \begin{algorithm}[t!] 
\caption{Solution algorithm to the optimization problem} \label{algo:1}
\hspace*{0.02in} {\bf Input:} 
model parameters \\
\hspace*{0.02in} {\bf Output:} 
optimal decisions of the platform $\bm{y}$,$\bm{z}$,$\bm{\hat \lambda}^a$,$\bm{\tilde\lambda}^a$,$\bm{q}^{I,r}$,$\bm{N}^I$, 
\begin{algorithmic}[1]
\STATE Data sampling: For each OD pair $(i,j)\in\mathcal{D}$, specify the range of $\lambda_{ij}^{g,all}$ and $N_{Z(i)}^I$ based on the model parameters, and then sample $N_s$ data points $(\lambda, n)$ uniformly from a rectangular area defined by specified ranges for $\lambda_{ij}^{g,all}$ and $N_{Z(i)}^I$. Denote the set of the sampled data by $\mathcal{S}_{ij}^{in}$, where $|\mathcal{S}_{ij}^{in}| = N_s$.
\STATE Label generation: For each OD pair $(i,j)$, generate dataset of input-output pairs $\mathcal{S}_{ij}$ for any input $(\lambda_{ij}^{g,all},N_{Z(i)}^I)$ in $\mathcal{S}_{in}$ and output $(\lambda_{ij}^{g,all}\bar w_{ij}^g,\frac{1}{2} \lambda_{ij}^{g,all}t_{ij}^{s1}+\frac{2}{3} \lambda_{ij}^{g,all}t_{ij}^{s_2})$ obtained by (\ref{eq:pickup_time})-(\ref{eq:t_s2}).
\STATE Train the neural network $\Phi_{ij}$ for each OD pair given the dataset $\mathcal{S}_{ij}$, obtaining the weights $\bm{W}_{ij}$ and bias $\bm{b}_{ij}$.
\STATE Solving the MILP (\ref{objective_MILP}) by branch-and-cut algorithm.
\RETURN $\bm{y}$,$\bm{z}$,$\bm{\hat \lambda}^a$,$\bm{\tilde\lambda}^a$,$\bm{q}^{I,r}$,$\bm{N}^I$.
\end{algorithmic}
\end{algorithm}

\section{Case Study}
To evaluate the performance of the proposed mathematical model and algorithm, we conduct a case study of the food-delivery market  on the real transportation network in Hong Kong. In particular, we consider a region with for centers: Po Lam, Tseung Kwan O, Hang Hau and Lohas Park, referred to as TKO area, with a transportation network of 66 nodes and 80 links, as shown in Figure \ref{fig:map}. 
\begin{figure}[htb!]
    \centering
    \includegraphics[width=\linewidth]{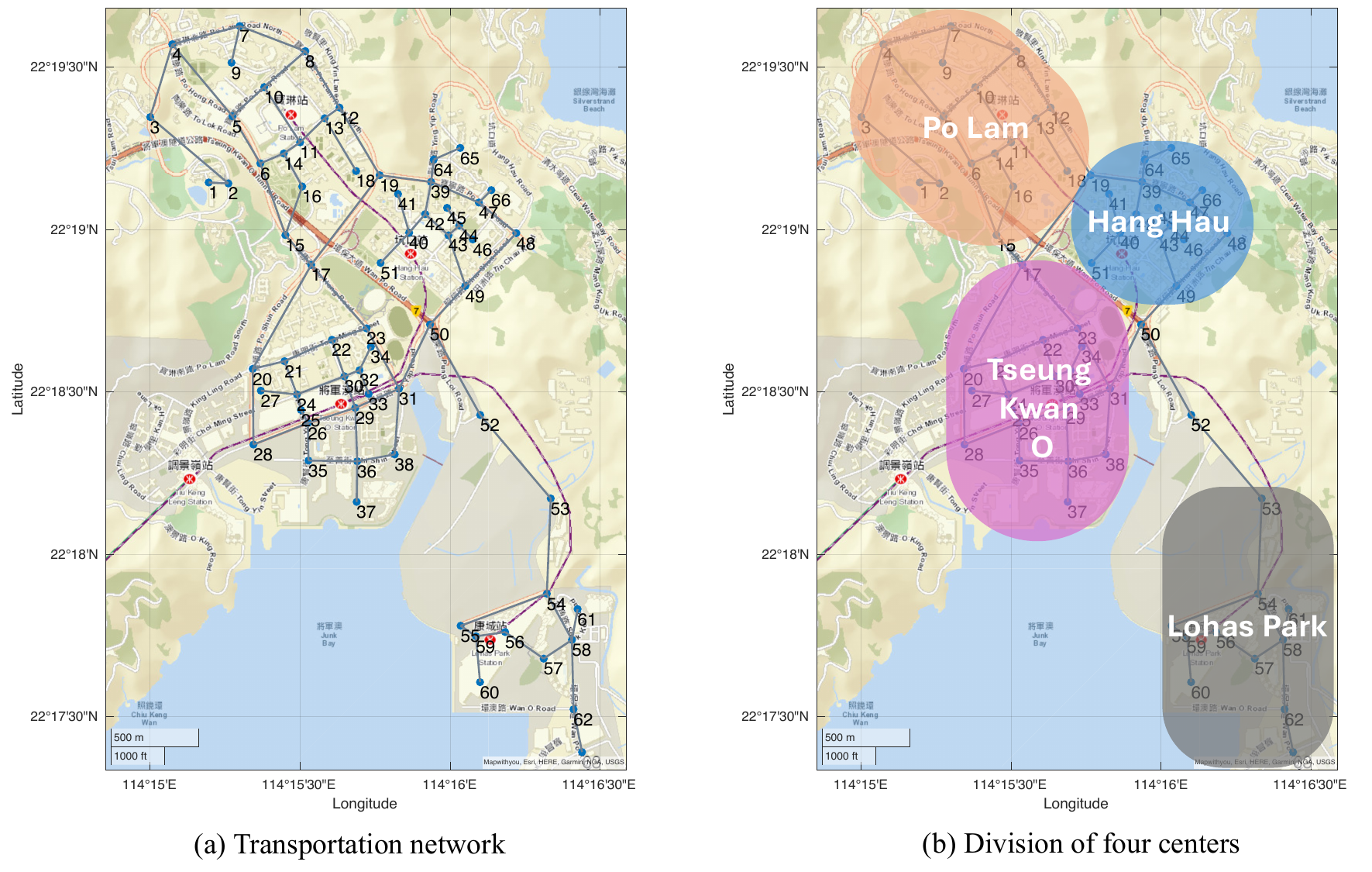}
    \caption{Transportation network of TKO area.}
    \label{fig:map}
\end{figure}
Based on the aforementioned network, we evaluate the proposed mathematical model and algorithm based on a real-world food-delivery dataset sourced from a proprietary consumer panel in Hong Kong, including 400,000 food-delivery transactions with GPS coordinates for pickup and drop-off points. Specifically focusing on orders within the TKO area, we map these locations onto the transportation network to derive the demand distribution for individual OD pairs across the network. Through an analysis of the food-delivery dataset and removal of the OD pair with relatively low demand, we identify 104 OD pairs representing orders dispatched from restaurant  (origin) to customer addresses (destination). The food-delivery demand ratio for each OD pair is then calculated based on the total order count in the dataset.
The arrival rate of food-delivery orders with origin and destination in the four areas (Po Lam, Tseung Kwan O, Hang Hau and Lohas Park) are illustrated in Figure \ref{fig:demand_distribution}.
The ground travel time for each link is calculated by dividing the real distance by a constant speed parameter $v_g$, while the air travel time is determined by dividing the straight-line distance between two nodes by the drone's  speed parameter $v_a$. In this study, we assume $v_g = 200$ m/s and $v_a = 800$ m/s. The average wage for a driver is fixed at $q=13$ \$/hr\footnote{Hereafter, we denote \$ as US\$.}, which aligns with the average salary of $100$ HK\$/hr for food-delivery couriers in Hong Kong\footnote{The average salary is based on estimates provided by \url{https://www.glassdoor.com.hk/Salaries/food-delivery-salary-SRCH_KO0,13.htm}.}. According to research by \cite{doole2020estimation}, the average total daily costs (comprising both investment and operational expenses) in a realistic scenario are approximately 198,016 Euros for 18,458 drones.
Therefore, we set the average investment and operational costs for a drone to $C^a = 1.67$ \$/hr. 
Given the transportation network of TKO area, we choose four potential locations for launchpad, i.e. $\mathcal{L} = \{11,29,42,56\}$\footnote{In the considered area of limited size, it does not make sense to build too many launchpads for drones (similar to how a city normally does not need too many airports). However, it is important to emphasize that the proposed algorithm can easily scale to accommodate a larger number of potential launchpads. }. These locations are strategically situated at the hearts of Po Lam, Tseung Kwan O, Hang Hau, and Lohas Park. Additionally, we choose 9 potential locations for kiosks,  $\mathcal{K} = \{9,10,13,25,28,36,40,42,56\}$, ensuring comprehensive coverage across the area with high concentration of delivery destinations.
\begin{figure}[htb!]
    \centering
    \includegraphics[width=0.7\linewidth]{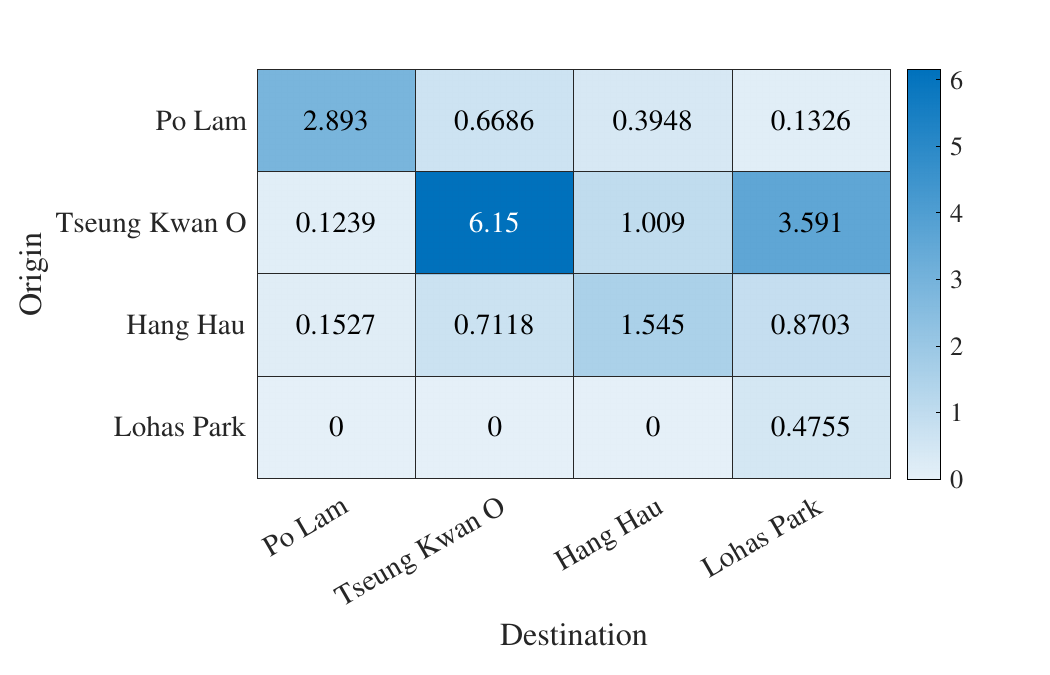}
    \caption{The arrival rate (/min) of food-delivery orders with origin and destination in the four centers.}
    \label{fig:demand_distribution}
\end{figure}
To assess the effectiveness of the proposed learning-assisted solution algorithm, we detail the training procedure and show the neural network's performance in Section \ref{sec:casestudy_NN}.
In order to analyze the expansion of air delivery services over reduced infrastructure expenses and increasing demand, as well as to explore the influence of drone delivery services on the food-delivery market, we examine the infrastructure deployment and market outcomes across varying construction costs and levels of food-delivery demand. The findings of these investigations are presented in Sections \ref{sec:casestudy_impact_cost} and \ref{sec:casestudy_impact_demand}, respectively.

\subsection{Experimental Setting and Computational Performance}\label{sec:casestudy_NN}

Considering the transportation network and the potential locations of launchpads and kiosks, the number of overall ground flow OD pairs (including orders delivered by ground couriers from a restaurant to a customer location, from a restaurant to a launchpad, and from a kiosk to a customer location) is $|\mathcal{D}|=277$. However, the input-output relations of all the OD paris are identical, although the input values and output values differ. In this case, we can employ a single neural network to approximate a 2-dimensional input $\mathbf{x}$ to a 2-dimensional output task $\mathbf{\Theta}$ for all OD pairs $(i,j)$ in the set $\mathcal{D}$, as depicted in Figure \ref{fig:NN}. For each OD pair, we generate 45,000 data points through grid sampling and standardization, which involves dividing a rectangular area defined by the input range into multiple grids, extracting data points $\mathbf{x}_{ij}$ at the grid vertices, and pre-processing them to standardized values with a mean of 0 and a standard deviation of 1. 
The neural network comprises one hidden layer with 12 neurons.
The proposed model is implemented using PyTorch on a workstation equipped with a GeForce RTX 4090 GPU. The learning rates, $\eta$, is set at 0.001, and the training process spans 300 epochs. The loss function we used is MSE defined by:
\begin{equation} 
{L= \frac{1}{2} \|\mathbf{\Theta}_k - \mathbf{\hat{\Theta}}_k\|_2},
\label{Equ: Loss}
\end{equation}
where $\mathbf{\Theta}_k$ and $\mathbf{\hat{\Theta}}_k$ refer to actual and predicted values. The dataset is split in a 9:1 ratio, where 90\% of the data is used for training and the remaining 10\% for testing. 
In the training process, performance analysis is performed using four key metrics: mean absolute error (MAE), root mean squared error (RMSE), MAPE, and R-squared score (R2). The above metrics can be specified as follows:
\begin{equation}
{\mathrm{MAE} = \frac{1}{K} \sum_{k=1}^{K}| \mathbf {\Theta}_{k} -  \mathbf {\hat{\Theta}}_{k}|}
\end{equation}
\begin{equation}
{\mathrm{RMSE} = \sqrt{\frac{1}{K} \sum_{k=1}^{K}\|\mathbf{\Theta}_k - \mathbf{\hat{\Theta}}_k\|_2}}
\label{RMSE}
\end{equation}
\begin{equation}
{\mathrm{R2} = 1 - \frac{\sum_{k=1}^{K}\|\mathbf{\Theta}_k - \mathbf{\hat{\Theta}}_k\|_2}{\sum_{k=1}^{K}\|\mathbf{\Theta}_k - \mathbf{\bar{\Theta}}\|_2}}
\end{equation}
where
$K$ is the number of data samples;
$\mathbf{\bar{\Theta}}$ is the average of output labels.
The mean, minimum, and maximum values of the metrics for all 277 cases across the entire dataset are consolidated in Table \ref{tab:train_result}. The results demonstrate that the neural network's estimation on the testing dataset is highly accurate, exhibiting minor errors for both outputs, with $R^2$ value greater than 0.999.

\begin{table}[H]
\centering
\begin{tabular}{ccccc} 
\hline\hline                                                           Label                & Statistic & MAE       & RMSE       & $R^2$                \\ 
\hline
\multirow{3}{*}{1st} & mean      & 0.007041~ & 0.009827~  & 0.999987\\
                      & min       & 0.000357~ & 0.000493~  & 0.999873~  \\
                      & max       & 0.041356~ & 0.056011~  & 0.999999~  \\
                      \hline
 \multirow{3}{*}{2nd} & mean      & 0.004688~ & 0.006705~  & 0.999972~   \\
                      & min       & 0.000058~ & 0.000105~  & 0.999791~\\
                      & max       & 0.026009~ & 0.036524~  & 0.999998~   \\
\hline\hline
\label{Tab1}
\end{tabular}
\caption{Statistical results of the training process across all OD pairs.}\label{tab:train_result}
\end{table}

To evaluate the accuracy of the neural network estimation in the optimization problem (\ref{objective_MILP}), we choose $C_l = 55$ \$/hr, $C_k = 12$ \$/r and $\max_{i,j}\lambda_{ij} = 0.3$ /min, solve MILP (\ref{objective_MILP}), and compare the predicted values of  both outputs from neural networks and actual values of  both outputs from original nonlinear constraints for all 277 cases under the optimal solution to MILP (\ref{objective_MILP}). The MILP is solved by Gurobi in MATLAB on a Dell Desktop machine with 6-core 3.10 GHz i5–10 500 CPU.  The comparison results are depicted in Figure \ref{fig:predict}, illustrating that the predicted outputs closely align with the true values under the optimal solution.

\begin{figure}[H]
     \centering
     \begin{subfigure}[The actual value and predicted value of the first output for all OD pairs.]{
         \centering
         \includegraphics[width=0.45\textwidth]{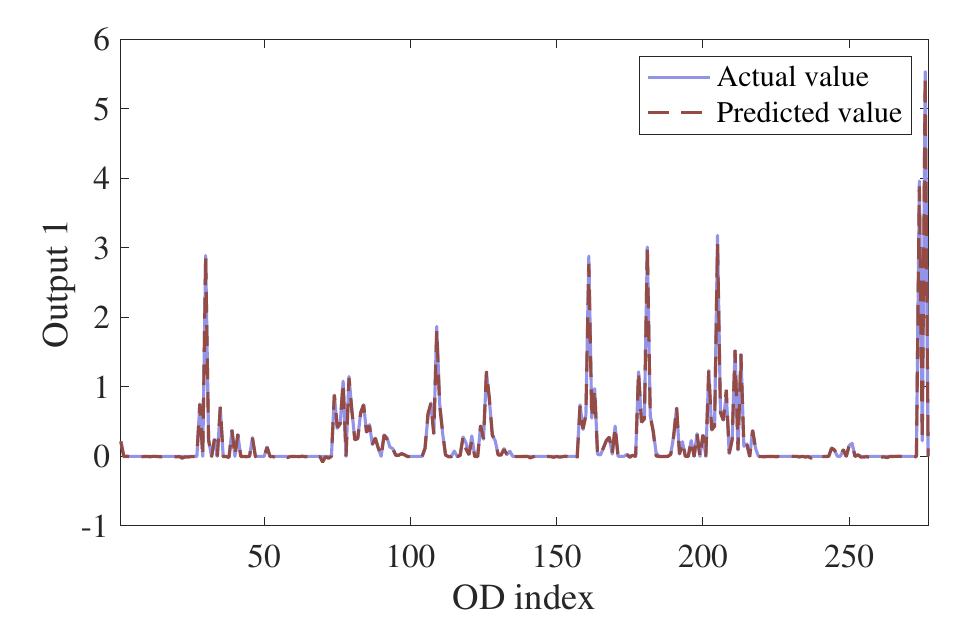}
         \label{fig:output_1_predict}}
     \end{subfigure}
     \begin{subfigure}[The actual value and predicted value of the second output for all OD pairs.]{
         \centering
         \includegraphics[width=0.45\textwidth]{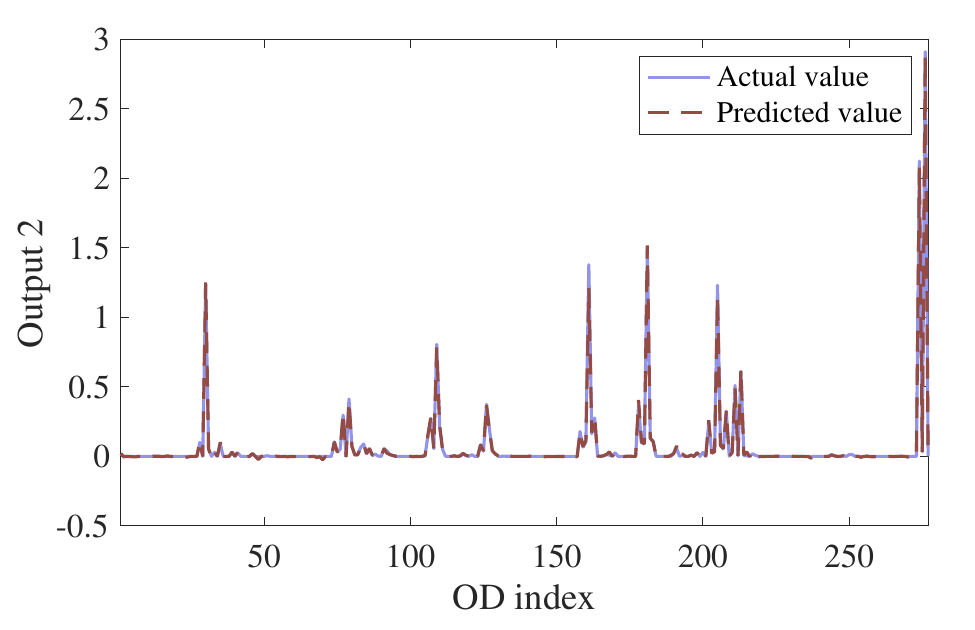}
         \label{fig:output_2_predict}}
     \end{subfigure}
       \caption{The comparison of actual value and predicted value of two outputs for all OD pairs.}
        \label{fig:predict}
\end{figure}

\subsection{Impacts of Construction Costs}\label{sec:casestudy_impact_cost}
To evaluate how the platform expands the air delivery services as the  construction cost of launchpad/kiosk reduces, and how this will influence the overall food-delivery market, we fixed the food-delivery demand $\lambda_{ij}$ by normalizing the demand to $\max_{i,j} \lambda_{ij} = 0.3$ while keeping the ratio of demand for other OD pair to $\max_{i,j} \lambda_{ij}$ unchanged, gradually reduce the construction costs of launchpads and kiosks, and investigate the impacts of this change on platform, food-delivery customers, and couriers. We consider 16 cases of launchpad/kiosk construction costs, as summarized  in Table \ref{tab:cost}. Under each case, the optimal decisions on the locations of launchpads and kiosks are also presented in Table \ref{tab:cost}.
The platform costs (including the construction costs for launchpads and kiosks, and the operational costs for the mixed fleet of couriers and drones), average food-delivery order delivery time, the average pooling probability over all the food-delivery orders, and the courier fleet size under different cases in Table \ref{tab:cost} are summarized in Figure \ref{fig:bar}.

\begin{table}[H]
    \centering
    \begin{tabular}{c|cccc}
    \hline\hline
       Index  & Launchpad costs & Kiosk costs & Built launchpad locations & Built kiosk locations \\
       \hline
       1 & 90& 19.6 & - & -\\
       2 & 85& 18.5 & 29 & 56\\
       3 & 80 & 17.4 & 29 & 56 \\
       4 & 75 &  16.4 & 29 & 40, 56\\
       5 & 70 & 15.3 & 29 & 40, 56\\
       6 & 65 & 14.2 & 29 & 40, 56\\
       7 & 60 & 13.1 & 29 & 40, 56\\
       8 & 55 & 12 & 29 & 40, 56\\
       9 & 50 & 10.9 & 29 & 40, 56\\
       10 & 45 & 9.8 & 29 & 40, 56\\
       11 & 40 & 8.7 & 29 & 40, 56 \\
       12 & 35 & 7.6 & 29 & 40, 56\\
       13 & 30 & 6.5 & 29 & 40, 56\\
       14 & 25 & 5.4 & 29, 42 & 25, 40, 56\\
       15 & 20 & 4.4 & 29, 42 & 25, 40, 56\\
       16 & 15 & 3.3 & 29, 42 & 10, 25, 40, 56 \\
        \hline\hline
    \end{tabular}
    \caption{Construction costs (\$/hr) for a launchpad and a kiosk and the locations under different cases.}
    \label{tab:cost}
\end{table}

\begin{figure}[H]
     \centering
     \begin{subfigure}[The platform costs (\$/min) under different construction costs.]{
         \centering
         \includegraphics[width=0.45\textwidth]{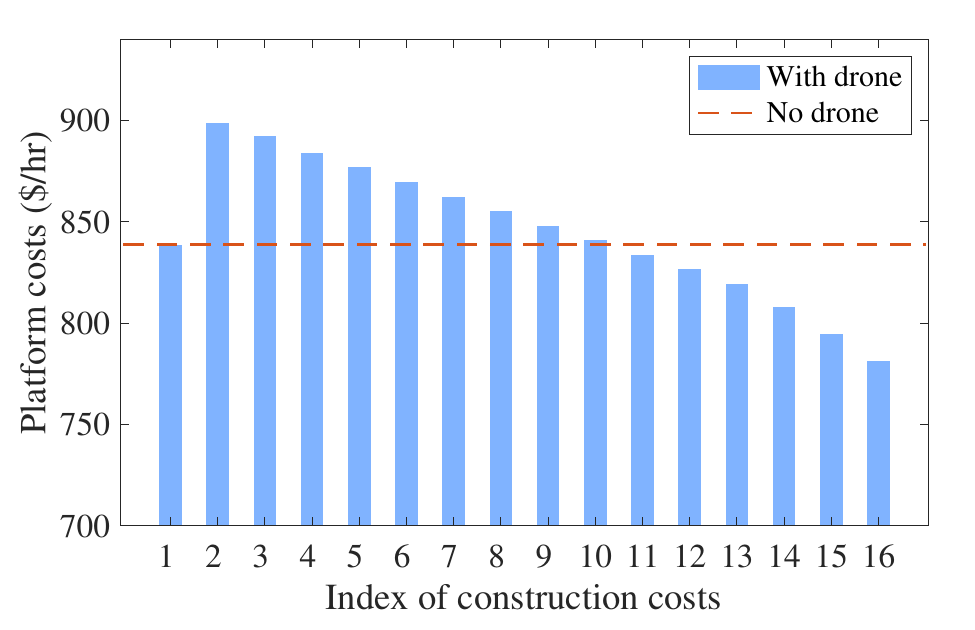}
         \label{fig:bar_cost}}
     \end{subfigure}
     \begin{subfigure}[The average delivery time (min) under different construction costs.]{
         \centering
         \includegraphics[width=0.45\textwidth]{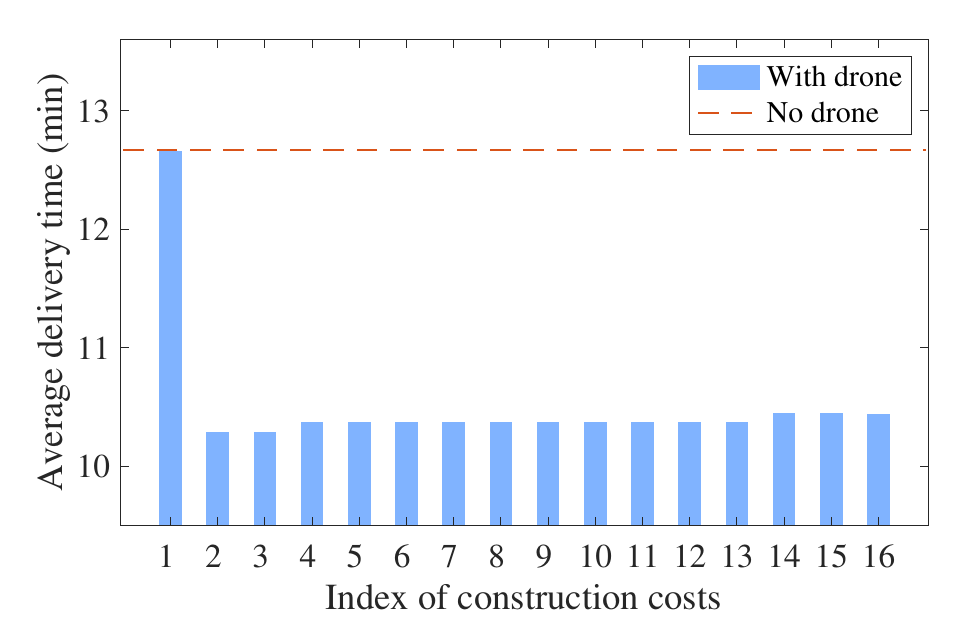}
         \label{fig:bar_time}}
     \end{subfigure}
     \begin{subfigure}[The average probability of pooling for orders under different construction costs.]{
         \centering
         \includegraphics[width=0.45\textwidth]{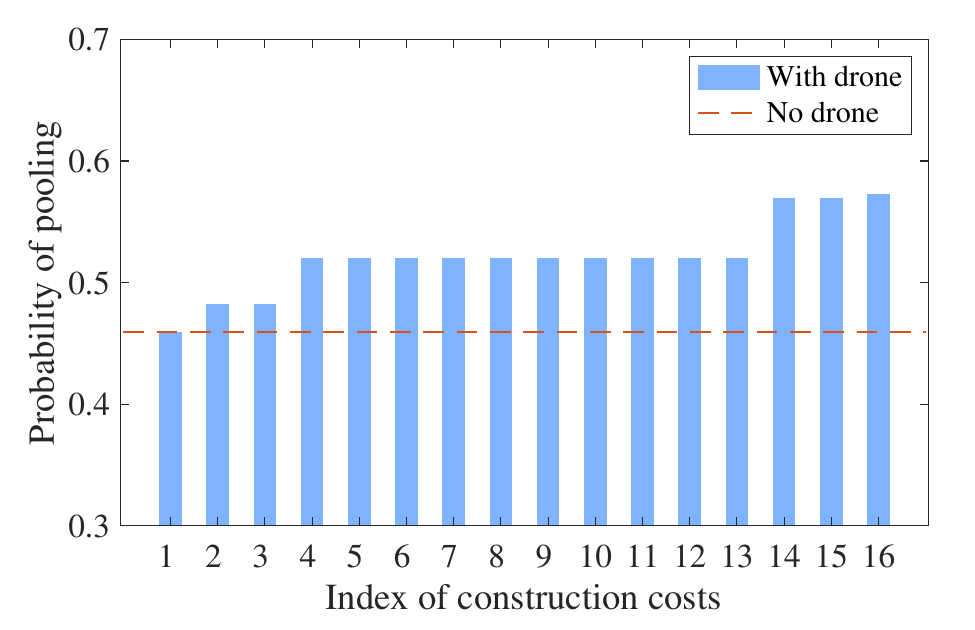}
         \label{fig:bar_pool_prob}}
     \end{subfigure}
     \begin{subfigure}[The number of couriers under different construction costs.]{
         \centering
         \includegraphics[width=0.45\textwidth]{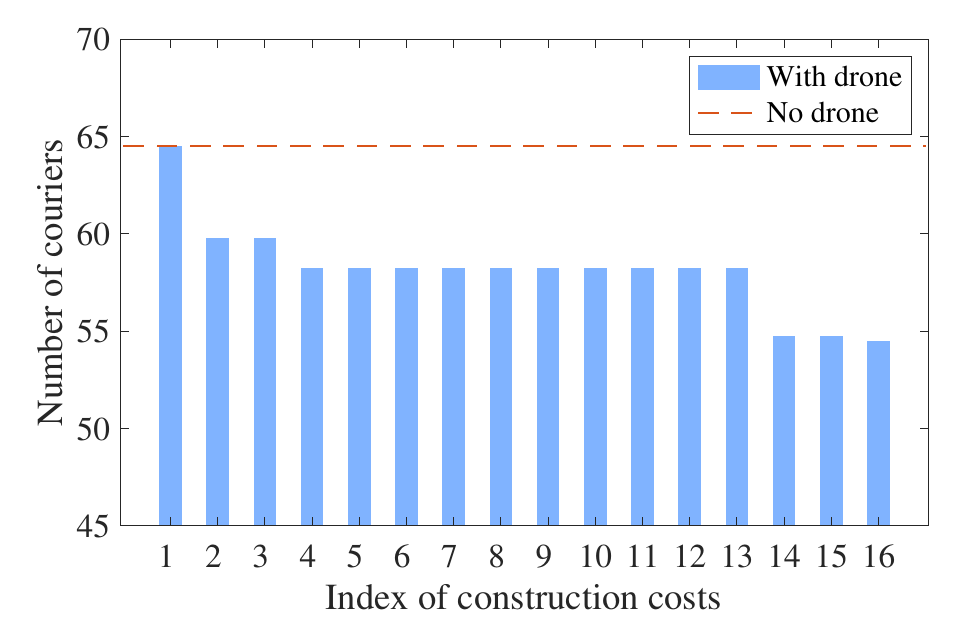}
         \label{fig:bar_courier}}
     \end{subfigure}
        \caption{The platform costs, average delivery time, pooling probability and number of couriers under different construction costs.}
        \label{fig:bar}
\end{figure}

It is evident that as the construction costs for launchpads and kiosks decrease, the food-delivery platform will expand its air delivery network by constructing more launchpads and kiosks, as indicated in Table~\ref{tab:cost}. This expansion leads to a reduced courier fleet size, as shown in Figure \ref{fig:bar_courier}, and an increased probability for order bundling, as depicted in Figure \ref{fig:bar_pool_prob}, compared to scenarios without air delivery services. These results are straightforward: the augmented deployment of launchpads and kiosks facilitates the opening of additional air routes, enabling drones to fulfill more orders and thereby replace human couriers, subsequently reducing the courier fleet size. Furthermore, the introduction of air delivery routes significantly enhances flexibility in route planning. This allows the platform to deliver not only directly to the destinations but also to strategically drop orders at launchpads. Consequently, both the origin and destination of the trips become flexible variables, rather than exogenously fixed parameters, enhancing the potential for order pooling. For example, the addition of more air delivery routes allows for the bundling of orders with the same origin but different destinations if they utilize air delivery and are assigned to the same launchpad. Similarly, food-delivery orders originating from different restaurant locations but destined for the same endpoint can also be bundled if they are serviced by air delivery utilizing the same kiosks. Thus, the expansion of air delivery routes increases the average probability of order bundling, leading to better utilization of couriers and reducing the miles they travel.

One interesting observation from our analysis is that the total cost to the platform, which includes the infrastructure costs of launchpads and kiosks, as well as the operational costs for the mixed fleet of couriers and drones (excluding the last term in equation (\ref{objective_MILP})), does not exhibit a monotonic relationship with the construction costs, as depicted in Figure \ref{fig:bar_cost}. Specifically, the costs in case 2 are notably higher than those in case 1. However, as the construction costs decrease further, the overall cost tends to decrease. To understand this result, we note that when construction costs are prohibitively high, the platform chooses not to construct any launchpads or kiosks, recognizing that the potential reductions in customer waiting times and operational costs do not justify the substantial upfront investment in infrastructure. However, when construction costs are reduced but still relatively high (as in cases 2, detailed in Table \ref{tab:cost}), the platform may decide to construct infrastructure selectively. For instance, it might activate air routes for high-demand, long-distance OD pairs, such as from Tseung Kwan O to Lohas Park, using launchpad 29 and kiosk 56. In these scenarios, despite the high construction costs, the significant reduction in delivery times through faster air transport (especially for long-distance ODs) justifies the upfront investment in infrastructure. As construction costs continue to decrease, the platform expands its infrastructure to include air routes for shorter-distance OD pairs, such as from Tseung Kwan O to Hang Hau using launchpad 29 and kiosk 40. In these cases, the cost per unit of launchpad and kiosk construction (already reduced) is outweighted by the savings in operational cost it induced from reduced courier usage, result in a net reduction in total costs, even as new facilities are constructed.

\begin{table}[t]
    \centering
    \begin{tabular}{c|cc}
    \hline\hline
      & Tseung Kwan O to Lohas Park & Tseung Kwan O to Hang Hau\\
    \hline 
    Avg air delivery time (min) & 20.5 & 12.9 \\
    Avg ground delivery time (min) & 15.3 & 14.2 \\
    Air delivery demand (/min) & 1.07 & 0.30 \\
    Ground delivery demand (/min) & 0.12 & 0.03\\
         \hline\hline
    \end{tabular}
    \caption{Average delivery time and demand for air delivery services and ground delivery services with origin in TKO and destination in Lohas Park/Hang Hau under the construction cost case 5 in Table \ref{tab:cost}.}
    \label{tab:time}
\end{table}

Another interesting, albeit counterintuitive, observation is that the expansion of air delivery services does not necessarily entail reduced delivery times for customers, despite the significantly higher speeds of air compared to ground delivery. As depicted in Figure \ref{fig:bar_time}, the introduction of the first launchpad-kiosk pair between Tseung Kwan O and Lohas Park significantly reduces the average delivery time from case 1 to case 2 for long-distance orders. However, as construction costs decrease further and the platform activates additional air routes, the delivery times actually increase, despite the improved availability of air services. This  highlights a crucial trade-off between the travel time savings induced by the faster air delivery and the associated detours and extra waiting times at launchpads and kiosks. This trade-off appears to be heavily dependent on the distance of the order. For long-distance trips, the detours are minor compared to the substantial time savings from faster drone delivery. In contrast, for short-distance trips, the detours and additional waiting times can significantly extend the delivery time compared to ground delivery (which is already short due to the short distance), outweighing the time saved by the drones. Consequently, as the platform initially activates air delivery routes for long distances and subsequently expands them to shorter routes, the average delivery time first decreases (due to the opening of long-distance routes) and then increases (due to the opening of short-distance routes). To demonstrate the contrasting effects of air delivery services on the delivery times of long-distance OD and short-distance OD pairs, we analyze the average delivery times and demand for orders originating in Tseung Kwan O and destined for Lohas Park/Hang Hau, allocated to either air delivery services or ground delivery services under construction cost case 5, as detailed in Table \ref{tab:time}. It is illustrated that with a launchpad in Tseung Kwan O and two kiosks in Lohas Park and Hang Hau respectively, the platform allocates the majority of food-delivery orders from Tseung Kwan O to Lohas Park/Hang Hau to air delivery services.  The average delivery time for air delivery services from Tseung Kwan O to Lohas Park (representing a long distance) is shorter compared to ground delivery services, while the average delivery time for air delivery services from Tseung Kwan O to Hang Hau (representing a short distance) is longer than that of ground delivery services. This echoes our previous analysis: while air delivery can significantly speed up transport by replacing slower ground transportation, the additional waiting times and detours, especially for short-distance deliveries, at launchpads and kiosks for courier pickups can be substantial. In long-distance deliveries, the time saved by the drones' higher speed outweighs the extra waiting time incurred during the first and last legs of the delivery. Conversely, in short-distance deliveries, the additional delays dominate, resulting in longer overall delivery times.

%When the construction costs are sufficiently low (e.g. less than case 11 in Table \ref{tab:cost}), the introduction of air-delivery services can lead to a Pareto improvement by reduced platform operation costs, reduced delivery time, and increased driver utilization (bundling probability) compared with the conventional ground delivery services.

\subsection{Impacts of Food-Delivery Demand}\label{sec:casestudy_impact_demand}
To explore how the rising demand for food delivery influences the expansion of air delivery services, we maintained constant construction costs for launchpads and kiosks while incrementally increasing the maximum demand across all origin-destination pairs from $0.1$/min to $1.1$/min. The decisions made by the platform regarding the placement of launchpads and kiosks at different demand levels are summarized in Table \ref{tab:lambda}. A comparative analysis of platform costs, average delivery time, average pooling probability for an order, and courier fleet size under various demand scenarios is presented in Figure \ref{fig:curve}. This analysis contrasts conventional delivery methods without drones against scenarios that incorporate drone delivery services, highlighting how air delivery impacts the platform, the food-delivery customers, and the couriers in the food-delivery market as demand fluctuates.

\begin{table}[t]
    \centering
    \begin{tabular}{c|ccc}
    \hline\hline
       Index  & Maximum arrival rate & Built launchpad locations & Built kiosk locations \\
       \hline
       1 & 0.1&  - & -\\
       2 & 0.3&  29 & 40, 56\\
       3 & 0.5 &  29, 42 & 25, 40, 56 \\
       4 & 0.7 &   29, 42 & 25, 40, 56 \\
       5 & 0.9 &  29, 42 & 10, 25, 40, 56\\
       6 & 1.1 & 29, 42 & 10, 25, 40, 56\\
        \hline\hline
    \end{tabular}
    \caption{The locations of launchpads and kiosks under different demand level.}
    \label{tab:lambda}
\end{table}

\begin{figure}[t]
     \centering
     \begin{subfigure}[The platform costs (\$/min) under different demand level.]{
         \centering
         \includegraphics[width=0.45\textwidth]{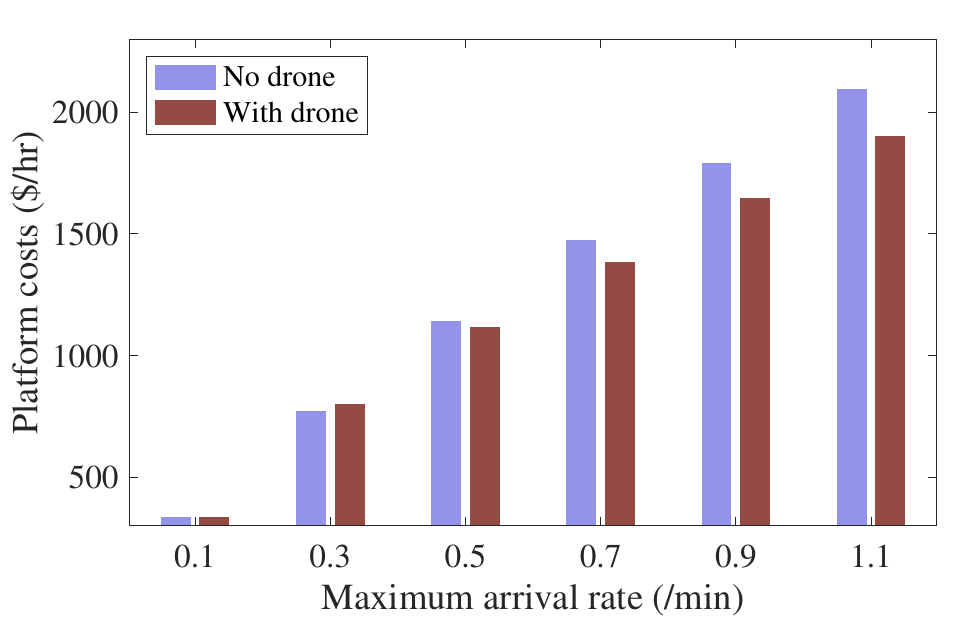}
         \label{fig:curve_cost_lambda}}
     \end{subfigure}
     \begin{subfigure}[The average delivery time (min) under different demand level.]{
         \centering
         \includegraphics[width=0.45\textwidth]{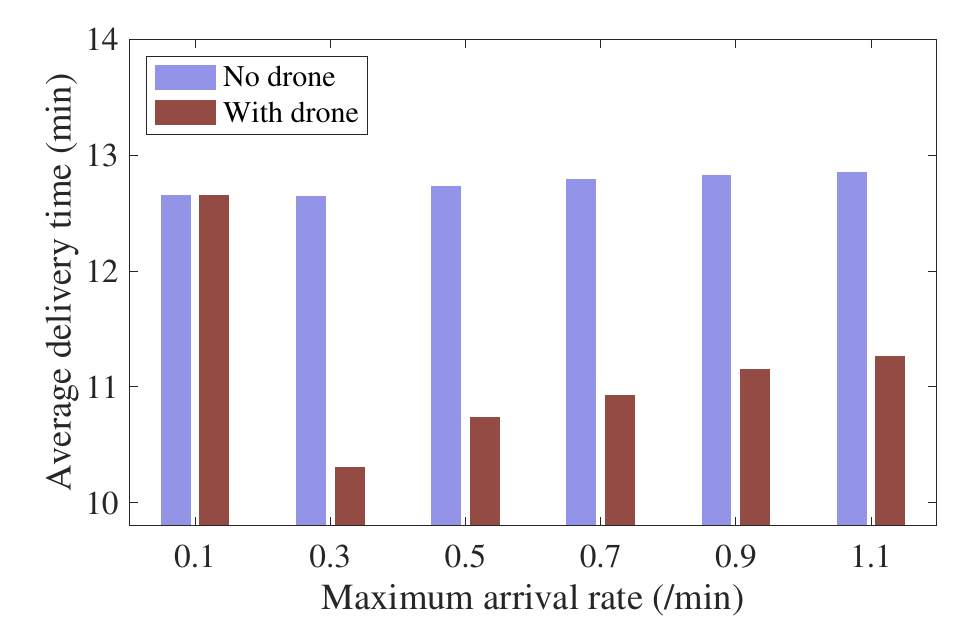}
         \label{fig:curve_time_lambda}}
     \end{subfigure}
     \begin{subfigure}[The average probability of pooling for orders under different demand level.]{
         \centering
         \includegraphics[width=0.45\textwidth]{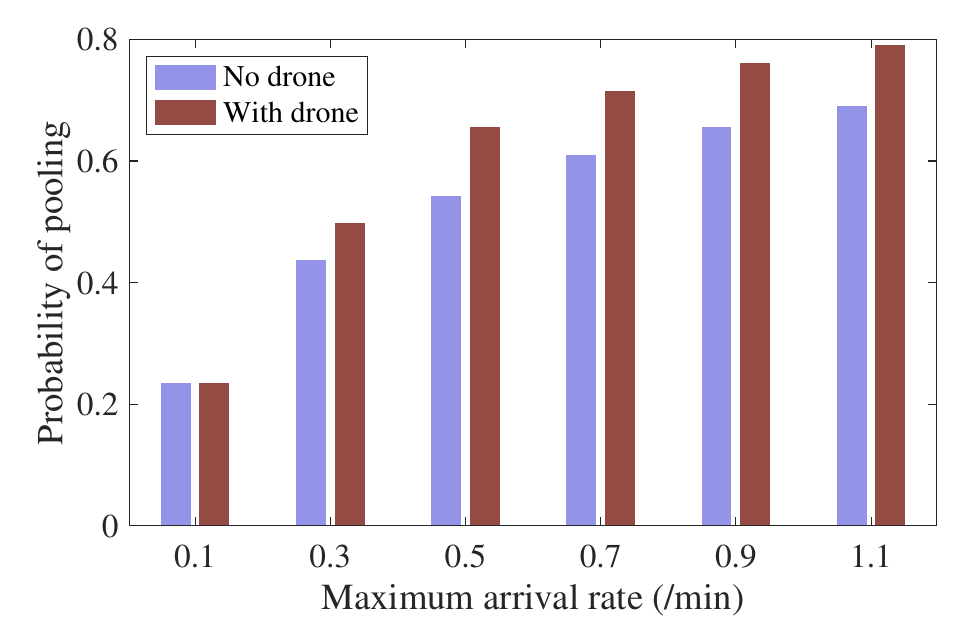}
         \label{fig:curve_pool_prob_lambda}}
     \end{subfigure}
     \begin{subfigure}[The number of couriers under different demand level.]{
         \centering
         \includegraphics[width=0.45\textwidth]{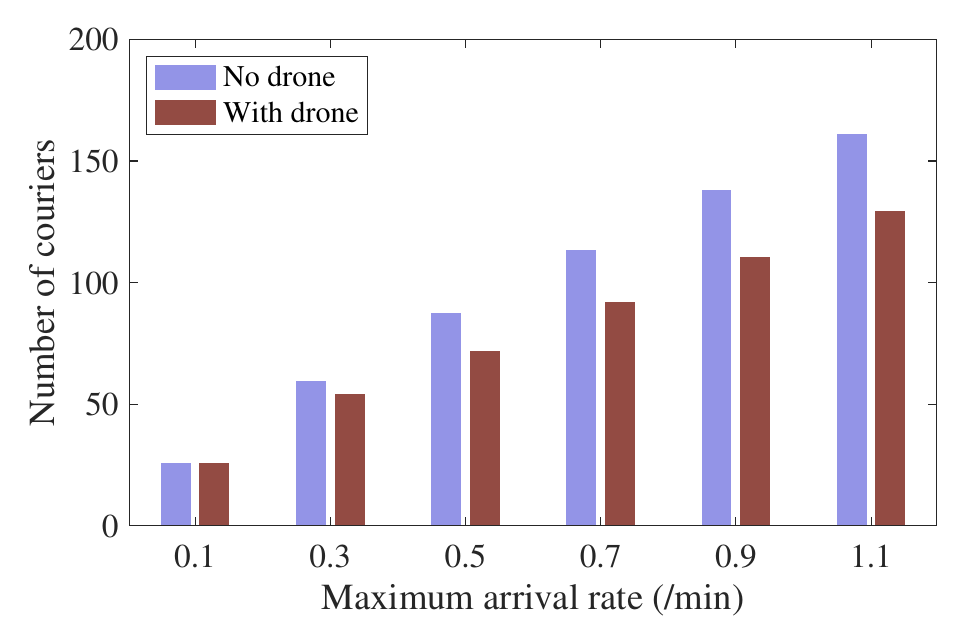}
         \label{fig:curve_courier_lambda}}
     \end{subfigure}
        \caption{The platform costs, average delivery time, pooling probability and number of couriers under demand level.}
        \label{fig:curve}
\end{figure}

As food-delivery demand escalates, it becomes evident that the platform will augment its air delivery network by constructing additional launchpads and kiosks, as indicated in Table \ref{tab:lambda}. This network expansion results in increased platform costs, encompassing both infrastructure and operational expenses, as depicted in Figure \ref{fig:curve_cost_lambda}. Correspondingly, there is a rise in the courier fleet size, shown in Figure \ref{fig:curve_pool_prob_lambda}, and an enhanced probability for order bundling, as illustrated in Figure \ref{fig:curve_courier_lambda}. These outcomes are intuitive: the surge in demand compels the platform to boost its supply capabilities, including both drones and couriers. Consequently, the number of launchpads and kiosks expands, as does the courier fleet. Moreover, the introduction of air delivery routes significantly improves flexibility in route planning, which elevates the average probability of order bundling. This leads to more efficient utilization of couriers and a reduction in the miles they travel, mirroring the findings discussed in the case studies in Section \ref{sec:casestudy_impact_cost}.

Interestingly, a comparison of platform costs, which encompass both infrastructure and operational expenses but exclude the last term in equation (\ref{objective_MILP}), reveals a nuanced impact of air delivery. When the maximum arrival rate is 0.3, the platform costs are higher in the air delivery scenario than in the benchmark case without drone delivery. However, when the maximum arrival rate is 0.5 or higher, the costs are lower than the benchmark, as illustrated in Figure \ref{fig:curve_cost_lambda}. This observation echoes the findings depicted in Figure \ref{fig:bar_cost}. Specifically, when the maximum demand increases from 0.1 to 0.3, the platform may opt for selective infrastructure construction. For example, it might activate air routes for high-demand, long-distance origin-destination pairs. In these scenarios, despite the high initial costs, the substantial reduction in delivery times through faster air transport—particularly for long-distance routes—justifies the upfront investment. As demand continues to increase beyond 0.5, the platform broadens its infrastructure to accommodate air routes for shorter-distance pairs. In these instances, the reduced unit costs of launchpad and kiosk construction are outweighed by the operational savings from decreased courier usage, resulting in a smaller total costs compared to the benchmark case, even as new facilities are constructed. Furthermore, additional insights are revealed in Figure \ref{fig:curve_time_lambda}. In the case with drones, delivery time is not a monotonic function of increasing demand. Specifically, as demand increases, delivery time first decreases significantly, and then gradually begins to increase. This pattern is also echoed in Figure \ref{fig:bar_time}, both of which highlight a crucial trade-off: the travel time savings induced by faster air delivery are sometimes counterbalanced by the associated detours and extra waiting times at launchpads and kiosks, especially for shorter-distanced trips.

\section{Conclusions}

This paper investigates the infrastructure location planning and order assignment problem for an on-demand food-delivery platform operating over a multimodal delivery network. Specifically, we consider a platform that offers both ground delivery services and drone-assisted air delivery services. The platform must determine the optimal locations for launchpads and kiosks within a transportation network under a constrained budget, and develop an order assignment strategy that effectively allocates food-delivery orders between ground and air delivery. These decisions are influenced by the bundling probabilities of ground deliveries and the waiting times at launchpads and kiosks for air deliveries. To address the former, we developed a steady-state equilibrium model that prescribes the matching process between couriers and orders. To characterize the latter, we formulated a double-ended queue model for the interactions at launchpads between orders and drones. The optimal decisions for the platform are framed as a MINLP problem aiming to minimize a trade-off between platform costs and total delivery time. To tackle the nonlinearity inherent in the MINLP formulation, we propose a learning-assisted solution algorithm, which transforms the MINLP into a MINP and enables efficient computation of the global optimum solution.

We validate the proposed mathematical model and algorithms via a case study in Hong Kong. Initially, we explore the impacts of the launchpad/kiosk construction costs on the multimodal food-delivery market. Our findings indicate that as the infrastructure costs for launchpads and kiosks decrease, the food-delivery platform opts to expand air delivery services across the transportation network, starting with long-distance routes and gradually extending to shorter routes. This expansion leads to reduced operational costs for the platform, a smaller courier fleet size, and increased opportunities for order bundling. However, expanding air delivery services does not necessarily result in reduced delivery times for customers, despite the significantly higher speeds of air compared to ground delivery. This highlights a key trade-off between the travel time savings induced by faster air delivery and the associated detours and additional waiting times at launchpads and kiosks. We discover that this trade-off is heavily dependent on the distance of the order. For long-distance trips, the detours are minor compared to the substantial time savings from faster drone delivery. In contrast, for short-distance trips, the detours and additional waiting times can significantly extend the delivery time compared to ground delivery, outweighing the time saved by the drones. Consequently, as the platform initially activates air delivery routes for long distances and later expands them to shorter routes, the average delivery time first decreases and then increases. Next, we investigate the effects of growing demand on the service expansion. Similar results were observed, and we have carefully analyzed them.

This study presents several opportunities for expansion. Firstly, while this work considers exogenous demand and a sufficiently large pool of courier supply, exploring optimal pricing strategies and payment strategies under elastic supply and demand could be an interesting direction. Secondly, enhancing the mathematical model to include advanced matching mechanisms and flexible bundling sizes could unlock additional benefits for the drone-assisted delivery services.

\begin{appendices}
\newpage

\section{Notations}\label{append:notation}
\begin{center}
	\begin{longtable}{p{2.4cm}p{14cm}}
		\hline\noalign{\smallskip}
        \hline
		Notation & Definition \\
    \hline
    \multirow{2}*{\bf{Set}} & \\
    ~ &  \\
    $\mathcal{N}$ & The set of nodes\\
    $\mathcal{R}$ & The set of restaurant locations \\
    $\mathcal{L}$ & The set of potential launchpad locations \\
    $\mathcal{K}$ & The set of potential kiosk locations \\
    $\mathcal{C}$ & The set of customer locations\\
    $\mathcal{Z}$ & The set of zones \\
    $\mathcal{D}$ & The set of OD pairs for all ground flow \\
    \multirow{2}*{\bf{Parameters}} & \\
    ~ &  \\
    $\lambda_{ij}$ & Customer demand from restaurant $i$ to customer location $j$ \\
    $\gamma_r$& Reliability level at launchpads\\
    $t_i^c$ & Average time for a courier to pick up the second/third order \\
    $t_{ij}$ & Average travel time from node $i$ to node $j$ \\
    $t_{Z_1Z_2}^r$ & Average repositioning time for idle couriers from zone $Z_1$ to zone $Z_2$ \\
    $t_{lk}^a$ & Average time for a drone to fly from location $l$ to location $k$ \\
    $\bar t_d$/$\bar t_r$ & Maximum allowable flight time for drone delivery and relocation \\
    $C_L^l$/$C_K^k$ & Average construction costs for a launchpad/kiosk per unit of time\\
    $C^a$ & Average investment and operational costs of a drone per unit of time\\
    $q$ & Average wage paid to a courier per unit of time \\
    $\alpha_w$ & Trade-off parameter between platform costs and customer delivery time \\
    $H$ & Scaling parameter on pickup time \\
    $M$ & Capacity of waiting orders at each launchpad \\
    $M_L^l$, $M_K^k$, $M_{\gamma}$, $M_{nn}$ & Sufficiently large numbers\\
    $\bm{W}_{ij}^m$ & The weights of the neural network\\
    $\bm{b}_{ij}^m$ & The bias of the neural network \\
    \multirow{2}*{\bf{Variables}} & \\
    ~ &  \\
    $\lambda_{ilkj}^a$ & Realized arrival rate of air delivery orders with origin $i$, destination $j$, launchpad $l$ and kiosk $k$ \\
    $\hat \lambda_{ilkj}^a$ & Designed arrival rate of air delivery orders with origin $i$, destination $j$, launchpad $l$ and kiosk $k$ \\
    $\lambda_{ij}^g$ & Arrival rate of orders with origin $i$ and destination $j$ assigned to ground delivery services \\
    $y_l$/$z_k$ & Binary variable indicating whether a launchpad/kiosk is built at location $l$/$k$ \\
    $\lambda_{ij}^{g,all}$ & Overall ground flow for both ground delivery services and air delivery services \\
    $\tilde \lambda_{kl}^a$ & Repositioning flow of drones from kiosk $k$ to launchpad $l$ \\
    $\nu_l^o$/$\nu_l^d$ & Arrival rate of orders/drones at launchpad $l$\\
    $\bar N_l^o$/$\bar N_l^d$ & Average number of drones/orders waiting at launchpad $l$ \\
    $\gamma_l$ & Reliability level at launchpad $l$ \\
    $\omega_{lr}$ & Index of reliability level of launchpad $l$\\
    $w_l^a$ & Average waiting time for a drone at launchpad $l$ \\
    $N_Z^I$ & Number of idle drivers in zone $Z$ \\
    $p_{t_{(i,j,m)}^n}$ & The probability of a taker in state 
    $t_{(i,j,n)}^m$ to be matched with a seeker\\
    $p_{s_{(i,j)}}$ & The probability for a seeker $s_{(i,j)}$ to be matched with a taker \\
    $\rho_{t_{(i,j,n)}^m}$ & The probability of having a taker in state $t_{(i,j,m)}^n$\\
    $\tau_{t_{(i,j,n)}^m}$ & Average time for a taker in state $t_{(i,j,n)}^m$ available for matching \\
    $\lambda_{t_{(i,j,n)}^m}^{um}$ & Arrival rate of unmatched takers in state $t_{(i,j,n)}^m$\\
    $t_{s_{(i,j)}}^{taker}$ & Expected delivery time for a seeker $s_{(i,j)}$ getting matched with a taker\\
    $t_{s_{(i,j)}}^{idle}$ & Expected delivery time for a seeker $s_{(i,j)}$ getting matched with an idle courier\\
    $w_{ij}^g$ & Expected delivery time for ground delivery services\\
    $w_{ilkj}^a$ & Expected delivery time for air delivery services \\
    $t_{ij}^{s1}$/$t_{ij}^{s2}$ & Average time for an order delivered from node $i$ to node $j$ shared with one/two additional order(s) \\
    $q_{Z_1Z_2}^{I,r}$ & Flow of idle drivers repositioning from zone $Z_1$ to zone $Z_2$ \\
    $N^r$ & Number of repositioning couriers \\
    $N^c$ & Number of occupied couriers \\
    $N^a$ & Number of drones \\
    $\Theta_{ij}^1$/$\Theta_{ij}^2$ & Predicted values of the neural network for OD pair $(i,j)$ \\
    $h_{ij}$ & the prediction values of each neuron in the neural network \\
    $\check h_{ij}$, $\kappa_{ij}$ & Linearization parameters of the neural network formulation \\
    \hline\hline
    \caption{Summary of Notations}\\
        \label{tab:notation}\\
	\end{longtable}
\end{center}

\end{appendices}

\bibliographystyle{agsm}
\bibliography{references}

\end{document}